\documentclass[acmsmall,nonacm,screen]{acmart}

\usepackage{amsmath,amsthm,amsfonts,bbold,mathtools}
\usepackage{braket,xcolor}
\usepackage{tikz-cd}
\usepackage{stmaryrd}
\usepackage[nameinlink,capitalize]{cleveref}

\AtEndPreamble{%
  \theoremstyle{acmdefinition}
  \newtheorem{question}{Question}
}

\crefname{theorem}{Theorem}{Theorems}
\Crefname{theorem}{Theorem}{Theorems}
\crefname{lemma}{Lemma}{Lemmas}
\Crefname{lemma}{Lemma}{Lemmas}
\crefname{corollary}{Corollary}{Corollaries}
\Crefname{corollary}{Corollary}{Corollaries}
\crefname{proposition}{Proposition}{Propositions}
\Crefname{proposition}{Proposition}{Propositions}
\crefname{definition}{Definition}{Definitions}
\Crefname{definition}{Definition}{Definitions}
\crefname{question}{Question}{Questions}
\Crefname{question}{Question}{Questions}
\crefname{example}{Example}{Examples}
\Crefname{example}{Example}{Examples}
\crefname{appendix}{Appendix}{Appendices}
\Crefname{appendix}{Appendix}{Appendices}

\AddToHook{env/theorem/begin}{\crefalias{section}{theorem}}
\AddToHook{env/lemma/begin}{\crefalias{theorem}{lemma}}
\AddToHook{env/corollary/begin}{\crefalias{theorem}{corollary}}
\AddToHook{env/proposition/begin}{\crefalias{theorem}{proposition}}
\AddToHook{env/definition/begin}{\crefalias{theorem}{definition}}
\AddToHook{env/example/begin}{\crefalias{theorem}{example}}

\newcommand{\midv}{\,\middle\vert\,}

\newcommand{\Z}{\mathbb Z}
\newcommand{\Co}{\mathbb C}

\newcommand{\bbone}{\mathbb 1}
\newcommand{\Id}{\bbone}

\newcommand{\tr}{\operatorname{tr}}
\newcommand{\Span}{\operatorname{span}}

\DeclarePairedDelimiter\parens{\lparen}{\rparen}
\DeclarePairedDelimiter\abs{\lvert}{\rvert}
\DeclarePairedDelimiter\norm{\lVert}{\rVert}
\DeclarePairedDelimiter\floor{\lfloor}{\rfloor}

\DeclarePairedDelimiter\braces{\lbrace}{\rbrace}

\newcommand{\calA}{\mathcal{A}}
\newcommand{\calB}{\mathcal{B}}
\newcommand{\calC}{\mathcal{C}}
\newcommand{\calD}{\mathcal{D}}
\newcommand{\calH}{\mathcal{H}}
\newcommand{\calK}{\mathcal{K}}

\newcommand{\calO}{\mathcal{O}}
\newcommand{\calU}{\mathcal{U}}
\newcommand{\calV}{\mathcal{V}}

\setcopyright{none}

\begin{document}
 
\title{Quantisation of Abstract Data Types} 

\author{Mingsheng Ying}
 \affiliation{
  \department{Centre for Quantum Software and Information}          
  \institution{University of Technology Sydney}          
    \city{Sydney}
  \country{Australia}
}
\email{Mingsheng.Ying@uts.edu.au}       

\author{Zhicheng Zhang}
 \affiliation{
  \department{School of Information and Communication Technology}          
  \institution{Griffith University}          
    \city{Brisbane}
  \country{Australia}
}
\email{iszczhang@gmail.com}

\author{Kean Chen}
 \affiliation{
  \department{Department of Computer and Information Science}          
  \institution{University of Pennsylvania}          
    \city{Philadelphia}
  \country{USA}
}
\email{keanchen.gan@gmail.com}

\begin{abstract} 
    In this paper, we introduce a notion of abstract quantum data type within the framework of universal algebra. This notion provides an algebraic foundation for describing data abstraction in quantum programming. We formally define a quantisation of classical data types and show that their equational specifications can be soundly lifted to the quantum setting.
    Two standard quantisation methods for classical functions, namely the bit oracle and the phase oracle, arise as special cases of this general construction.
    We illustrate the framework with applications to quantum arrays and quantum error-correcting codes, showing how they can be understood through the lens of data-type quantisation.
    We further establish conditions under which quantisation preserves structural relationships and constructions of classical data types, including embeddings, isomorphisms, and products.
\end{abstract}

\begin{CCSXML}
<ccs2012>
   <concept>
       <concept_id>10011007.10011006.10011008.10011024.10003202</concept_id>
       <concept_desc>Software and its engineering~Abstract data types</concept_desc>
       <concept_significance>500</concept_significance>
       </concept>
   <concept>
       <concept_id>10003752.10003753.10003758</concept_id>
       <concept_desc>Theory of computation~Quantum computation theory</concept_desc>
       <concept_significance>500</concept_significance>
       </concept>
 </ccs2012>
\end{CCSXML}

\ccsdesc[500]{Software and its engineering~Abstract data types}
\ccsdesc[500]{Theory of computation~Quantum computation theory}

\keywords{Quantum programming, abstract data types, universal algebra, quantisation, inheritance.}    

\maketitle
 
\section{Introduction}

\begin{align*}\qquad\qquad\qquad\qquad\qquad\qquad &\mathit{Algorithms\ +\ Data\ Structures = Programs}\\  &\qquad\qquad \qquad\qquad \qquad\qquad  \mbox{--- Niklaus Wirth \cite{Wirth}}
\end{align*}

Quantum programming research has been conducted for more than 30 years, beginning with Knill’s proposal of a set of conventions for writing quantum pseudocode \cite{Knill}, and it has become increasingly active in recent years \cite{Heim, Ying24}. The majority of this research has focused on the algorithmic aspect of quantum programs. However, programming methodology must also study the structures, representations, and operations of data involved in large and complex programs \cite{Wirth}. To date, the data aspect of quantum programming has been less investigated. 

\subsection{Data types and data structures in quantum programming}

Several basic quantum data types that abstract qubits and quantum registers are widely used in practical quantum programming languages, e.g. Qiskit \cite{qiskit}, Q\# \cite{sharp}, Cirq \cite{Cirq}, Quipper \cite{Quipper}, Guppy \cite{Guppy}, Silq \cite{Silq}, Qrisp \cite{Qrisp} and isQ \cite{isQ}. Some more advanced quantum data types have been defined in theoretical research on quantum programming; for example, inductive quantum data types were introduced by P\'{e}choux, Perdrix, Rennela and Zamdzhiev \cite{Pech}. 
On the other hand, from the viewpoint of developer-friendly programming, a quantum programming language, Rhyme, with richer data types—including quantum extensions of floats, characters, arrays, and strings—was proposed by Varga, Aragon\'{e}s-Soria and Oriol \cite{Var}. 

Various quantum data structures have been introduced for the design of quantum algorithms (see e.g. \cite{Amb03, DS1, DS2, DS3, DS4}). Yuan and Carbin's Tower \cite{Yuan} provides the first programming language support for data structures in quantum superposition, including using random-access memory and recursive definitions of operations. A library of quantisations of data structures \textit{lists}, \textit{stacks}, \textit{queues}, \textit{strings}, and \textit{sets} is programmed in Tower.

\subsection{Data abstraction: from classical to quantum programming} 

Data abstraction is a foundational concept in classical programming that allows a programmer to focus on what an object does rather than how it does it. It is the process of hiding complex implementation details while exposing only the essential features of an object or data type. Thus, it allows developers to work with abstract data types (e.g. integers, arrays, or linked lists) without worrying about memory layout or hardware details \cite{Liskov, Liskov87}. 

We believe that data abstraction will play an even more critical role in quantum programming \cite{AbsArt, AbsArt1} for at least the following two reasons:\begin{enumerate}\item Concepts such as superposition, entanglement, and measurement impose physical constraints that quantum data must respect but programmers may overlook or ignore.
Therefore, data abstraction is essential for writing correct and scalable quantum software.
\item Hardware independence is essential in quantum programming because quantum hardware is evolving rapidly. High-level abstractions allow quantum algorithms to be expressed without being tied to a specific quantum device or gate set. Compilers then map these abstract representations onto real hardware, optimising for noise, connectivity, and error rates.\end{enumerate}

Building upon previous work on quantum data structures \cite{Yuan} and quantum data types \cite{Var}, we then naturally ask the following:

\begin{question}\label{Q1} How can we realise data abstraction in quantum programming?\end{question}

\subsection{Abstract data types} 

The introduction of abstract data types (ADTs) is a key way of realising data abstraction \cite{Gut}. An ADT specifies a data type by focusing on what it does rather than how it is implemented, and thus the internal representation is hidden from the user. 
ADTs are defined from a user's viewpoint. In contrast, data structures are defined from an implementer's viewpoint as concrete representations of data. ADTs encourage programmers to think at a higher level of abstraction. The main advantages of ADTs include:
\begin{itemize}\item By separating the interface from the implementation, programmers can change how an ADT is implemented without affecting the rest of the program. For example, a stack ADT is defined by operations such as push, pop, and peek, regardless of whether it is implemented using an array or a linked list. This makes code easier to maintain, understand, and reuse. 
\item ADTs support modularity so that users interact only through well-defined operations. This can significantly improve software reliability because errors caused by directly manipulating data are reduced. 
\end{itemize}

Turning to the realm of quantum programming, \Cref{Q1} should be concretised to the following: 
  
\begin{question}\label{Q2} How can we define \textit{abstract quantum data types}? Is there a principled way to lift classical data types to the quantum setting---what one might call the \textit{quantisation of data types}?\end{question}

\subsection{Contributions of this paper}

The aim of this paper is to introduce a method for the quantisation of data types that supports data abstraction in quantum programming. The main difference between our work and previous studies \cite{Yuan, Var} is that our approach operates at the level of abstract data types, whereas \cite{Yuan, Var} focus on less abstract levels, such as data structures and concrete data types. The technical contributions of this paper are threefold:
\begin{enumerate}
\item We formally define the notion of an abstract quantum data type within a universal-algebraic framework.
\item We introduce a quantisation of classical ADTs and prove that their equational specifications can be soundly lifted to the quantum setting. This notion includes generalisations of two basic quantisation methods for classical functions as important special cases.
\item We identify conditions ensuring that quantisation is compatible with structural relationships and constructions of classical data types, including embeddings, isomorphisms, and products.
\end{enumerate}
We complement these results with illustrative applications to quantum arrays and quantum error correction.

We hope that the introduction of abstract quantum data types will help programmers design clean, flexible, and robust software systems for quantum computers. In particular, \textit{inheritance} is a fundamental feature of object-oriented programming \cite{Cardelli, Goguen}, playing a crucial role in managing software complexity through mechanisms such as code reuse and hierarchical classification. The embedding result in Contribution (3) shows that certain algebraic inheritance relationships between classical data types can, under suitable conditions, be preserved under quantisation and transferred to their quantum counterparts.

\subsection{Organisation of the paper}

For the convenience of the reader, the basic concepts of ADTs are reviewed in \Cref{sec-CDT}. The notion of a quantum data type is introduced in \Cref{sec-QDT}, where several simple examples are also given to illustrate how quantum ADTs can be used in quantum programming. 
\Cref{sec-quantisation} develops a quantisation of classical ADTs and shows how equational specifications can be lifted to the quantum setting.
It also presents two standard quantisation methods as special cases and explains their relationship through phase kickback.
The framework is then illustrated through quantum arrays in \Cref{sec-app} and quantum error correction in \Cref{sec-qec}.
\Cref{sec-emb,sec-prod} investigate the structural properties of quantisation through embeddings and products, respectively. The paper concludes with a summary of the results and a brief discussion about topics for further research. The proofs of the main theorems and propositions are deferred to \Cref{appendix}.

\section{ADTs in Classical Computing}\label{sec-CDT}

In this section, we briefly review the basics of ADTs, following the textbook \cite{EM}. For more details, the reader can consult related chapters of \cite{EM}. 

A data type is a collection of data domains, designated basic data items, and operations on these domains. However, a programmer may not be interested in the concrete representation of a data type but only in its properties at an abstract level.  
Thus, the notion of ADT is introduced to specify the kinds of data that can be stored, the operations that can be performed on that data, and the rules those operations satisfy, independently of their concrete representation.

\subsection{Basics of Universal Algebra}

ADTs can be properly formalised in the mathematical language of universal algebra. A signature specifies the interface of an ADT. Formally, we have: 
\begin{definition}[Signatures]\label{def-sig} A signature is a pair $\mathit{Sig}=(S,\mathit{OP})$, where: \begin{enumerate}\item $S$ is a set of sorts;
\item $\mathit{OP}$ is a set of operation symbols. Each symbol $O\in\mathit{OP}$ is assigned input sorts $\mathit{in}(O)\in S^\ast$ (the set of strings of elements in $S$, including the empty string $\epsilon$) and an output sort $\mathit{out}(O)\in S$. 
\end{enumerate}
\end{definition}

We often write $O:s_1,\ldots,s_n\rightarrow s$ for $\mathit{in}(O)=s_1,\ldots,s_n$ and $\mathit{out}(O)=s$. In particular, a constant symbol is an operation symbol $c:\rightarrow s$ where the input sort $\mathit{in}(c)=\epsilon$ is empty. 

Mathematically, a data type can then be modelled as an algebra as defined as follows: 

\begin{definition}[Algebras]\label{def-alg} 
    An algebra $A$ of a signature $\mathit{Sig}=(S,\mathit{OP})$, or a $\mathit{Sig}$-algebra for short, consists of:
    \begin{enumerate}
        \item for each $s\in S$, $A_s$ is a nonempty set, called the domain of sort $s$; 
        \item for each $O\in\mathit{OP}$ with $O:s_1,\ldots,s_n\rightarrow s$, $O^A$ is a mapping $A_{s_1}\times\cdots\times A_{s_n}\rightarrow A_s$, called an operation. In particular, for each constant symbol $c:\rightarrow s$, $c^A$ is an element of $A_s$.  
    \end{enumerate}
\end{definition}

Throughout this paper, we only consider finite domains.
The rules that the operations of an ADT must follow are usually specified as a family of equations. To define them, for each $s\in S$, we assume a set $X_s$ of variables of sort $s$. The variable sets $X_s$ are required to be pairwise disjoint and also disjoint from the operation symbols $\mathit{OP}$. Then we can introduce the following definition.

\begin{definition}[Terms]\label{def-term} The terms are recursively defined as follows:\begin{enumerate}\item every constant symbol $c:\rightarrow s$ and every variable $x\in X_s$ are terms of sort $s$;
\item if $t_1,\ldots,t_n$ are terms of sorts $s_1,\ldots,s_n$, respectively, and $O:s_1,\ldots,s_n\rightarrow s\in\mathit{OP}$, then $O(t_1,\ldots,t_n)$ is a term of sort $s$.
\end{enumerate}
\end{definition}

A term without variables is called a ground term. The semantic interpretation of terms in a given algebra is presented through the notion of evaluation defined in the following: 

\begin{definition}[Evaluation] Let $A$ be an $\mathit{Sig}$-algebra and $\sigma$ an assignment in $A$ that maps each variable $x\in X_s$ to an element $\sigma(x)\in A_s$ for every sort $s$. Then the evaluation $\mathit{eval}_\sigma(t)$ of a term $t$ in assignment $\sigma$ is recursively defined as follows: \begin{enumerate}\item if $t$ is a variable $x$, then $\mathit{eval}_\sigma(t)=\sigma(x)$, and if $t$ is a constant $c$, then $\mathit{eval}_\sigma(t)=c^A$;
\item if $t=O(t_1,\ldots,t_n)$, then $\mathit{eval}_\sigma(t)=O^A(\mathit{eval}_\sigma(t_1),\ldots,\mathit{eval}_\sigma(t_n))$.
\end{enumerate}
\end{definition}

As noted above, a large class of the rules for the operations in an ADT can be described by equations between terms. The notion of equations can be formally defined in the following: 

\begin{definition}[Equations and Validity] 
\quad
\begin{enumerate}\item Let $t_1$ and $t_2$ be two terms of the same sort $s$. If all variables occurring in $t_1$ and $t_2$ are among $x_1,\ldots,x_n$, then \begin{equation}\label{eq-0}e= (\forall x_1:s_1, \ldots, x_n:s_n) (t_1=t_2)\end{equation} is called an equation, where $x_i$ is a variable of sort $s_i$ for every $i=1,\ldots,n$.  
\item The equation $e$ is valid in an algebra $A$, or $A$ satisfies $e$, written $A\models e$, if for any assignment $\sigma$ in $A$, we have $\mathit{eval}_\sigma(t_1)=\mathit{eval}_\sigma(t_2)$. \end{enumerate}
\end{definition}

It is easy to see that the validity of equation $e$ only depends on the values $\sigma(x_i)$ $(i=1,\ldots,n)$ of assignments $\sigma$. In the specifications of ADTs in practical applications, universal quantifiers $(\forall x_1:s_1,\ldots,x_n:s_n)$ in \Cref{eq-0} are often omitted (see \Cref{exam-adt}).  

If both $t_1$ and $t_2$ are ground terms, then $e$ is called a ground equation. The validity of a ground equation $e$ in an algebra does not depend on any assignment $\sigma$ in $A$. 

\subsection{Specification of ADTs}

Now we can use the algebraic framework defined in the previous subsection to specify ADTs. 

\begin{definition}[Specifications]\label{def-spec}
    \quad
    \begin{enumerate}\item A specification is a triple $\mathit{Spec}=(S,\mathit{OP},E)$, where $\mathit{Sig}=(S,\mathit{OP})$ is a signature, and $E$ is a set of equations over the signature $\mathit{Sig}$.
    \item An algebra of the specification $\mathit{Spec}$, or $\mathit{Spec}$-algebra for short, is an algebra $A$ of signature $\mathit{Sig}$ such that $A\models e$ for every $e\in E$.\end{enumerate}
\end{definition}

Besides the equational specifications defined above, there are more sophisticated classes of ADT specifications, such as Horn specifications.
In this paper, we mainly consider equational specifications. 

For a better understanding of the notions introduced above, let us see the following simple example: 

\begin{example}[Abstract data type of strings]\label{exam-adt} A string is a sequence of basic data items from a given domain, called the alphabet. Let $c_1,\ldots,c_n$ be constant symbols denoting the basic data items. The most useful operations of strings in practical applications are:\begin{itemize}\item $\mathit{make}$: it converts each basic data item into a string of length $1$;
\item $\mathit{concat}$: it concatenates two strings;
\item $\mathit{ladd}$, $\mathit{radd}$: they add a basic data item to the left and right ends of a string, respectively.  
\end{itemize} 
A specification of strings is given as follows:\begin{align*}\texttt{STRING} =\ &\textbf{sorts}: \texttt{alphabet}, \texttt{string}\\
&\textbf{operations}: c_1,\ldots,c_n:\rightarrow \texttt{alphabet}\\ &\qquad\ \ \ \mathit{empty}:\rightarrow\texttt{string}\\ 
&\qquad\ \ \ \mathit{make}:\texttt{alphabet}\rightarrow\texttt{string}\\ &\qquad\ \ \ \mathit{concat}:\texttt{string}, \texttt{string}\rightarrow\texttt{string}\\ &\qquad\ \ \ \mathit{ladd}: \texttt{alphabet},\texttt{string}\rightarrow\texttt{string}\\ 
&\qquad\ \ \ \mathit{radd}: \texttt{string},\texttt{alphabet}\rightarrow\texttt{string}\\
&\textbf{equations}: a:\texttt{alphabet}, s,s_1,s_2,s_3:\texttt{string}\\ &\qquad\ \ \ \mathit{concat}(s,\mathit{empty})=s\\ &\qquad\ \ \ \mathit{concat}(\mathit{empty},s)=s\\ &\qquad\ \ \ \mathit{concat}(\mathit{concat}(s_1,s_2),s_3)=\mathit{concat}(s_1,\mathit{concat}(s_2,s_3))\\ &\qquad\ \ \ \mathit{ladd}(a,s)=\mathit{concat}(\mathit{make}(a),s)\\ &\qquad \ \ \ \mathit{radd}(s,a)=\mathit{concat}(s,\mathit{make}(a))
\end{align*}

We can define a $\texttt{STRING}$-algebra $A$ as follows:
\begin{itemize}
    \item
        Let $A_{\texttt{alphabet}}=\braces*{a_1,\ldots,a_n}$ be a domain for the alphabet.
        Let $A_{\texttt{string}}=(A_{\texttt{alphabet}})^*$ be the set of strings of elements in $A_{\texttt{alphabet}}$, including the empty string $\epsilon$.
    \item
        For constant symbols, define $c_i^A=a_i$ for $i=1,\ldots,n$ and $\mathit{empty}^A=\epsilon$.
    \item
        For nonconstant operation symbols, define
        \begin{enumerate}
            \item 
            $\mathit{make}^A:A_{\texttt{alphabet}}\rightarrow A_{\texttt{string}}$, $\mathit{make}^A(a)=a$ for all $a\in A_{\texttt{alphabet}}$; 
            \item $\mathit{concat}^A:A_{\texttt{string}}\times A_{\texttt{string}}\rightarrow A_{\texttt{string}}$, $\mathit{concat}^A(u,v)=uv$ for all $u,v\in A_{\texttt{string}}$; 
        \item $\mathit{ladd}^A:A_{\texttt{alphabet}}\times A_{\texttt{string}}\rightarrow A_{\texttt{string}}$, $\mathit{ladd}^A(a,u)=au$ for all $a\in A_{\texttt{alphabet}}$ and $u\in A_{\texttt{string}}$; \item $\mathit{radd}^A:A_{\texttt{string}}\times A_{\texttt{alphabet}}\rightarrow A_{\texttt{string}}$, $\mathit{radd}^A(u,a)=ua$ for all $a\in A_{\texttt{alphabet}}$ and $u\in A_{\texttt{string}}$. 
        \end{enumerate} 
\end{itemize}
It is easy to check that the above operations satisfy the equations in $\texttt{STRING}$, and hence $A$ is a $\texttt{STRING}$-algebra.
\end{example}

\section{Quantum Data Types}\label{sec-QDT}

From now on, we generalise the algebraic specification techniques for classical ADTs into quantum computing. In this section, we formally define quantum data types in the universal-algebraic framework, analogously to their classical counterparts in \Cref{sec-CDT}, and illustrate it with some simple examples. 

\subsection{Basic Definitions}

Let us first introduce the notion of a quantum signature as the quantum counterpart of \Cref{def-sig}: 

\begin{definition}[Quantum Signatures] A quantum signature is a triple $\mathit{QSig}=(S_q,\calK, \calU)$, where: \begin{enumerate}\item $S_q$ is a set of quantum sorts;
\item $\calK$ is a set of quantum state symbols. Each symbol $\ket{\psi} \in \calK$ is assigned sort $\mathit{sort}(\ket{\psi})\in S_q^+$ (the set of nonempty strings of elements in $S_q$);
\item $\calU$ is a set of quantum gate symbols. Each symbol $U \in \calU$ is assigned sort $\mathit{sort}(U)\in S_q^+$;
\end{enumerate}\end{definition}

Roughly speaking, $\calK$ and $\calU$ correspond to the constant symbols and operation symbols in \Cref{def-sig}, respectively. But there are some subtle differences between them and their classical counterparts. For a classical constant symbol $c:\rightarrow s$, its input sort is empty and its output sort is a single sort $s\in S$. However, for a quantum state symbol $\ket{\psi}$, it is possible that $\mathit{sort}(\ket{\psi})=s_1,\ldots,s_n\in S_q^+$ with $n>1$. The reason is that $\ket{\psi}$ may denote an entangled state between $n$ subsystems, which cannot be decomposed into a string of the states of individual subsystems. We often write $\ket{\psi}: s_1,\ldots,s_n$ for $\mathit{sort}(\ket{\psi})=s_1,\ldots,s_n$.
Moreover, for a classical operation symbol $O:s_1,\ldots,s_n\rightarrow s$, the input sorts $s_1,\ldots,s_n$ and the output sort $s$ need not coincide. In contrast, the input and output sorts of a quantum gate symbol $U$ must be the same because it denotes a unitary operator. We write $U:s_1,\ldots,s_n$ for $\mathit{sort}(U)=s_1,\ldots,s_n$. Indeed, $\mathit{sort}(U)$ is both the input sort and output sort of $U$.

As the quantum counterparts of  algebras described in \Cref{def-alg}, quantum algebras are then defined by taking the domains of sorts to be Hilbert spaces and interpreting quantum state symbols and quantum gate symbols as pure quantum states and unitary operators, respectively. 
 
\begin{definition}[Quantum Algebras] A quantum algebra $\calA$ of signature $\mathit{QSig}=(S_q,\calK,\calU)$ consists of: \begin{enumerate}\item for each $s\in S_q$, $\calH^\calA_s$ is a Hilbert space, called the quantum domain of sort $s$ (the superscript $\calA$ of $\calH^\calA_s$ is often omitted for simplicity); 
\item for each quantum state symbol $\ket{\psi}\in\calK$ with $\mathit{sort}(\ket{\psi})=s_1,\ldots,s_n$, $\ket{\psi}^\calA$ is a unit vector (i.e. a pure quantum state) in $\calH_{s_1}\otimes\cdots\otimes\calH_{s_n}$; and
\item for each quantum gate symbol $U\in\calU$ with $\mathit{sort}(U)=s_1,\ldots,s_n$, $U^\calA$ is a unitary operator (i.e. a quantum gate) on $\calH_{s_1}\otimes\cdots\otimes\calH_{s_n}$.\end{enumerate}
\end{definition}

In this paper, we only consider quantum algebras where state symbols are interpreted as pure states and gate symbols as unitary operators. Of course, Definition 3.2 can be generalized such that state symbols are interpreted as mixed states (i.e. density operators) and gate symbols as noisy quantum gates (i.e. quantum channels) when needed in applications. 

Given a quantum signature $\mathit{QSig}$, we assume a set $Q$ of quantum variables. Each $q\in Q$ is assigned a sort $\mathit{sort}(q)\in S_q$. For a string $\overline{q}=q_1,\ldots,q_n$ of quantum variables,
we define $\mathit{sort}(\overline{q})=\overline{s}=s_1,\ldots,s_n$, where $s_i=\mathit{sort}(q_i)$.
When the context is clear, we also use $\overline{q}$ to denote the set $\braces{q_1,\ldots,q_n}$.
Then the notion of term is generalised to the quantum case as follows: 
\begin{definition}[Quantum Circuit Formulas and Terms]  
    \quad
    \begin{enumerate}
        \item Quantum circuit formulas $C$ and their quantum variables $\mathit{qv}(C)$ are inductively defined as follows:\begin{enumerate}
\item if $U:s_1,\ldots,s_n$ is a quantum gate symbol and $\overline{q}=q_1,\ldots,q_n$ is a string of distinct quantum variables with $\mathit{sort}(q_i)=s_i$ for $i=1,\ldots,n$, then $C=U[\overline{q}]$ is a quantum circuit formula, and $\mathit{qv}(C)=\overline{q}$; \item if $C_1$ and $C_2$ are quantum circuit formulas, then $C =C_1;C_2$ is a quantum circuit formula, and $\mathit{qv}(C)=\mathit{qv}(C_1)\cup \mathit{qv}(C_2)$.
\end{enumerate}
\item Suppose $C$ is a quantum circuit formula. For $i=1,\ldots,n$, let $\ket{\psi_i}:\overline{s_i}$ be a quantum state symbol and $\overline{q_i}$ be a string of distinct quantum variables with $\mathit{sort}(\overline{q_i})=\overline{s_i}$ and $\overline{q_i}\cap \overline{q_j}=\emptyset$ for $i\neq j$.
Then $\tau=C \ket{\psi_1}_{\overline{q_1}}\ldots\ket{\psi_n}_{\overline{q_n}}$ is a quantum term, and $\mathit{qv}(\tau)=\overline{q_1}\cup \ldots \cup\overline{q_n}\cup\mathit{qv}(C)$.
When the context is clear, $\ket{\psi}_{\overline{q}}$ may be used to denote the string $\ket{\psi_1}_{\overline{q_1}}\ldots\ket{\psi_n}_{\overline{q_n}}$ of quantum state symbols, where $\overline{q}=\overline{q_1},\ldots,\overline{q_n}$. Define the set of free variables in quantum term $\tau$ to be $\mathit{fv}(\tau)=\mathit{qv}(\tau)\setminus \overline{q}$.
\end{enumerate}
\end{definition}

The definition of quantum terms above differs slightly from its classical counterpart in \Cref{def-term}. Obviously, a quantum circuit formula $C$ stands for a (combinational) quantum circuit constructed from quantum gates denoted by elements of $\calU$ applied to quantum variables in $Q$. Furthermore, a quantum term $C\ket{\psi_{1}}_{\overline{q_1}}\ldots \ket{\psi_n}_{\overline{q_n}}$ stands for a quantum circuit formula $C$ with quantum variables $\overline{q}=\overline{q_1},\ldots,\overline{q_n}$ initialised in the state $\ket{\psi_1}\ldots\ket{\psi_n}$, where each $\ket{\psi_i}$ can be entangled. Conversely, a quantum circuit formula $C$ can be seen as a quantum term with $\overline{q}=\emptyset$. A string of quantum state symbols $\ket{\psi}_{\overline{q}}=\ket{\psi_{1}}_{\overline{q_1}}\ldots\ket{\psi_n}_{\overline{q_n}}$ can also be seen as a quantum term with empty $C$.

Given an algebra $\calA$ of a quantum signature $\mathit{QSig}$, for any subset $Q^\prime\subseteq Q$ of quantum variables, we write: $$\calH_{Q^\prime}=\bigotimes_{q\in Q^\prime}\calH_{\mathit{sort}(q)}$$ for the Hilbert space of the composite system $Q^\prime$. Then an assignment of quantum variables in $Q^\prime$ can be understood as a quantum state $\ket{\varphi}\in\calH_{Q^\prime}$, although there is a basic difference between it and an assignment of classical variables; that is, the values of two classical variables are independent, but $\ket{\varphi}$ could contain entanglement. Furthermore, the notion of evaluation can be generalised straightforwardly to the quantum case.  
As usual, for a unitary operator $U$, we implicitly identify it with its extension $U\otimes \Id$ by tensoring with an identity operator $\Id$.

\begin{definition}[Evaluation]  
    \quad
    \begin{enumerate}\item Let $C$ be a quantum circuit formula. Then its evaluation $\mathit{eval}(C)$ in algebra $\calA$ is a unitary operator on $\calH_{\mathit{qv}(C)}$ recursively defined as follows:\begin{enumerate}\item if $C=U[\overline{q}]$ for some $U\in \calU$, then $\mathit{eval}(C)=U^\calA_{\overline{q}}$ (unitary operator $U^\mathcal{A}$ acting on the quantum system denoted by $\overline{q}$); \item if $C=C_1;C_2$, then $\mathit{eval}(C_1;C_2)=\mathit{eval}(C_2)\circ \mathit{eval}(C_1)$ is a composition of unitary operators. 
\end{enumerate}\item Let $\tau=C\ket{\psi_1}_{\overline{q_1}}\ldots \ket{\psi_n}_{\overline{q_n}}$ be a quantum term. Suppose $\overline{q}=\overline{q_1},\ldots,\overline{q_n}$ and $\overline{r}$ is a string of quantum variables such that $\overline{r}\supseteq \mathit{fv}(\tau)$ and $\overline{r}\cap \overline{q}=\emptyset$. Let $\ket{\varphi}$ be a quantum state in $\calH_{\overline{r}}$. Then the evaluation of $\tau$ in state $\ket{\varphi}$ is defined as the quantum state: 
\begin{equation}\label{eval-state}\mathit{eval}_{\ket{\varphi}}(\tau)=\mathit{eval}(C)\parens*{\ket{\psi_1}_{\overline{q_1}}^{\calA}\otimes \ldots\otimes \ket{\psi_n}_{\overline{q_n}}^\calA\otimes\ket{\varphi}_{\overline{r}}}.
\end{equation} 
Here, $\mathit{eval}(C)=\Id$ if $C$ is empty.
\end{enumerate}\end{definition}

The notion of equation can be smoothly generalised to the quantum setting too: 

\begin{definition}[Equations]\label{def-qeq} Let $\tau_1=C_1\ket{\psi_1}_{\overline{q_1}}$ and $\tau_2=C_2\ket{\psi_2}_{\overline{q_2}}$ be two quantum terms that satisfy $\overline{q_1}\cap \mathit{fv}(\tau_2)=\overline{q_2}\cap \mathit{fv}(\tau_1)=\emptyset$. Suppose $\overline{r}=\mathit{fv}(\tau_1)\cup \mathit{fv}(\tau_2)$ and $\mathit{sort}(\overline{r})=\overline{s}$. Then, for $\overline{p}\subseteq\mathit{qv}(\tau_1)\cap \mathit{qv}(\tau_2)$,
\begin{equation}\label{equation}e=(\forall \overline{r}:\overline{s})(\tau_1 = \tau_2)\Downarrow\overline{p}\end{equation} is called an equation, and $\overline{p}$ are called the principal variables of $e$. In particular, if $\overline{p}=\mathit{qv}(\tau_1)\cap\mathit{qv}(\tau_2)$, then $\Downarrow\overline{p}$ can be dropped from \Cref{equation}.\end{definition}

Similar to the case of classical equations, universal quantifiers $(\forall \overline{r}:\overline{s})$ in \Cref{equation} are often omitted in practical specifications of quantum ADTs (see \Cref{ex-qbit} and \Cref{prop-array}).

It is worth examining carefully the differences between classical \Cref{eq-0} and quantum \Cref{equation}. First, we notice that in terms $\tau_1$ and $\tau_2$, quantum variables $\overline{q_1}$ and $\overline{q_2}$ are already initialised, but $\overline{r}$ are not. Thus, \Cref{equation} means that the equality between $\tau_1$ and $\tau_2$ is valid for all possible input states of $\overline{r}$. Second, an input state $\ket{\varphi}_{\overline{r}}$ of $\overline{r}=r_1,\ldots,r_n$ may be entangled. 
Third, the evaluation of $\tau_1$ (resp.\ $\tau_2$) gives an output state $\ket{\psi_1}$ (resp.\ $\ket{\psi_2}$) of multiple quantum variables, but one may only be interested in a subset $\overline{p}$ of them. Thus, the variables in $\overline{p}$ are designated as the principal variables. Consequently, to define the validity of equations in a quantum algebra, we need the notion of a reduced density operator. For a quantum state $\ket{\psi}$ in Hilbert space $\calH_{\overline{p}}\otimes\calH_{\overline{p^\prime}}$, the restriction of $\ket{\psi}$ onto $\overline{p}$ is defined as the reduced density operator 
$$\ket{\psi}\downarrow\overline{p}=\tr_{\overline{p^\prime}}(\ket{\psi}\!\bra{\psi})$$ where $\tr_{\overline{p^\prime}}$ stands for the partial trace over subsystem $\overline{p^\prime}$ (see Section 2.4.3 of \cite{NC00} for its precise definition). Intuitively, $\ket{\psi}\downarrow\overline{p}$ describes the state of subsystem $\overline{p}$ when the whole system $\overline{p}\uplus\overline{p^\prime}$ is in state $\ket{\psi}.$ With this preparation, we are able to introduce: 
 
\begin{definition}[Validity]
    \label{def:validity}
    The equation $e$ given in \Cref{equation} is valid in algebra $\calA$, written $\calA\models e$, if for any quantum state $\ket{\varphi}\in\calH_{\overline{r}}$, we have:
    \begin{equation*}
        \mathit{eval}_{\ket{\varphi}}(\tau_1)\downarrow\overline{p}=\mathit{eval}_{\ket{\varphi}}(\tau_2)\downarrow\overline{p}.
    \end{equation*}
\end{definition}

\subsection{Specification of Quantum Data Types}

As one may expect, the algebraic framework developed in the previous subsection can be used to specify ADTs in quantum computing. The following definition is a straightforward generalisation of \Cref{def-spec}. 

\begin{definition}[Quantum Specification]
    \quad
    \begin{enumerate}\item A quantum specification is a quadruple $\mathit{QSpec}=(S_q,\calK,\calU,E)$, where $\mathit{QSig}=(S_q,\calK,\calU)$ is a quantum signature, and $E$ is a set of equations of the form in \Cref{equation}. 
\item A $\mathit{QSpec}$-algebra is a quantum algebra $\calA$ of signature $\mathit{QSig}$ such that $\calA\models e$ for every $e\in E$. 
\end{enumerate}
\end{definition}

For a better understanding of quantum ADTs, let us first see the following simple example: 

\begin{example}[Abstract data type of qubits]\label{ex-qbit}
Consider the following quantum specification:
\begin{align*}\texttt{QBIT} =\  
&\textbf{sorts}:\ \texttt{qbit} \\ 
&\textbf{states}:\ \ket{0}, \ket{1}, \ket{+}, \ket{-}: \texttt{qbit} \\ 
&\qquad\quad \ket{\beta_{00}},\ket{\beta_{01}},\ket{\beta_{10}},\ket{\beta_{11}}: \texttt{qbit}, \texttt{qbit}\\ 
&\textbf{gates}:\ \mathit{NOT},H:\texttt{qbit}\\ 
&\qquad\quad \mathit{CNOT}: \texttt{qbit}, \texttt{qbit}\\
&\textbf{equations}: q,q_1,q_2:\texttt{qbit}\\
&\qquad\quad \mathit{NOT}[q]\ket{0}_q=\ket{1}_q,\quad \mathit{NOT}[q]\ket{1}_q=\ket{0}_q\\ 
&\qquad\quad H[q]\ket{0}_q=\ket{+}_q,\quad H[q]\ket{1}_q=\ket{-}_q\\ 
&\qquad\quad \mathit{CNOT}[q_1,q_2]\ket{+}_{q_1}\ket{0}_{q_2}=\ket{\beta_{00}}_{q_1q_2},\quad \mathit{CNOT}[q_1,q_2]\ket{+}_{q_1}\ket{1}_{q_2}=\ket{\beta_{01}}_{q_1q_2}\\ 
&\qquad\quad \mathit{CNOT}[q_1,q_2]\ket{-}_{q_1}\ket{0}_{q_2}=\ket{\beta_{10}}_{q_1q_2},\quad \mathit{CNOT}[q_1,q_2]\ket{-}_{q_1}\ket{1}_{q_2}=\ket{\beta_{11}}_{q_1q_2}
\end{align*}

We can define a $\texttt{QBIT}$-algebra $\calA$ as follows:
\begin{itemize}
    \item Let $\calH_{\texttt{qbit}}^{\calA}=\calH_2$ be the $2$-dimensional Hilbert space.
    \item For quantum state symbols, define
    $\ket{b}^{\calA}=\ket{b}$ for $b\in \braces{0,1}$,
    $\ket{\pm}^{\calA}=\frac{1}{\sqrt{2}}(\ket{0}\pm\ket{1})$, and
    $\ket{\beta_{ab}}^{\calA}=\frac{1}{\sqrt{2}}
        \parens*{\ket{0}\ket{b}+(-1)^a\ket{1}\ket{1-b}}$
    for $a,b\in \braces{0,1}$.
    \item For quantum gate symbols, define
    \begin{equation*}
        \mathit{NOT}^{\calA}=X,\qquad
        H^{\calA}=H,\qquad
        \mathit{CNOT}^{\calA}=\mathrm{CNOT},
    \end{equation*}
    where $X$, $H$, and $\mathrm{CNOT}$ are the Pauli $X$, Hadamard, and controlled-NOT gates, respectively.
\end{itemize}
It is easy to check that $\calA$ satisfies all equations in $\texttt{QBIT}$, and hence $\calA$ is a $\texttt{QBIT}$-algebra.
\end{example}

\subsection{A Simple Application Example: Programming Quantum Query Algorithms}
\label{sub:oracle}

\subsubsection*{Quantum query algorithms} To further understand the role of ADTs in quantum computing, let us consider a large class of quantum algorithms, namely quantum query algorithms. Let $\overline{q_w}$ and $\overline{q_o}$ be two disjoint strings of qubits of lengths $l_w$ and $l_o$, called the work qubits and output qubits, respectively. Suppose we are given a quantum oracle $\calO$, which is a black-box unitary operator. Then a quantum query algorithm \cite{Amb} with queries to $\calO$ is usually described as a quantum circuit:
$$C=U_T\calO{}U_{T-1}\cdots U_2\calO{}U_1\calO{}U_0$$ where $U_t$ $(t=0,1,\ldots,T)$ are quantum gates independent of $\calO$. The circuit $C$ is initialised in state $\ket{0}_{\overline{q_w}}\ket{0}_{\overline{q_o}}$. Then the algorithm outputs $x$ (a bit string) with probability $$\mbox{Prob}(x)=\norm*{\bra{x}_{\overline{q_o}}C\ket{0}_{\overline{q_w}}\ket{0}_{\overline{q_o}}}^2.$$

\subsubsection*{Oracles as ADTs} Oracle synthesis functionality has already been integrated into quantum programming platforms like Qiskit \cite{qiskit} and Q\# \cite{sharp}. Ideally, a programmer does not need to know the implementation details of the oracle $\calO$ when programming the quantum query algorithm. Instead, it can be viewed through the lens of ADTs. Consider a quantum specification $(S_q,\calK,\calU,E)$, where $\texttt{qbit}\in S_q$, $\calO\in \calU$, and $\mathit{sort}(\calO)=\underbrace{\texttt{qbit},\ldots,\texttt{qbit}}_{l_w+l_o}$. The equations in $E$ may specify properties that the oracle $\calO$ should satisfy without fixing its implementation details. In this way, the quantum query algorithm can be programmed as:
\begin{align*}P\equiv\ &\overline{q_w}:= \ket{0}^{l_w}; \overline{q_o}:=\ket{0}^{l_o};\\ 
&\texttt{for}\ i\ \texttt{from}\ 0\ \texttt{to}\ T-1\ \texttt{do}\ U_i[\overline{q}];\\ &\qquad\qquad\qquad\qquad\qquad\ \calO[\overline{q}];\quad (\mbox{calling the oracle operation})\\
& U_T[\overline{q}];\\ & x:=M[\overline{q_o}]
\end{align*} where $\overline{q}=\overline{q_w}\cup\overline{q_o}$ and $M$ is the measurement in the computational basis of $\overline{q_o}$. 

\section{Quantisation of Data Types}\label{sec-quantisation}

The abstract notion of quantum ADTs was introduced in \Cref{sec-QDT}. In practical applications, a quantum ADT does not stand alone; rather, it often arises as an extension of a corresponding classical ADT. An application scenario is that, when developing a quantum program, the programmer realises that they need to apply a function, or more generally an ADT, that does not comply with certain constraints (e.g., non-unitarity) imposed by quantum mechanics. In such cases, the function or ADT must be quantised, and what the programmer actually invokes in the program is its quantisation. A typical example is a quantum oracle (see \Cref{sub:oracle}), which is often lifted from a black-box classical function.  
So, in this section, we consider how a classical data type defined in \Cref{sec-CDT} can be extended to a quantum data type introduced in \Cref{sec-QDT}. We call this extension process \textit{quantisation}.

In this section, we make the following conventions.
Given a classical signature, for each sort $s$, we assume a designated constant symbol $0_s:\rightarrow s$.
Given a quantum signature, for any sort $s$, we assume a designated quantum state symbol $\ket{0_s}:s$.
For a string $\overline{s}=s_1,\ldots,s_n$ of sorts, the quantum state $\ket{\overline{0}_{\overline{s}}}^{\calA}$ in a given algebra $\calA$ is then $\ket{0_{s_1}}^{\calA}\otimes \ldots\otimes \ket{0_{s_n}}^{\calA}$. When the context is clear, we omit the subscript $s$ and use $\ket{\overline{0}}$ to abbreviate $\ket{0} \ldots \ket{0}$. 
These designated symbols are implicitly assumed to be included in the signatures discussed below.
Given an algebra $A$ of a signature, for each sort $s$, we assume that $A_s$ is finite and there is a designated bijection
\begin{equation}\label{int-rep}
    k_s^A:A_s\rightarrow \Z_{\abs*{A_s}},
    \qquad k_s^A(0_s^A)=0
\end{equation} where for a positive integer $m$,  $\Z_m=\{0,1,...,m-1\}$. 
Intuitively, bijection $k_s(a)$ gives the default integer label for every element $a\in A_s$; that is, we use $k_s$ to enumerate all elements of $A_s$.

Let us first consider how a classical signature can be quantised to a quantum signature.

\begin{definition}[Quantisation of signature]
\label{def:quantisation-of-signature}
Let $\mathit{Sig}=(S,\mathit{OP})$ be a signature and $\mathit{QSig}=(S_q,\calK,\calU)$ a quantum signature.
A quantisation from $\mathit{Sig}$ to $\mathit{QSig}$ is a mapping $\eta$ such that \begin{enumerate}\item  
each sort $s\in S$ is mapped to $s_q=\eta(s)$; 
\item each constant symbol $c:\rightarrow s$ is mapped to a quantum state symbol $\ket{\psi_c}=\eta(c)$ of sort $s_q$. We additionally require $\eta(0_s)=\ket{0_{s_q}}$;
and \item each nonconstant operation symbol $O:s_1,\ldots,s_n\rightarrow s$ is mapped to a triple of quantum gate symbols $(U_O, \mathit{pre}_O, \mathit{post}_O)=\eta(O)$, where $\mathit{sort}(U_O)=\mathit{sort}(\mathit{pre}_O)=\mathit{sort}(\mathit{post}_O)\in S_q^+$ is a string of sorts in one of the following forms:
\begin{itemize}
    \item 
    (in-place) $s_{1q},\ldots,s_{nq}, \overline{s_a}$ for some $\overline{s_a}\in S_q^*$, with $s_q=s_{iq}$ for some $i$;
    \item 
    (out-of-place) $s_{1q},\ldots,s_{nq}, s_q,\overline{s_a}$ for some $\overline{s_a}\in S_q^*$.
\end{itemize}
\item 
    In the above, the mappings $s\mapsto s_q$, $c\mapsto \ket{\psi_c}$, and $O\mapsto U_O$ are injective.
\end{enumerate}
\end{definition}

We need to carefully explain the above condition~(3).
For a nonconstant operation $O$, it is natural to expect its quantisation lifts its behaviour on classical data to the quantum setting. However, there are multiple ways to represent classical data as quantum data,
e.g., encoding into computational-basis quantum states, encoding into relative phases, encoding into quantum amplitudes. Concrete examples can be found in \Cref{sub:bit-oracle,sub:phase-oracle,sec-app,sec-qec}.
Therefore, the quantum gate symbols $\mathit{pre}_O$ and $\mathit{post}_O$ in $\eta(O)$ serve for the representation of classical data in the quantum world, while $U_O$ serves for the lifted $O$ in the quantum world.
In particular, the quantum gate symbol $\mathit{pre}_O$ (resp.\ $\mathit{post}_O$) corresponds to how classical data is preprocessed into (resp.\ postprocessed from) quantum data. Then, $U_O$ corresponds to the quantisation of $O$ that has quantum input and output.
Their concrete interpretations in an algebra is given later.
The injectivity of $O\mapsto U_O$ in condition~(4) means $U_O$ is unique for each $O$, but $\mathit{pre}_O$ and $\mathit{post}_O$ can be the same for different $O$.

For convenience, if an operation symbol $O$ is quantised in the in-place form,
we assume the index $i$ for $s_q=s_{iq}$ is implicitly fixed and used consistently in the other quantisation notions.
The string $\overline{s_a}$ of sorts represents ancilla sorts needed for the quantised operation.

Next, we consider how an algebra can be quantised along the quantisation of the signature. %

\begin{definition}[Quantisation of algebra]
\label{def:quantisation-of-algebra}
Let $\eta$ be a quantisation from $\mathit{Sig}$ to $\mathit{QSig}$. Then a quantum algebra $\calA$ of signature $\mathit{QSig}$ is a quantisation of an algebra $A$ of signature $\mathit{Sig}$ along $\eta$ if: 
\begin{enumerate}\item for each sort $s\in S$, 
   $\braces*{\ket{a}\midv a\in A_s}$ forms an orthonormal basis of $\calH_{s_q}^{\calA}$; 
\item for each constant symbol $c:\rightarrow s$, $\ket{\psi_c}^\calA=\ket{c^A}$. In particular, $\ket{0_{s_q}}^{\calA}=\ket{0_s^A}$; and  
\item for each nonconstant operation symbol $O:s_1,\ldots,s_n\rightarrow s$, and for all $a_1\in A_{s_1},\ldots,a_n\in A_{s_n}$, we have:
\begin{equation}\label{eq:quantisation-condition} \norm*{\bra{O^A(a_1,\ldots,a_n)}\mathit{post}_O^{\calA}U_O^\calA \mathit{pre}_O^{\calA}\parens*{\ket{a_1}\cdots\ket{a_n}\ket{\overline{0}}}}^2=1.\end{equation}
Here, the subsystems involved in quantum states and gates are implicitly assumed to be ordered according to $\mathit{sort}(U_O)$. In particular,
\begin{itemize}
    \item   
    (in-place) $\ket{\overline{0}}$ denotes $\ket{\overline{0}_{\overline{s_a}}}^{\calA}$, and $\ket{O^A(a_1,\ldots,a_n)}$ is a state in $\calH_{s_{iq}}^{\calA}$, where $i$ is the fixed index in \Cref{def:quantisation-of-signature}; and
    \item
    (out-of-place) $\ket{\overline{0}}$ denotes $\ket{\overline{0}_{s_q,\overline{s_a}}}^{\calA}$, and $\ket{O^A(a_1,\ldots,a_n)}$ is a state in $\calH_{s_{q}}^{\calA}$.
\end{itemize}
\end{enumerate}
\end{definition}

Intuitively, in \Cref{eq:quantisation-condition}, $\mathit{pre}_O^{\calA}$ converts a computational-basis input (the default quantum representation of classical data) into a quantum input for $U_O^{\calA}$, and $\mathit{post}_O^{\calA}$ converts the quantum output of $U_O^{\calA}$ into a computational-basis output. 
The condition in \Cref{eq:quantisation-condition} says that, under the preprocessing $\mathit{pre}_O^{\calA}$ and postprocessing $\mathit{post}_O^{\calA}$, the behaviour of $U_O^{\calA}$ agrees with $O^A$ on all classical inputs (with probability $1$), so $U_O^{\calA}$ quantises $O^A$.
In \Cref{sub:bit-oracle,sub:phase-oracle,sec-app,sec-qec}, we will present illustrative examples,
where the utility of the preprocessing $\mathit{pre}_O$ and postprocessing $\mathit{post}_O$ becomes more concrete.

It remains to consider how a classical specification can be quantised. To this end, we need to introduce a quantisation of classical terms.

\begin{definition} [Quantisation of terms]
\label{def:quantisation-of-terms}
Let $\eta$ be a quantisation from $\mathit{Sig}$ to $\mathit{QSig}$. 
Suppose $\mathit{QSig}$ contains a quantum gate symbol $\mathit{SUM}_s:s_q,s_q$ for each sort $s$ in $\mathit{Sig}$, where $s_q=\eta(s)$.
Let $t$ be a term in $\mathit{Sig}$.
For each variable $x:s$ occurring in $t$, designate a distinct quantum variable $q_x:s_q$.

The quantisation $\eta(t)$ is defined recursively as follows:
\begin{enumerate}
    \item If $t=c:\rightarrow s$ is a constant symbol, choose a fresh variable $p:s_q$ as the principal variable and define $\eta(t)=\ket{\psi_c}_{p}$, where $\ket{\psi_c}=\eta(c)$;
    \item If $t=x$ is a variable of sort $s$, choose a fresh variable $q:s_q$ as the principal variable and define $\eta(t)=\mathit{SUM}_{s}[q_x,q]\ket{0}_{q}$; and
    \item Suppose $t=O(t_1,\ldots,t_n)$ is a term of sort $s$ and $\eta(t_i)=C_i\ket{\psi_i}$ for $i=1,...,n$, where each $\eta(t_j)$ has principal variable $p_j$ of sort $s_{jq}$.
    Let $\overline{p}=p_1,\ldots,p_n$. By renaming variables if necessary, assume that, apart from the designated $q_x$, all other quantum variables in $\eta(t_1),\ldots,\eta(t_n)$ are distinct. 

    We define
    $$\eta(t)=C_1;\ldots ; C_n;\mathit{pre}_O[\overline{p},\overline{r}]; U_O[\overline{p},\overline{r}];\mathit{post}_O[\overline{p},\overline{r}]\ket{\psi_1} \ldots \ket{\psi_n}\ket{\overline{0}}_{\overline{r}}.$$
    Here $\overline{r}$ is a sequence of fresh variables such that
    \begin{itemize}
        \item
            (in-place) 
            $\mathit{sort}(\overline{r})=\overline{s_a}$ and $p_i$ is chosen to be the principal variable, where $i$ is the fixed index in \Cref{def:quantisation-of-signature}; 
        \item
            (out-of-place) 
            $\mathit{sort}(\overline{r})=s_q,\overline{s_a}$ and $\overline{r}=r_0,\overline{r_a}$, where $r_0$ is chosen to be the principal variable.
    \end{itemize}
\end{enumerate}
\end{definition}

In \Cref{def:quantisation-of-terms}, note that the same $q_x$ is used for every occurrence of variable $x$ in $t$.
In clause (2) of this definition, the design $\eta(t)=\mathit{SUM}_{s}[q_x,q]\ket{0}_{q}$ is subtle and needs to be carefully explained.
Intuitively, if the gate symbol $\mathit{SUM}$ is interpreted properly in an algebra,
$\mathit{SUM}_{s}[q_x,q]\ket{0}_{q}$ copies the computational-basis state of quantum variable $q_x$ to $q$, which is initialised in state $\ket{0}_q$.
It has two purposes. First, copying in the computational basis reversibly handles the multiple occurrences of variable $x$ in term $t$, where $q_x$ is used to store the input of $x$ and each fresh $q$ corresponds to a single occurrence of $x$.
Second, since $q_x$ are not chosen to be principal variables, tracing out $q_x$ at the end ``dephases'' each corresponding $q$ in the computational basis. This is important for the correctness of \Cref{thm:quantise}, when we transition from a classical equation with a universal quantifier $(\forall \overline{x})$  over classical inputs to a quantum equation with quantifier $(\forall \overline{q})$ over quantum inputs.

With quantisations of classical terms, we can further define quantisations of equations in a classical specification.

\begin{definition} [Quantisation of equations]
\label{def:quantisation-of-equations}
Let $\eta$ be a quantisation from $\mathit{Sig}$ to $\mathit{QSig}$, and let $e=(\forall\overline{x}:\overline{s})(t_1=t_2)$ be a classical equation, where $\overline{x}=x_1,\ldots,x_n$ is a string of all variables occurring in $t_1$ or $t_2$. We assume that the designated variables $\overline{q}=q_{x_1},\ldots,q_{x_n}$ (see \Cref{def:quantisation-of-terms}) in $t_1$ and $t_2$ are aligned, and $\eta(t_1)$ and $\eta(t_2)$ have the same principal variable, say $p$. %
Then, the quantisation $\eta(e)$ of $e$ along $\eta$ is defined as follows: 
\begin{equation*}
    \eta(e)=(\forall\overline{q}:\overline{s_q})(\eta(t_1)=\eta(t_2))\Downarrow p,
\end{equation*}
where $\overline{s_q}$ is the quantisation of $\overline{s}$.
\end{definition}

With the above preparation, we are able to present the main result of this section, which describes a way of deriving a specification of a quantised ADT by lifting from the classical specification. To describe this lifting, for $a,b\in A_s$, we define
\begin{equation}
    \label{eq:boxplus_def}
    a\boxplus_s^A b=(k_s^A)^{-1}\parens*{k_s^A(a)+k_s^A(b)\bmod \abs*{A_s}}
\end{equation} where $k_s^A$ is given in equation (\ref{int-rep}). 
Intuitively, the operator $\boxplus_s^A$ naturally extends the binary XOR $\oplus$ operator up to the bijection $k_s^A$. In particular, in \Cref{eq:boxplus_def}, the inputs $a,b$ are mapped to the integer domain $\Z_{\abs{A_s}}$, added in $\Z_{\abs{A_s}}$, and then mapped back to the domain $A_s$. It is easy to verify that $a\boxplus_s^A 0_s^A= 0_s^A \boxplus_s^A a=a$ for any $a$.
When the algebra $A$ is clear, we simply write $k_s$ and $\boxplus_s$ for $k_s^A$ and $\boxplus_s^A$, respectively. 

\begin{theorem}[Lifting of specification]
    \label{thm:quantise}
Let $\mathit{Spec}=(\mathit{Sig},E)$ be a classical specification, and let $\eta$ be a quantisation from $\mathit{Sig}$ to $\mathit{QSig}$. 
Suppose $\mathit{QSig}$ contains a quantum gate symbol $\mathit{SUM}_s:s_q,s_q$ for each sort $s$ in $\mathit{Sig}$, where $s_q=\eta(s)$.
We define $\mathit{QSpec}=(\mathit{QSig},\eta(E))$, where $\eta(E)=\braces*{\eta(e)\midv\mbox{equation}\ e\in E}$. If
\begin{enumerate}
    \item $A$ is an algebra of $\mathit{Spec}$; and 
    \item $\calA$ is a quantisation of $A$ along $\eta$ and $\mathit{SUM}^{\calA}_s:\ket{a}\ket{b}\mapsto \ket{a} \ket{a\boxplus_s b}$ for any sort $s$ and any $a,b\in A_s$,
\end{enumerate} 
then $\calA$ is a quantum algebra of $\mathit{QSpec}$.
\end{theorem}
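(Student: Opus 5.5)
The proof will go through a key lemma about how quantised terms act on computational-basis inputs, followed by a linearity argument that handles superposed and entangled inputs.

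\emph{Key lemma.} Fix a term $t$ of sort $s$ and an assignment $\sigma$ of its variables $\overline{x}$. Run the circuit of $\eta(t)$ on the input state $\ket{\sigma(\overline{x})}_{\overline{q}}$, where $\overline{q}$ are the designated variables $q_{x}$. The lemma claims that the output state has the product form
\[
\ket{\sigma(\overline{x})}_{\overline{q}}\otimes\ket{\mathit{eval}_\sigma(t)}_{p}\otimes\ket{\gamma_{t,\sigma}}_{\mathrm{rest}},
\]
where $p$ is the principal variable and $\ket{\gamma_{t,\sigma}}$ is some unit vector on the remaining (ancilla and non-principal) variables.

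The lemma is proved by induction on $t$.
\begin{itemize}
\item \emph{Constants.} For $t=c$, clause~(2) of \Cref{def:quantisation-of-algebra} gives $\ket{\psi_c}^{\calA}=\ket{c^A}$ directly.
\item \emph{Variables.} For $t=x$, the interpretation of $\mathit{SUM}_s^{\calA}$ sends $\ket{\sigma(x)}_{q_x}\ket{0}_q$ to $\ket{\sigma(x)}_{q_x}\ket{\sigma(x)\boxplus_s 0_s^A}_q=\ket{\sigma(x)}_{q_x}\ket{\sigma(x)}_q$.
\item \emph{Operations.} For $t=O(t_1,\ldots,t_n)$, the circuits $C_j$ act on disjoint fresh variables, except for the shared $q_x$. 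The $q_x$ are only used as controls of $\mathit{SUM}$ and remain in basis states, so by the induction hypothesis the state after $C_1;\ldots;C_n$ is a product containing $\ket{a_1}_{p_1}\cdots\ket{a_n}_{p_n}\ket{\overline{0}}_{\overline{r}}$, with $a_j=\mathit{eval}_\sigma(t_j)$. Condition \Cref{eq:quantisation-condition} says that the unit vector $\mathit{post}_O^{\calA}U_O^{\calA}\mathit{pre}_O^{\calA}\parens*{\ket{a_1}\cdots\ket{a_n}\ket{\overline{0}}}$ has overlap of norm $1$ with $\ket{O^A(a_1,\ldots,a_n)}$ on the principal register. By the Cauchy--Schwarz equality case, the vector therefore factors as $\ket{O^A(\overline{a})}_{p}\otimes\ket{\gamma'}$ for some unit $\ket{\gamma'}$ on the other registers, which may be nonclassical. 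This holds in both the in-place and out-of-place cases.
\end{itemize}
This factorisation step is the crux of the induction. It is also why the lemma keeps the junk state $\ket{\gamma}$ arbitrary: that state depends on $\sigma$, but it is always disentangled from $p$ and from $\overline{q}$.

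\emph{Validity of $\eta(e)$.} Let $e=(\forall\overline{x})(t_1=t_2)\in E$ and let $\ket{\varphi}=\sum_{\sigma}\alpha_\sigma\ket{\sigma}_{\overline{q}}$ be an arbitrary (possibly entangled) input on $\overline{q}$. If $\overline{q}$ also carries free variables beyond $\overline{x}$, we simply include them in $\overline{q}$. By linearity and the lemma, the output of $\eta(t_i)$ is
\[
\sum_\sigma\alpha_\sigma\ket{\sigma}_{\overline{q}}\ket{\mathit{eval}_\sigma(t_i)}_p\ket{\gamma^{(i)}_\sigma}.
\]
The $\ket{\sigma}_{\overline{q}}$ are orthonormal, and $\overline{q}$ is traced out because only $p$ is principal. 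Hence the cross terms vanish and the reduced state on $p$ is
\[
\sum_\sigma\abs{\alpha_\sigma}^2\,\ket{\mathit{eval}_\sigma(t_i)}\!\bra{\mathit{eval}_\sigma(t_i)},
\]
which does not depend on the junk states $\ket{\gamma^{(i)}_\sigma}$. Since $A\models e$, we have $\mathit{eval}_\sigma(t_1)=\mathit{eval}_\sigma(t_2)$ for every $\sigma$, so the two reduced states coincide. This is exactly \Cref{def:validity} for $\eta(e)$, and so $\calA\models\eta(E)$.

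\emph{Expected obstacle.} The main obstacle is bookkeeping rather than mathematics. The induction must track which variables are shared: only the $q_x$ are shared, and each is left in a basis state. It must also handle the fact that ancillas of $\eta(t_1)$ and $\eta(t_2)$ differ, which is why each term is treated separately with all of its non-principal variables traced out. Finally, it must confirm that both terms have the same principal $p$ and the same $\overline{q}$, as assumed in \Cref{def:quantisation-of-equations}.
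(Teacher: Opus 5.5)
Your proposal is correct and follows the paper's proof essentially step for step. The paper also proves a lemma by induction on $t$, with the same constant, variable and operation cases, showing that on a computational-basis input the output factors as $\ket{\overline{a}}_{\overline{q}}\ket{\mathit{eval}_\sigma(t)}_p\ket{\gamma}$; it then extends by linearity, using orthogonality of the $\ket{\overline{a}}_{\overline{q}}$ and tracing out $\overline{q}$ to obtain the same diagonal mixture on $p$ for both sides, and your explicit derivation of the factorisation from \Cref{eq:quantisation-condition} spells out a step the paper states without comment.
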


Here, corresponding to the discussion below \Cref{def:quantisation-of-terms}, the unitary operator $\mathit{SUM}^{\calA}_s:\ket{a}\ket{b}\mapsto \ket{a} \ket{a\boxplus_s b}$ is a canonical way to realise the copying in the computational basis when $b=0$.
The proof of \Cref{thm:quantise} is deferred to \Cref{appendix}.

The above definitions of quantisation are general.
In the following subsections, we examine two specific quantisation methods that appear commonly in quantum algorithm design.  

\subsection{Bit Oracle Quantisation}
\label{sub:bit-oracle}

The first quantisation method is bit oracle quantisation (BO-quantisation for short).
The BO-quantisation of a classical algebra is defined as follows.

\begin{definition}[BO-quantisation]\label{def:bo-quantisation} Let $\eta$ be a quantisation from $\mathit{Sig}$ to $\mathit{QSig}$ and $A$ be a $\mathit{Sig}$-algebra. Then, a quantisation $\calA$ of $A$ along $\eta$ is a BO-quantisation if
for each nonconstant operation symbol $O:s_1,\ldots,s_n\rightarrow s$:
\begin{enumerate}
    \item 
    $U_O$ is out-of-place with $\overline{s_a}=\emptyset$;
    \item 
    $\mathit{pre}^{\calA}_O=\mathit{post}^{\calA}_O=\Id_{\mathit{sort}(U_O)}$; and 
    \item
    for any $a_1\in A_{s_1},\ldots,a_n\in A_{s_n},a\in A_s$,
    \begin{equation}\label{eq:bo-oracle} 
        U_O^\calA(\ket{a_1}\cdots\ket{a_n}\ket{a})=\ket{a_1}\cdots\ket{a_n}\ket{a\boxplus_s O^A(a_1,\ldots,a_n)}.
    \end{equation}
\end{enumerate}
\end{definition}

It is easy to check that $U_O^\mathcal{A}$ defined in equation (\ref{eq:bo-oracle}) indeed satisfies the condition (\ref{eq:quantisation-condition}). 
When $\mathit{QSig}$ includes the gate symbol $\mathit{SUM}_s$ for each sort $s$, the BO-quantisation of terms and equations can be naturally induced by \Cref{def:bo-quantisation}, according to \Cref{def:quantisation-of-signature,def:quantisation-of-equations}.

\Cref{def:bo-quantisation} naturally yields the common bit oracle in quantum algorithms as follows,
where the designated $k_{\texttt{bit}}$ is the identity and $\boxplus_{\texttt{bit}}$ reduces to XOR.

\begin{example}
    \label{exp:bit-oracle}
Consider the following
\begin{itemize}
    \item
        Signature $\mathit{Sig}=(S,\mathit{OP})$ with $S=\braces*{\texttt{bit}}$ and $\mathit{OP}=\braces*{0_{\texttt{bit}},f}$,
        where $0_{\texttt{bit}}:\rightarrow \texttt{bit}$ and $f:\underbrace{\texttt{bit},\ldots,\texttt{bit}}_n\rightarrow \texttt{bit}$.
    \item
        Algebra $A$ of signature $\mathit{Sig}$ with $A_{\texttt{bit}}=\braces*{0,1}$ and $0_{\texttt{bit}}^A=0$, where $f^A:\braces{0,1}^n\rightarrow\braces{0,1}$ is a $n$-ary Boolean function.
\end{itemize}
Then the following gives a BO-quantisation of $A$:
\begin{itemize}
    \item
        Signature $\mathit{QSig}=(S_q,\calK,\calU)$ with $S_q=\braces*{\texttt{qbit}}$, $\calK=\braces*{\ket{0_{\texttt{qbit}}}}$, and $\calU=\braces*{U_f,\mathit{pre}_f,\mathit{post}_f}$,
        where $\ket{0_{\texttt{qbit}}}: \texttt{qbit}$ and $U_f,\mathit{pre}_f,\mathit{post}_f:\underbrace{\texttt{qbit},\ldots,\texttt{qbit}}_{n+1}$.
    \item
        Algebra $\calA$ of signature $\mathit{QSig}$: $\calH_{\texttt{qbit}}^{\calA}=\Co^2$, $\ket{0_{\texttt{qbit}}}^{\calA}=\ket{0}$, 
        $\mathit{pre}_f^{\calA}=\mathit{post}_f^{\calA}=\Id$,
        and
        \begin{equation}\label{eqn-BO}
            U_f^{\mathcal{A}}\ket{\overline{x},b}=\ket{\overline{x},b\oplus f^A(\overline{x})}
        \end{equation}
        for any $\overline{x}\in\braces{0,1}^n$ and $b\in\braces{0,1}$. 
\end{itemize}
\end{example}

In this example, the unitary operator $U_f^{\calA}$ is often known as the bit oracle or standard oracle used by a quantum algorithm to query a black-box function $f$.
The bit oracle has been employed in the design of many quantum algorithms, e.g., the Deutsch-Jozsa algorithm~\cite{DeutschJozsa} and the modular exponentiation subroutine of Shor's algorithm~\cite{Shor}.
It should be noted that on the RHS of \Cref{eqn-BO}, the first $n$ qubits are used to record the input $\overline{x}$, and the last qubit is used to store the computational outcome $f(\overline{x})$. In this way, $U_f$ is made invertible and thus a unitary operator.

\subsection{Phase Oracle Quantisation}
\label{sub:phase-oracle}

The second quantisation method is phase oracle quantisation (PO-quantisation for short). 
The PO-quantisation of a classical algebra is defined as follows.

\begin{definition}[PO-quantisation] 
    \label{def:po-quantisation}
    Let $\eta$ be a quantisation from $\mathit{Sig}$ to $\mathit{QSig}$ and $A$ be a $\mathit{Sig}$-algebra. Then, a quantisation $\calA$ of $A$ along $\eta$ is a PO-quantisation if for each nonconstant operation symbol $O:s_1,\ldots,s_n\rightarrow s$:
\begin{enumerate}
    \item
    $U_O$ is out-of-place with $\overline{s_a}=\emptyset$.
    \item
    $\mathit{pre}^{\calA}_O=\Id_{s_{1q},\ldots,s_{nq}}\otimes \mathit{QFT}_{\abs{A_s}}$
    and $\mathit{post}^{\calA}_O=\Id_{s_{1q},\ldots,s_{nq}}\otimes \mathit{QFT}_{\abs{A_s}}^\dagger$,
    where $\mathit{QFT}_{\abs{A_s}}$ is the quantum Fourier transform on $\Z_{\abs{A_s}}$ defined by
    \begin{equation*}
       \mathit{QFT}_{\abs{A_s}}\ket{a}=\frac{1}{\sqrt{\abs{A_s}}}\sum_{b\in A_s} \omega^{k_s(a)\cdot k_s(b)}\ket{b}
    \end{equation*}
    for any $a\in A_s$,
    where $\omega=e^{2\pi i/\abs{A_s}}$; and
    \item 
     for any $a_1\in A_{s_1},\ldots,a_n\in A_{s_n},a\in A_s$,
    \begin{equation}\label{eq:po-oracle} 
       U_O^\calA\ket{a_1,\ldots, a_n} \ket{a}=\omega^{k_s(a)\cdot k_s(O^A(a_1,\ldots,a_n))}\ket{a_1,\ldots, a_n} \ket{a}.
    \end{equation}
\end{enumerate}
\end{definition}

It is also easy to check that $U_O^\mathcal{A}$ defined in equation (\ref{eq:po-oracle})  satisfies the condition (\ref{eq:quantisation-condition}). When $\mathit{QSig}$ includes the gate symbol $\mathit{SUM}_s$ for each sort $s$, the PO-quantisation of terms and equations can be naturally induced by \Cref{def:po-quantisation}, according to \Cref{def:quantisation-of-signature,def:quantisation-of-equations}.

\Cref{def:po-quantisation} naturally yields the common phase oracle (or more accurately, the controlled version of the common phase oracle) in quantum algorithms,
where the designated $k_{\texttt{bit}}$ is the identity, $\omega=-1$ and $\mathit{QFT}_2=H$ is the Hadamard operator.

\begin{example}
    Consider the same signature $\mathit{Sig}$ and algebra $A$ as in \Cref{exp:bit-oracle}.
    Then, the following gives a PO-quantisation of $A$:
    \begin{itemize}
        \item
            Signature $\mathit{QSig}=(S_q,\calK,\calU)$: $S_q=\braces*{\texttt{qbit}}$, $\calK=\braces*{\ket{0_{\texttt{qbit}}}}$, and $\calU=\braces*{U_f,\mathit{pre}_f,\mathit{post}_f}$,
            where $\ket{0_{\texttt{qbit}}}:\texttt{qbit}$ and $U_f,\mathit{pre}_f,\mathit{post}_f:\underbrace{\texttt{qbit},\ldots,\texttt{qbit}}_{n+1}$.
        \item
            Algebra $\calA$ of signature $\mathit{QSig}$: $\calH_{\texttt{qbit}}^{\calA}=\Co^2$, $\ket{0_{\texttt{qbit}}}^{\calA}=\ket{0}$, 
            $\mathit{pre}_f^{\calA}=\mathit{post}_f^{\calA}=\Id_n\otimes H$, and
            \begin{equation*}
                U_f^{\mathcal{A}}\ket{\overline{x},b}=(-1)^{b\cdot f^A(\overline{x})}\ket{\overline{x},b}
            \end{equation*}
            for any $\overline{x}\in\braces{0,1}^n$ and $b\in\braces{0,1}$.
    \end{itemize}
\end{example}

In this example, $U_f$ is often known as the (controlled) phase oracle for a quantum algorithm to query a black-box function $f$.
The phase oracle has been applied in Grover's search~\cite{Grover}, amplitude amplification~\cite{Amplitude}, and several other quantum algorithms. 

\subsection{Phase Kickback Trick}

In quantum algorithms, the phase kickback trick provides a useful connection between a bit oracle and a phase oracle,
and now we can view it through the lens of ADTs.
For example, in the Deutsch-Jozsa algorithm that decides whether a given Boolean function $f:\braces{0,1}^n\rightarrow\braces{0,1}$ is constant or balanced, the bit oracle $U_f$ is defined by $$U_f\ket{\overline{x},y}=\ket{\overline{x},y\oplus f(\overline{x})}$$ for any $\overline{x}\in\braces{0,1}^n$ and $y\in\braces{0,1}$. A key idea in this algorithm is that the target register is initialised in the superposition $\ket{-}=\frac{1}{\sqrt{2}}(\ket{0}-\ket{1})$. We observe:
\begin{equation}\label{kick}U_f\ket{\overline{x},-}=\ket{\overline{x}}\otimes (-1)^{f(\overline{x})}\ket{-}=(-1)^{f(\overline{x})}\ket{\overline{x}}\otimes \ket{-}.\end{equation}
In the final expression above, the phase $(-1)^{f(\overline{x})}$ appears in front of the quantum state. 
On the other hand, if the target register is initialised in $\ket{+}=\frac{1}{\sqrt{2}}(\ket{0}+\ket{1})$, then
\begin{equation}
    \label{kick-2}
    U_f\ket{\overline{x},+}=\ket{\overline{x}}\otimes \ket{+}.
\end{equation}
Combining \Cref{kick,kick-2} gives
\begin{equation}
    \label{eq:kick-1-bit}
   (\Id_n\otimes H) U_f(\Id_n\otimes H)\ket{\overline{x},b}=(-1)^{b\cdot f(\overline{x})}\ket{\overline{x},b}.
\end{equation}

More generally, the phase kickback implies a correspondence between BO-quantisation and PO-quantisation.
Suppose $\eta$ is a quantisation from $\mathit{Sig}$ to $\mathit{QSig}$.
Let $\calA$ and $\calB$ be BO- and PO-quantisations of the same algebra $A$ along $\eta$, respectively.
For each sort $s$ in $\mathit{Sig}$, suppose that $\calA$ and $\calB$ have the same quantum domain 
\begin{equation*}
    \calH_{s_q}^{\calA}=\calH_{s_q}^{\calB}=\Span\braces*{\ket{a}:a\in A_s}.
\end{equation*}
Then, analogous to \Cref{eq:kick-1-bit}, for any operation $O:s_1,\ldots,s_n\rightarrow s$, we have:
\begin{equation}
    \label{eq:phase-kickback}
    U_O^{\calB}=\parens*{\Id_{s_{1q},\ldots,s_{nq}}\otimes \mathit{QFT}_{\abs*{A_s}}}U_O^{\calA}
    \parens*{\Id_{s_{1q},\ldots,s_{nq}}\otimes \mathit{QFT}_{\abs*{A_s}}^\dagger},
\end{equation}
which is essentially equivalent to 
\begin{equation*}
    \mathit{post}_O^{\calA}U_O^{\calA}\mathit{pre}_O^{\calA}=\mathit{post}_O^{\calB}U_O^{\calB}\mathit{pre}_O^{\calB}.
\end{equation*} 
This means that BO-quantisation and PO-quantisation can be related by changing how classical data are encoded into quantum data through $\mathit{pre}_O$ and $\mathit{post}_O$.

\section{Illustrative Application I: Quantum Arrays}\label{sec-app}

Arrays are a fundamental data type widely used in classical programming. To illustrate the framework developed in \Cref{sec-QDT,sec-quantisation}, we show how quantum arrays can be defined via the quantisation of classical arrays. In particular, this approach offers a new perspective for understanding Quantum Random Access Memories (QRAMs)~\cite{QRAM,QRAMArch}.

\subsection{Classical Arrays}\label{sec-6150148}

For convenience of the reader, let us recall the basic theory of arrays. First, we describe the signature $\mathit{Sig}=(S,\mathit{OP})$ of the ADT of arrays. Let $\mathcal{I}$ be a set of index sorts and $\calV$ a set of value sorts, so that $S=\mathcal{I}\cup\calV$. The value sorts and their dimensions are generated as follows:
\begin{enumerate}
\item $\texttt{val}\in\calV$ is a primitive value sort, and $\dim(\texttt{val})=0$;
\item if $I\in\mathcal{I}$ and $\tau\in\calV$, then $\texttt{array}(I,\tau)\in\calV$, and $\dim(\texttt{array}(I,\tau))=\dim(\tau)+1$.
\end{enumerate}
A canonical algebra $A$ of $\mathit{Sig}$ gives the following interpretation. For every $I\in\mathcal{I}$ and $\tau\in\calV$, define
\begin{equation*}
    A_{\texttt{array}(I,\tau)}=A_I\rightarrow A_\tau.
\end{equation*}
Consequently, if
\begin{equation*}
    \tau=\texttt{array}(I_1,\texttt{array}(I_2,\ldots,\texttt{array}(I_n,\texttt{val})\ldots)),
\end{equation*}
then
\begin{equation*}
    A_\tau=A_{I_1}\rightarrow\parens*{A_{I_2}\rightarrow\cdots\rightarrow\parens*{A_{I_n}\rightarrow A_{\texttt{val}}}\cdots}.
\end{equation*}
Equivalently, $A_\tau=(A_{I_1}\times\cdots\times A_{I_n})\rightarrow A_{\texttt{val}}$.

For the operation symbols, let $\mathit{OP}$ contain two classes of symbols:
\begin{equation}
    \label{sig-array}
    \begin{split}
        &\mathit{read}_{I,\tau}:\texttt{array}(I,\tau),I\rightarrow\tau\\
        &\mathit{write}_{I,\tau}:\texttt{array}(I,\tau),I,\tau\rightarrow\texttt{array}(I,\tau)
    \end{split}
\end{equation} 
where $I\in\mathcal{I}$ and $\tau\in\calV$.
They are also called \textit{selector} and \textit{store} operations, respectively. For simplicity, the subscripts $I,\tau$ of $\mathit{read}_{I,\tau}$ and $\mathit{write}_{I,\tau}$ will often be omitted.
In the canonical algebra $A$, they are interpreted as follows:
\begin{equation}
    \label{op-array}
    \begin{split}
        \mathit{read}_{I,\tau}^A &: (a,j)\mapsto a[j]\\
        \mathit{write}_{I,\tau}^A &: (a,j,v)\mapsto a'
    \end{split}
\end{equation}
for $a\in A_{\texttt{array}(I,\tau)}$, $j\in A_I$, and $v\in A_\tau$,
where $a'[i]=a[i]$ if $i\neq j$ and $a'[j]=v$.
Intuitively, $\mathit{read}(a,j)$ denotes the value stored in array $a$ at index $j$, and $\mathit{write}(a,j,v)$ denotes the array obtained by updating $a$ with value $v$ stored at index $j$. 

We assume countably many variables and constants of each sort. The terms are constructed from variables and constants using operation symbols, and their sorts are defined in the usual way. %
A basic theory of arrays was first introduced by McCarthy \cite{Mc}, where the functions $\mathit{read}$ and $\mathit{write}$ are characterised by the following axioms.
For every $I\in\mathcal{I}$ and $\tau\in\calV$:
\begin{itemize}\item \textbf{Axiom 1 (read-over-write)}:
\begin{align*}\forall a:\texttt{array}(I,\tau), i,j:I,v:\tau.&(i=j\rightarrow\mathit{read}(\mathit{write}(a,i,v),j)=v)\\ &(i\neq j\rightarrow \mathit{read}(\mathit{write}(a,i,v),j)=\mathit{read}(a,j))\end{align*}
\item \textbf{Axiom 2 (extensionality)}:
\begin{align*}\forall a,b:\texttt{array}(I,\tau).(\forall i:I.\mathit{read}(a,i)=\mathit{read}(b,i))\rightarrow a=b
\end{align*}
\end{itemize} It should be pointed out that, essentially, the above two axioms do not constitute an equational specification as in \Cref{def-spec}. Instead, they are formulated in a first-order logical language. However, they are easy to understand for our purposes.

\subsection{Quantum Arrays}

Now let us examine quantum arrays through the lens of the quantisation of the classical ADT of arrays (see \Cref{sec-quantisation}).
Suppose $\mathit{QSig}=(S_q,\calK,\calU)$ is the quantum signature obtained from a quantisation $\eta$ of the classical signature for arrays. For $I\in\mathcal{I}$ and $\tau\in\calV$, write
\begin{equation*}
    I_q=\eta(I),\qquad \tau_q=\eta(\tau),\qquad
    \texttt{qarray}(I_q,\tau_q)=\eta(\texttt{array}(I,\tau)).
\end{equation*}
In particular, write $\texttt{qval}=\eta(\texttt{val})$. Then
\begin{equation*}
    S_q=\mathcal{I}_q\cup\calV_q,\qquad
    \mathcal{I}_q=\braces*{\eta(I):I\in\mathcal{I}},\qquad
    \calV_q=\braces*{\eta(\tau):\tau\in\calV}.
\end{equation*}
Let $A$ be the canonical algebra for the signature of classical arrays considered previously. We consider the following quantised algebra $\calA$ of $A$ along $\eta$. We define:
\begin{enumerate}
    \item
        $\calH_{\texttt{qval}}^{\calA}=\Span\braces*{\ket{v}:v\in A_{\texttt{val}}}$;
    \item
        for every $I\in\mathcal{I}$, $\calH_{I_q}^{\calA}=\Span\braces*{\ket{j}:j\in A_I}$; and
    \item
        for every $\tau\in\calV$, $\calH_{\tau_q}^{\calA}=\Span\braces*{\ket{t}:t\in A_\tau}$. Furthermore, for every $I\in\mathcal{I}$ and $\tau\in\calV$,
        \begin{equation}
            \label{eq:qarray-rec}
            \calH_{\texttt{qarray}(I_q,\tau_q)}^{\calA}=
            \Span\braces*{\ket{a}:a\in A_{\texttt{array}(I,\tau)}}=
            \bigotimes_{j\in A_I} \calH_{\tau_q}^{\calA},
        \end{equation}
        where we denote $\ket{a}=\bigotimes_{j\in A_I}\ket{a[j]}$, and the indices $j$ are ordered by keeping the designated $k_I(j)$ increasing (see \Cref{sec-quantisation}).
        By expanding the RHS of \Cref{eq:qarray-rec} recursively, if
        \begin{equation*}
            \tau=\texttt{array}(I_1,\texttt{array}(I_2,\ldots,\texttt{array}(I_n,\texttt{val})\ldots)),
        \end{equation*}
        then
        \begin{equation*}
            \calH_{\tau_q}^{\calA}=
            \bigotimes_{(j_1,\ldots,j_n)\in A_{I_1}\times\cdots\times A_{I_n}}
            \calH_{\texttt{qval}}^{\calA}.
        \end{equation*}
\end{enumerate}
Next, we explain how the $\mathit{read}$ and $\mathit{write}$ symbols in $\mathit{OP}$ are quantised to gate symbols in $\calU$. As in \Cref{sec-quantisation}, for each $I\in\mathcal{I}$ and $\tau\in\calV$,
we assume that the designated $0$ satisfies $0^{A}_{\texttt{array}(I,\tau)}[j]=0_{\tau}^A$ for any $j\in A_I$.
Let $k_I$ and $k_\tau$ be the designated bijections for $A_I$ and $A_\tau$, respectively. We recursively choose the designated bijection for the corresponding array sort to be
\begin{equation}
    \label{eq:k-array}    
    k_{\texttt{array}(I,\tau)}(a)
    =\sum_{j\in A_I}\abs*{A_\tau}^{k_I(j)}\cdot k_\tau(a[j]).
\end{equation}
For the concrete one-dimensional QRAM examples below, fix an index sort $\texttt{idx}\in\mathcal{I}$ and write
\begin{equation*}
    \texttt{qidx}=\eta(\texttt{idx}),\qquad
    \texttt{array}=\texttt{array}(\texttt{idx},\texttt{val}),\qquad
    \texttt{qarray}=\texttt{qarray}(\texttt{qidx},\texttt{qval}).
\end{equation*}

\subsubsection*{Quantum read}  
The operation $\mathit{read}_{I,\tau}^A$ can be quantised in the same way as the BO-quantisation in \Cref{def:bo-quantisation}:
\begin{enumerate}
    \item
        $U_{\mathit{read}}$, $\mathit{pre}_{\mathit{read}}$, and $\mathit{post}_{\mathit{read}}$ have the sort $\texttt{qarray}(I_q,\tau_q),I_q,\tau_q$.
    \item
        $\mathit{pre}_{\mathit{read}}^{\calA}=\mathit{post}_{\mathit{read}}^{\calA}=\Id$.
    \item
        $U_{\mathit{read}}^{\calA}\ket{a}\ket{j}\ket{v}=\ket{a}\ket{j}\ket{v\boxplus_\tau a[j]}$ for any $a\in A_{\texttt{array}(I,\tau)}$, $j\in A_I$, and $v\in A_\tau$.
\end{enumerate}
We simply use $\mathit{qread}$ to denote the gate $U_{\mathit{read}}$ for convenience.

What $\mathit{qread}^{\calA}$ effectively implements is a QRAM read operation.
Consider the case when $A_{\texttt{idx}}=\Z_N$, $A_{\texttt{val}}=\Z_M$, and their designated bijections are the identity maps.
Then, an array $a$ of sort $\texttt{array}$ stores $N$ classical values $a[0],a[1],\ldots,a[N-1]$ from $A_{\texttt{val}}$ at addresses $0,1,\ldots,N-1$.
In classical computing, RAM (Random Access Memory) allows one to access a memory cell by providing an address. In QRAM (Quantum Random Access Memory), the key idea is that the address itself can be in a quantum superposition. This means one can query many memory locations simultaneously in superposition. A QRAM read implements the following transformation:
\begin{equation}\label{QRM1}
    \sum_j\beta_j\ket{a[0],\ldots,a[j],\ldots,a[N-1]}\ket{j}\ket{0}\mapsto\sum_j\beta_j\ket{a[0],\ldots,a[j],\ldots,a[N-1]}\ket{j}\ket{a[j]},
\end{equation}
which is an instantiation of $\mathit{qread}^{\calA}$.
So, in one operation, we can ``load'' data from multiple addresses coherently.

\subsubsection*{Quantum write} 
The operation $\mathit{write}_{I,\tau}^A$ can be quantised in-place as follows:
\begin{itemize}
    \item
        $U_{\mathit{write}}$, $\mathit{pre}_{\mathit{write}}$, and $\mathit{post}_{\mathit{write}}$ have the sort $\texttt{qarray}(I_q,\tau_q),I_q,\tau_q$.
    \item
        $\mathit{pre}_{\mathit{write}}^{\calA}=\mathit{post}_{\mathit{write}}^{\calA}=\Id$.
    \item
        $U_{\mathit{write}}^{\calA}\ket{a}\ket{j}\ket{v}=\ket{a'}\ket{j}\ket{a[j]}$ for any $a\in A_{\texttt{array}(I,\tau)}$, $j\in A_I$, and $v\in A_\tau$, where $a'[i]=a[i]$ if $i\neq j$ and $a'[j]=v$.
\end{itemize}
We simply use $\mathit{qwrite}$ to denote the gate $U_{\mathit{write}}$ for convenience.

What $\mathit{qwrite}^{\calA}$ effectively implements is a QRAM swap operation.
Again, consider the case when $A_{\texttt{idx}}=\Z_N$, $A_{\texttt{val}}=\Z_M$, and their designated bijections are the identity maps.
Then, an array $a$ of sort $\texttt{array}$ stores $N$ classical values $a[0],a[1],\ldots,a[N-1]$ from $A_{\texttt{val}}$ at addresses $0,1,\ldots,N-1$.
A QRAM swap implements the following transformation:
\begin{equation}
    \sum_j\beta_j\ket{a[0],\ldots,a[j],\ldots,a[N-1]}\ket{j}\ket{v}\mapsto\sum_j\beta_j\ket{a[0],\ldots,v,\ldots,a[N-1]}\ket{j}\ket{a[j]}.
\end{equation} 
Therefore, in one operation, $\mathit{qwrite}$ can ``write'' data to multiple addresses coherently.

According to \Cref{thm:quantise}, by properly reexpressing \textbf{Axiom 1} and \textbf{Axiom 2} using equations we can derive specifications for the ADT of quantum arrays.
Beyond that, quantum arrays actually satisfy extra properties.
The following proposition shows that quantum array operations \textit{qread} and \textit{qwrite} satisfy an equation similar to the first part of the \textbf{Axiom 1} for classical array operations \textit{read} and \textit{write}. This property is specified as a quantum equation (see \Cref{def-qeq}). 
Suppose that for each sort $s$ in $\mathit{Sig}$,
$\mathit{SUM}_s:s_q,s_q$ is a gate symbol in $\mathit{QSig}$ and $\mathit{SUM}_s^{\calA}\ket{a}\ket{b}=\ket{a}\ket{a\boxplus_s b}$ for any $a,b\in A_s$.

\begin{proposition}\label{prop-array} The algebra of quantum arrays with gates $\mathit{qread}$ and $\mathit{qwrite}$ satisfies the following specification:
\begin{equation}
    \label{eq:qarray-spec}
    \begin{split}
&\forall q_a:\textnormal{\texttt{qarray}},
q_i:\textnormal{\texttt{qidx}},
q_v,q_u:\textnormal{\texttt{qval}}\\
        &\qquad\quad \parens*{\mathit{qwrite}[q_a,q_i,q_u];\mathit{qread}[q_a,q_i,q_v]}=(\mathit{SUM}[q_u,q_v];\mathit{qwrite}[q_a,q_i,q_u])
    \end{split}
\end{equation} 
\end{proposition}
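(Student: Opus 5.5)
Both sides of \Cref{eq:qarray-spec} are quantum circuit formulas with no initialised variables, so $\overline{q}=\emptyset$. The free variables are $\overline{r}=q_a,q_i,q_u,q_v$, and the principal variables are all of them. By \Cref{def:validity}, it therefore suffices to show that the two circuits evaluate to the same unitary operator on $\calH_{\texttt{qarray}}\otimes\calH_{\texttt{qidx}}\otimes\calH_{\texttt{qval}}\otimes\calH_{\texttt{qval}}$. Indeed, if $\mathit{eval}(C_1)=\mathit{eval}(C_2)$, then $\mathit{eval}_{\ket{\varphi}}(C_1)=\mathit{eval}_{\ket{\varphi}}(C_2)$ for every $\ket{\varphi}$, and the reduced density operators on any $\overline{p}$ agree trivially. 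Both sides are unitary, hence linear. So I only need to check that they agree on the orthonormal product basis $\ket{a}_{q_a}\ket{j}_{q_i}\ket{u}_{q_u}\ket{v}_{q_v}$ with $a\in A_{\texttt{array}}$, $j\in A_{\texttt{idx}}$ and $u,v\in A_{\texttt{val}}$. This basis is available by clause~(1) of \Cref{def:quantisation-of-algebra} together with \Cref{eq:qarray-rec}.

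Next I compute the left-hand side on a basis state. Recall $\mathit{eval}(C_1;C_2)=\mathit{eval}(C_2)\circ\mathit{eval}(C_1)$, so $\mathit{qwrite}$ acts first. By definition, $\mathit{qwrite}[q_a,q_i,q_u]$ sends the basis state to $\ket{a'}\ket{j}\ket{a[j]}\ket{v}$, where $a'$ agrees with $a$ except that $a'[j]=u$. Then $\mathit{qread}[q_a,q_i,q_v]$ adds $a'[j]=u$ into $q_v$. This gives
\[
\ket{a'}_{q_a}\ket{j}_{q_i}\ket{a[j]}_{q_u}\ket{v\boxplus u}_{q_v}.
\]

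Now the right-hand side. First $\mathit{SUM}[q_u,q_v]$ maps $\ket{u}_{q_u}\ket{v}_{q_v}$ to $\ket{u}_{q_u}\ket{u\boxplus v}_{q_v}$. Then $\mathit{qwrite}[q_a,q_i,q_u]$ gives
\[
\ket{a'}_{q_a}\ket{j}_{q_i}\ket{a[j]}_{q_u}\ket{u\boxplus v}_{q_v}.
\]

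The only remaining point is to identify $v\boxplus u$ with $u\boxplus v$. This holds because $\boxplus$ is defined in \Cref{eq:boxplus_def} via addition in $\Z_{\abs{A_{\texttt{val}}}}$, which is commutative. So the two images coincide on every basis state, and the two unitaries are equal.

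I expect no substantial obstacle; the argument is a direct basis calculation. The points needing care are bookkeeping. First, the order of composition in $\mathit{eval}$. Second, the ordering of subsystems: each gate acts on the named variables and is implicitly tensored with the identity elsewhere. Third, noting that $\mathit{pre}=\mathit{post}=\Id$ for both $\mathit{qread}$ and $\mathit{qwrite}$, so only the $U$ gates appear in the circuits.
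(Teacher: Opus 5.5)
Your proposal is correct and follows the paper's proof. You reduce by linearity to computational-basis states, evaluate both circuits in the order fixed by $\mathit{eval}(C_1;C_2)=\mathit{eval}(C_2)\circ\mathit{eval}(C_1)$, and observe that both sides give $\ket{a'}_{q_a}\ket{j}_{q_i}\ket{a[j]}_{q_u}$ with $q_v$ holding $v$ combined with $u$ under $\boxplus$. Your explicit appeal to commutativity of $\boxplus$, inherited from addition in $\Z_{\abs{A_{\texttt{val}}}}$, makes precise a step the paper uses silently when it writes the output of $\mathit{SUM}[q_u,q_v]$ as $v\boxplus u$.
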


However, a natural quantum analogue of \textbf{Axiom 2} is not true.
Let us consider the simplest case when $A_{\texttt{idx}}=A_{\texttt{val}}=\Z_2$.
Even if we restrict attention to the principal variable, the equation $$\parens*{\mathit{qread}[q_a,q_i,q_v]\ket{\Phi}=\mathit{qread}[q_a,q_i,q_v]\ket{\Psi}}\Downarrow q_v$$ does not imply $\ket{\Phi}=\ket{\Psi}$, because the state of a composite quantum system is not uniquely determined by the states of its subsystems, which is indeed one of the fundamental differences between classical and quantum systems.

\section{Illustrative Application II: Quantum Error Correction}
\label{sec-qec}

Quantum error correction is fundamental to fault-tolerant quantum computing~\cite{KnillLaflamme,gottesman2024surviving}. To further illustrate the framework developed in \Cref{sec-QDT,sec-quantisation}, we show how a quantum error-correcting code can be described as a quantisation of a classical error-correcting code. In particular, we first consider the quantisation of a classical $3$-bit repetition code and then show how its $9$-bit extension gives rise to the Shor code.

\subsection{Classical Repetition Code}
\label{sub:classical-repetition-code}

We first describe a classical repetition code as an ADT.
Consider the signature $\mathit{Sig}=(S,\mathit{OP})$, where $S=\braces*{\texttt{lbit},\texttt{pbit},\texttt{idx},\texttt{array}}$,  $\texttt{lbit}$ and $\texttt{pbit}$ denote sorts of logical bits and physical bits, respectively.
Following \Cref{sec-6150148}, $\texttt{idx}$ is an index sort, and $\texttt{array}=\texttt{array}(\texttt{idx},\texttt{pbit})$ is the sort for an array of physical bits. For simplicity, consider the canonical algebra $A$ of $\mathit{Sig}$ for the $3$-bit repetition code.
It gives:
\begin{equation*}
    A_{\texttt{lbit}}=A_{\texttt{pbit}}=\braces*{0,1},
    \qquad A_{\texttt{idx}}=\braces*{0,1,2},
    \qquad A_{\texttt{array}}=A_{\texttt{idx}}\rightarrow A_{\texttt{pbit}}.
\end{equation*}
Thus, $A_{\texttt{array}}$ consists of arrays of $3$ physical bits indexed by $\braces{0,1,2}$. For the operation symbols, let $\mathit{OP}$ contain the following symbols:
\begin{align*}
    \mathit{enc}&:\texttt{lbit}\rightarrow\texttt{array},\\
    \mathit{flip}&:\texttt{array},\texttt{idx}\rightarrow\texttt{array},\\
    \mathit{corr}&:\texttt{array}\rightarrow\texttt{array}.
\end{align*}
For $a=a[0],a[1],a[2]\in A_{\texttt{array}}$, define
\begin{equation*}
    \mathit{maj}(a)=\floor*{\frac{\sum_{i=0}^2 a[i]}{2}}
\end{equation*}
to be the majority of $a[0],a[1],a[2]$.

In the canonical algebra $A$, the operations are interpreted as
\begin{align*}
    \mathit{enc}^{A}&: x\mapsto x,x,x\\
    \mathit{flip}^{A}&: (a,i)\mapsto a'\\
    \mathit{corr}^{A}&: a\mapsto \mathit{maj}(a),\mathit{maj}(a),\mathit{maj}(a),
\end{align*}
where $a'[j]=a[j]$ if $j\neq i$ and $a'[i]=a[i]\oplus 1$.
Intuitively, $\mathit{enc}(x)$ is the code word that encodes $x$ by repeating it three times, $\mathit{flip}(a,i)$ flips the physical bit $a[i]$ in the array $a$, and $\mathit{corr}(a)$ produces a code word by replacing every entry of $a$ with $\mathit{maj}(a)$. Then, the algebra $A$ satisfies the following equations:

\begin{equation}
    \label{eq:classical-repetition-correction}
    \begin{aligned}
    \forall x:\texttt{lbit}.\quad
        &\mathit{corr}(\mathit{enc}(x))=\mathit{enc}(x),\\
    \forall x:\texttt{lbit},\,i:\texttt{idx}.\quad
        &\mathit{corr}(\mathit{flip}(\mathit{enc}(x),i))=\mathit{enc}(x).
    \end{aligned}
\end{equation}
The first equation states that $\mathit{corr}$ leaves a code word unchanged.
The second equation states that it recovers the code word after a bit flip on any physical bit.

\subsection{Quantum Repetition Code}
\label{sub:quantum-repetition-code}

Now let us show  the $3$-qubit repetition code as a quantisation of the classical $3$-bit repetition code.
Suppose $\mathit{QSig}=(S_q,\calK,\calU)$ is the quantum signature obtained from a quantisation $\eta$ of the classical signature $\mathit{Sig}$ in \Cref{sub:classical-repetition-code}.
Let
\begin{equation*}
    \texttt{lqbit}=\eta(\texttt{lbit}),\qquad
    \texttt{pqbit}=\eta(\texttt{pbit}),\qquad
    \texttt{qidx}=\eta(\texttt{idx}),\qquad
    \texttt{qarray}=\eta(\texttt{array}).
\end{equation*}
In addition, let $\texttt{qsyn}$ be an ancilla sort for syndromes. Then
\begin{equation*}
    S_q=\braces{\texttt{lqbit},\texttt{pqbit},\texttt{qidx},\texttt{qarray},\texttt{qsyn}}.
\end{equation*}

Consider a quantum algebra $\calA$ with
\begin{align*}
    \calH_{\texttt{lqbit}}^{\calA}
    &=\calH_{\texttt{pqbit}}^{\calA}=\Span\braces*{\ket{0},\ket{1}},\qquad
    \calH_{\texttt{qidx}}^{\calA}
    =\Span\braces*{\ket{0},\ket{1},\ket{2}},\\
    \calH_{\texttt{qarray}}^{\calA}
    &=\bigotimes_{i=0}^2\calH_{\texttt{pqbit}}^{\calA},\qquad\qquad\qquad\quad
    \calH_{\texttt{qsyn}}^{\calA}
    =\Span\braces*{\ket{0},\ket{1},\ket{2},\ket{\bot}}.
\end{align*}
Here, $\ket{\bot}$ is the designated $\ket{0_{\texttt{qsyn}}}$, which is a computational-basis state orthogonal to $\ket{0},\ket{1},\ket{2}$.
As in \Cref{sec-app}, we denote $\ket{a}=\ket{a[0],a[1],a[2]}$.
In particular, the designated $\ket{0_{\texttt{qarray}}}=\ket{0,0,0}$.
We choose the designated functions $k_{\texttt{lbit}}$, $k_{\texttt{pbit}}$, and $k_{\texttt{idx}}$ to be identity functions, and define $k_{\texttt{array}}$ as in \Cref{eq:k-array}.

Next, we explain how the operation symbols $\mathit{enc}$, $\mathit{flip}$, and $\mathit{corr}$ are quantised to gate symbols in $\calU$.

\subsubsection*{Quantum encoding}

The operation $\mathit{enc}^{A}$ can be quantised out-of-place as follows:
\begin{enumerate}
    \item
    $U_{\mathit{enc}}$, $\mathit{pre}_{\mathit{enc}}$, and $\mathit{post}_{\mathit{enc}}$ have sort $\texttt{lqbit},\texttt{qarray}$.
    \item
    $\mathit{pre}_{\mathit{enc}}^{\calA} =\mathit{post}_{\mathit{enc}}^{\calA}=\Id$.
    \item
    $U_{\mathit{enc}}^{\calA}\ket{x}\ket{0,0,0} =\ket{0}\ket{x,x,x}$ for any $x\in A_{\texttt{lbit}}$.
\end{enumerate}
We simply use $\mathit{qenc}$ to denote the gate $U_{\mathit{enc}}$ for convenience.

Compared with the classical $\mathit{enc}$ in \Cref{sub:classical-repetition-code}, the gate $\mathit{qenc}$ encodes $\ket{x}$ as $\ket{x,x,x}$ in the computational basis and resets the logical bit to $\ket{0}$. Note that when the logical qubit is initially in superposition, the physical qubits become entangled.

\subsubsection*{Quantum flip}

The operation $\mathit{flip}^{A}$ can be quantised in-place as follows:
\begin{enumerate}
    \item
    $U_{\mathit{flip}}$, $\mathit{pre}_{\mathit{flip}}$, and $\mathit{post}_{\mathit{flip}}$ have sort $\texttt{qarray},\texttt{qidx}$.
    \item
    $\mathit{pre}_{\mathit{flip}}^{\calA} =\mathit{post}_{\mathit{flip}}^{\calA}=\Id$.
    \item
    $U_{\mathit{flip}}^{\calA}\ket{a}\ket{i} =\ket{\mathit{flip}^{A}(a,i)}\ket{i}$ for any $a\in A_{\texttt{array}}$ and $i\in A_{\texttt{idx}}$.
\end{enumerate}
As usual, we simply use $\mathit{qflip}$ to denote the gate $U_{\mathit{flip}}$ for convenience.

Note that $\mathit{qflip}$ implements the classical operation $\mathit{flip}$ in the computational basis. It leaves the index unchanged and applies a Pauli $X$ gate to the physical qubit indexed by $i$.

\subsubsection*{Quantum correction}

The operation $\mathit{corr}^{A}$ can be quantised in-place as follows:
\begin{enumerate}
    \item
    $U_{\mathit{corr}}$, $\mathit{pre}_{\mathit{corr}}$, and $\mathit{post}_{\mathit{corr}}$ have sort $\texttt{qarray},\texttt{qsyn}$.
    \item
    $\mathit{pre}_{\mathit{corr}}^{\calA} =\mathit{post}_{\mathit{corr}}^{\calA}=\Id$.
    \item
    $U_{\mathit{corr}}^{\calA}\ket{a}\ket{\bot} =\ket{\mathit{corr}^{A}(a)} \ket{\mathit{syn}(a)}$ for any $a\in A_{\texttt{array}}$, where
    \begin{equation}
        \label{eq:syn-repetition}
        \mathit{syn}(a)=
        \begin{cases}
            \bot, & a=\mathit{corr}^{A}(a),\\
            j, & \textup{$j$ is the unique index such that }a[j]\neq \mathit{maj}(a).
        \end{cases}
    \end{equation}
\end{enumerate}
As usual, we simply use $\mathit{qcorr}$ to denote the gate $U_{\mathit{corr}}$ for convenience.

Intuitively, $\mathit{qcorr}$ implements the classical $\mathit{corr}$ in the computational basis and produces the syndrome $\mathit{syn}(a)$.
In the computational basis, when $a$ is a code word, the syndrome is $\bot$; otherwise, it is the index of the unique physical bit that differs from $\mathit{maj}(a)$, which is the error location when $a$ is obtained from a bit-flip on a code word.
Note that $\mathit{qcorr}$ coherently performs the syndrome extraction and error correction together.

According to \Cref{thm:quantise}, the equations in \Cref{eq:classical-repetition-correction} can be lifted to a specification of the quantum repetition code. Here, we give a direct specification that the quantum repetition code corrects a single Pauli $X$ error.

\begin{proposition}
    \label{prop:qec-repetition}
    The quantum algebra $\calA$ with gates $\mathit{qenc}$, $\mathit{qflip}$, and $\mathit{qcorr}$ satisfies the following specification:
    \begin{align}
    \label{eq:qec-repetition-spec-1}
        \begin{split}
        &\forall q_l:\texttt{lqbit}.\quad \Big( \mathit{qenc}[q_l,q_a]; \mathit{qcorr}[q_a,q_s] \ket{0}_{q_a}\ket{\bot}_{q_s}
        = \mathit{qenc}[q_l,q_a] \ket{0}_{q_a}\ket{\bot}_{q_s} \Big)\Downarrow q_a,
        \end{split}
        \\
    \label{eq:qec-repetition-spec-2}
        \begin{split}
        &\forall q_l:\texttt{lqbit},\,q_i:\texttt{qidx}.\quad
        \Big( \mathit{qenc}[q_l,q_a]; \mathit{qflip}[q_a,q_i]; \mathit{qcorr}[q_a,q_s] \ket{0}_{q_a}\ket{\bot}_{q_s} \\[-1mm]
        &\qquad\qquad\qquad\qquad\qquad\qquad = \mathit{qenc}[q_l,q_a] \ket{0}_{q_a}\ket{\bot}_{q_s} \Big)\Downarrow q_a.
        \end{split}
    \end{align}
\end{proposition}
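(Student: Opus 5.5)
The plan is to verify both equations of \Cref{prop:qec-repetition} by direct computation, following \Cref{def:validity}. For each equation I take an arbitrary (possibly entangled) input state on the free variables, evaluate both sides, and compare the reduced density operators on the principal variable $q_a$. By linearity it suffices to track the action of each gate on computational-basis states, since $\mathit{qenc}^{\calA}$, $\mathit{qflip}^{\calA}$ and $\mathit{qcorr}^{\calA}$ are specified on exactly the basis inputs that occur.

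For \Cref{eq:qec-repetition-spec-1} the only free variable is $q_l$, so the input is $\ket{\varphi}=\alpha\ket{0}+\beta\ket{1}$.
\begin{itemize}
\item Applying $\mathit{qenc}$ to $\ket{\varphi}_{q_l}\ket{0,0,0}_{q_a}\ket{\bot}_{q_s}$ gives $\ket{0}_{q_l}\parens*{\alpha\ket{000}+\beta\ket{111}}_{q_a}\ket{\bot}_{q_s}$.
\item On a code word $a=x,x,x$ we have $\mathit{corr}^A(a)=a$, so $\mathit{syn}(a)=\bot$ by \Cref{eq:syn-repetition}. Hence $\mathit{qcorr}$ acts as the identity on each branch.
\item The two sides therefore produce the same global state, and in particular the same reduced state on $q_a$.
\end{itemize}

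For \Cref{eq:qec-repetition-spec-2} the free variables are $q_l,q_i$, so the input is a general state $\sum_{x\in\{0,1\},\,i\in\{0,1,2\}}\gamma_{x,i}\ket{x}_{q_l}\ket{i}_{q_i}$, which may be entangled.
\begin{itemize}
\item On the left-hand side, $\mathit{qenc}$ produces $\sum\gamma_{x,i}\ket{0}\ket{xxx}\ket{i}\ket{\bot}$.
\item Then $\mathit{qflip}$ replaces $\ket{xxx}$ by $\ket{\mathit{flip}^A(xxx,i)}$.
\item For the flipped word we have $\mathit{maj}=x$, and index $i$ is the unique position that disagrees with the majority. So $\mathit{qcorr}$ maps $\ket{\mathit{flip}^A(xxx,i)}\ket{\bot}$ to $\ket{xxx}\ket{i}$.
\item The left-hand output is therefore $\sum_{x,i}\gamma_{x,i}\ket{0}_{q_l}\ket{xxx}_{q_a}\ket{i}_{q_i}\ket{i}_{q_s}$.
\item The right-hand output is $\sum_{x,i}\gamma_{x,i}\ket{0}_{q_l}\ket{xxx}_{q_a}\ket{i}_{q_i}\ket{\bot}_{q_s}$.
\end{itemize}
Tracing out $q_l,q_i,q_s$ gives, in both cases,
\[\sum_{x,x'}\Big(\sum_i\gamma_{x,i}\overline{\gamma_{x',i}}\Big)\ket{xxx}\!\bra{x'x'x'}.\]
On the left, the syndrome register carries $\ket{i}$, which is perfectly correlated with the already-present $\ket{i}_{q_i}$. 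The orthogonality relations after tracing are therefore exactly the ones already imposed by $q_i$. On the right, the syndrome register is the constant $\ket{\bot}$ and contributes nothing.

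The main point needing care is the second equation, where $q_l$ and $q_i$ may be entangled. One must observe that the syndrome written by $\mathit{qcorr}$ is a function of $i$ alone and not of $x$. Consequently it decoheres nothing beyond what tracing out $q_i$ already does, and the logical superposition over $x$ is preserved. I would state this explicitly as the key step, noting that if the syndrome depended on $x$ the equation would fail. Everything else is routine bookkeeping with \Cref{def:validity}.
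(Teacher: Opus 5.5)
Your proposal is correct and follows essentially the same route as the paper's proof. You evaluate both sides on a general (possibly entangled) input, using that $\mathit{qcorr}$ returns a code word with syndrome $\bot$ and a singly flipped code word with syndrome $i$, and then trace out $q_l,q_i,q_s$ to obtain the same reduced state $\sum_i\sum_{x,x'}\gamma_{x,i}\overline{\gamma_{x',i}}\ket{xxx}\!\bra{x'x'x'}$ on $q_a$. Your explicit remark that the syndrome depends only on $i$ and not on $x$ is a useful addition: it makes precise why the paper's computation works.
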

The first equation states that $\mathit{qcorr}$ leaves a quantum code word unchanged. The second equation states that it recovers the quantum code word after $\mathit{qflip}$ introduces a Pauli $X$ error on any physical qubit. In both equations, $q_a$ is the only principal variable.

\subsection{Shor Code}

The $3$-qubit quantum repetition code in \Cref{sub:quantum-repetition-code} can only correct a Pauli $X$ error.
Inspired by the repetition code, the Shor code~\cite{ShorQEC} is the first quantum error-correcting code that can correct an arbitrary single-qubit error.
Let us examine the $9$-qubit Shor code as a quantisation of a $9$-bit repetition code.
We reuse the classical signature $\mathit{Sig}$ in \Cref{sub:classical-repetition-code} with slight modifications:
the operation symbol $\mathit{flip}$ in \Cref{sub:classical-repetition-code} is replaced by two operation symbols $\mathit{flipX},\mathit{flipZ}:\texttt{array},\texttt{idx}\rightarrow\texttt{array}$ of the same sort.

Consider the canonical algebra $B$ for a $9$-bit repetition code. It gives
\begin{equation*}
    B_{\texttt{lbit}}=B_{\texttt{pbit}}=\braces*{0,1},
    \qquad B_{\texttt{idx}}=\braces*{0,1,\ldots,8},
    \qquad B_{\texttt{array}}=B_{\texttt{idx}}\rightarrow B_{\texttt{pbit}}.
\end{equation*}
For $a\in B_{\texttt{array}}$ and $r\in\braces*{0,1,2}$, we define
\begin{equation*}
    m(a)[r] = \floor*{ \frac{\sum_{t=0}^{2}a[3r+t]}{2} },
    \qquad    \mathit{maj}(a) = \floor*{ \frac{\sum_{r=0}^{2}m(a)[r]}{2} }.
\end{equation*}
Intuitively, the $9$ physical bits of $a$ are split into three groups,
where the $r^{\textup{th}}$ group consists of $a[3r],a[3r+1],a[3r+2]$.
The value $m(a)[r]$ is the majority of the $r^{\textup{th}}$ group, while $\mathit{maj}(a)$ is the majority of $m(a)[0],m(a)[1],m(a)[2]$.

In the algebra $B$, the operations are interpreted as 
\begin{align*}
    \mathit{enc}^{B}(x)[i]&=x,\\
    \mathit{flipX}^{B}(a,i)[j] &= \mathit{flipZ}^{B}(a,i)[j] =
    \begin{cases}
        a[j]\oplus 1, &j=i,\\
        a[j], &j\neq i,
    \end{cases},\\
    \mathit{corr}^{B}(a)[i]&=\mathit{maj}(a),
\end{align*}
where $x\in B_{\texttt{lbit}}$, $a\in B_{\texttt{array}}$, and $i,j\in B_{\texttt{idx}}$.
Here, although both $\mathit{flipX}$ and $\mathit{flipZ}$ are interpreted as a bit flip error in algebra $B$,
their quantisations will correspond to different quantum errors later.

Similar to the algebra $A$ for the $3$-bit repetition code (see \Cref{sub:classical-repetition-code}), the algebra $B$ satisfies
\begin{equation}
    \label{eq:classical-shor-correction}
    \begin{aligned}
        \forall x:\texttt{lbit}.\quad
        &\mathit{corr}(\mathit{enc}(x))=\mathit{enc}(x),\\
        \forall x:\texttt{lbit},\,i:\texttt{idx}.\quad
        &\mathit{corr}(\mathit{flipX}(\mathit{enc}(x),i))
        =\mathit{enc}(x),\\
        \forall x:\texttt{lbit},\,i:\texttt{idx}.\quad
        &\mathit{corr}(\mathit{flipZ}(\mathit{enc}(x),i))
        =\mathit{enc}(x).
    \end{aligned}
\end{equation}

Suppose $\mathit{QSig}=(S_q,\calK,\calU)$ is the quantum signature
obtained from a quantisation $\eta$ of this signature. Let
\begin{equation*}
    \texttt{lqbit}=\eta(\texttt{lbit}),\qquad
    \texttt{pqbit}=\eta(\texttt{pbit}),\qquad
    \texttt{qidx}=\eta(\texttt{idx}),\qquad
    \texttt{qarray}=\eta(\texttt{array}).
\end{equation*}
In addition, let $\texttt{qsyn}$ be an ancilla sort for syndromes, with designated state $\ket{0_{\texttt{qsyn}}}=\ket{\bot,\bot,\bot,\bot}$.
Consider a quantum algebra $\calB$ with
\begin{align*}
    \calH_{\texttt{lqbit}}^{\calB}
    &=\calH_{\texttt{pqbit}}^{\calB}
    =\Span\braces*{\ket{0},\ket{1}},\qquad
    \calH_{\texttt{qidx}}^{\calB}
    =\Span\braces*{\ket{0},\ket{1},\ldots,\ket{8}},\\
    \calH_{\texttt{qarray}}^{\calB}
    &=\bigotimes_{i=0}^{8}\calH_{\texttt{pqbit}}^{\calB},\qquad\qquad
    \calH_{\texttt{qsyn}}^{\calB}
    =\Span\braces*{\ket{s}:s\in \braces{0,1,2,\bot}^4}.
\end{align*}
As usual, we denote $\ket{a}=\ket{a[0],\ldots,a[8]}$.
In particular, the designated state $\ket{0_{\texttt{qarray}}}=\ket{0}^{\otimes 9}$.
For a single-qubit unitary operator $U$, let $U_i=\Id^{\otimes i}\otimes U\otimes \Id^{\otimes (8-i)}$, which applies $U$ on the qubit indexed by $i$.
The designated functions $k_{\texttt{lbit}}$, $k_{\texttt{pbit}}$, $k_{\texttt{idx}}$, and $k_{\texttt{array}}$ are chosen as in \Cref{sub:quantum-repetition-code}.

Next, we explain how the operation symbols $\mathit{enc}$, $\mathit{flipX}$, $\mathit{flipZ}$, and $\mathit{corr}$ are quantised to gate symbols in $\calU$.

\subsubsection*{Quantum encoding}

For $x\in\braces*{0,1}$, let
\begin{equation}
    \label{eq:shor-code-word}
    \ket{\Psi_x} = \parens*{ \frac{\ket{000}+(-1)^x\ket{111}}{\sqrt{2}} }^{\otimes3}.
\end{equation}

The operation $\mathit{enc}^{B}$ can be quantised out-of-place as follows:
\begin{enumerate}
    \item
    $U_{\mathit{enc}}$, $\mathit{pre}_{\mathit{enc}}$, and
    $\mathit{post}_{\mathit{enc}}$ have sort
    $\texttt{lqbit},\texttt{qarray}$.
    \item
    $\mathit{pre}_{\mathit{enc}}^{\calB}=\Id$, and $\mathit{post}_{\mathit{enc}}^{\calB} \ket{0}\ket{\Psi_x} = \ket{0}\ket{\mathit{enc}^{B}(x)}$ for any $x\in B_{\texttt{lbit}}$.
    \item
    $U_{\mathit{enc}}^{\calB} \ket{x}\ket{0}^{\otimes9} = \ket{0}\ket{\Psi_x}$ for any $x\in B_{\texttt{lbit}}$.
\end{enumerate}
We simply use $\mathit{qenc}$ to denote the gate $U_{\mathit{enc}}$ for convenience.

Compared with \Cref{sub:quantum-repetition-code}, the gate $\mathit{qenc}$ here encodes the computational-basis $\ket{x}$ into a Shor code word $\ket{\Psi_x}$. If the logical qubit is initially in superposition, the physical qubits are in the corresponding superposition of $\ket{\Psi_0}$ and $\ket{\Psi_1}$.

\subsubsection*{Quantum flip}

The operations $\mathit{flipX}^{B}$ and $\mathit{flipZ}^B$ can be quantised in-place as follows:
\begin{enumerate}
    \item
    $U_{\mathit{flipX}}$, $U_{\mathit{flipZ}}$, $\mathit{pre}_{\mathit{flipX}}$, $\mathit{pre}_{\mathit{flipZ}}$, $\mathit{post}_{\mathit{flipX}}$, and $\mathit{post}_{\mathit{flipZ}}$ have sort
    $\texttt{qarray},\texttt{qidx}$.
    \item
    $\mathit{pre}_{\mathit{flipX}}^{\calB}
    =\mathit{post}_{\mathit{flipX}}^{\calB}=\Id$ and 
    $\mathit{pre}_{\mathit{flipZ}}^{\calB}
    =\mathit{post}_{\mathit{flipZ}}^{\calB}=(\prod_{i=0}^8 H_i)\otimes \Id$,
    \item
    $U_{\mathit{flipX}}^{\calB} =\sum_{i=0}^8 X_i\otimes \ket{i}\!\bra{i}$ and 
    $U_{\mathit{flipZ}}^{\calB} =\sum_{i=0}^8 Z_i\otimes \ket{i}\!\bra{i}$.
\end{enumerate}
We simply use $\mathit{qflipX}$ and $\mathit{qflipZ}$ to denote the gates $U_{\mathit{flipX}}$ and $U_{\mathit{flipZ}}$, respectively, for convenience.

The gates $\mathit{qflipX}$ and $\mathit{qflipZ}$ leave the index $\ket{i}$ unchanged and apply Pauli $X$ and $Z$ gates, respectively, on the physical qubit indexed by $i$. 
Thus, they introduce the corresponding Pauli errors.

\subsubsection*{Quantum correction}

For $a\in B_{\texttt{array}}$, we define
$\mathit{syn}(a)\in\braces*{0,1,2,\bot}^{4}$ as follows.
For $r\in\braces*{0,1,2}$,
\begin{equation*}
    \mathit{syn}(a)[r]
    =
    \begin{cases}
        \bot, &a[3r+t]=m(a)[r] \text{ for every }t\in\braces*{0,1,2},\\
        t, &\textup{$t$ is the unique index such that }a[3r+t]\neq m(a)[r].
    \end{cases}
\end{equation*}
Further, we define
\begin{equation*}
    \mathit{syn}(a)[3] =
    \begin{cases}
        \bot, &m(a)[r]=\mathit{maj}(a) \text{ for every }r\in\braces*{0,1,2},\\
        r, &\textup{$r$ is the unique index such that }m(a)[r]\neq\mathit{maj}(a).
    \end{cases}
\end{equation*}
Intuitively, $\mathit{syn}(a)$ is the syndrome produced for an arbitrary $a$. 
For each $r\in \braces{0,1,2}$, if there is an entry in the $r^{\textup{th}}$ group of $a$ that differs from $m(a)[r]$, then $\mathit{syn}(a)[r]$ records the index of this entry; otherwise, it records $\bot$. Similarly, if an entry $m(a)[r]$ differs from $\mathit{maj}(a)$, then $\mathit{syn}(a)[3]$ records the index $r$; otherwise, it records $\bot$. 
Consequently, it is easy to check that $a\mapsto(\mathit{maj}(a),\mathit{syn}(a))$ is a bijection.

For $P\in \braces{\Id, X,Y,Z}$, $i=3r+t$, and $k\in \braces{0,1,2,3}$, where $r,t\in\braces*{0,1,2}$, let us define $\sigma(P,i)\in\braces*{0,1,2,\bot}^{4}$ by
\begin{equation}
    \label{eq:sigma}
    \sigma(P,i)[k] =
    \begin{cases}
        t, &k=r\text{ and }P\in\braces*{X,Y},\\
        r, &k=3\text{ and }P\in\braces*{Z,Y},\\
        \bot, &o.w.
    \end{cases}
\end{equation}
Intuitively, $\sigma(P,i)$ is the syndrome needed to correct the Pauli error $P_i$. 
Suppose we start with a Shor code word.
A Pauli $X$ error changes one physical qubit, so its syndrome records the information about $i=3r+t$.
A Pauli $Z$ error on any of the three physical qubits in the same group $r$ has the same effect, so its syndrome only records the $r$. A Pauli $Y$ error combines $X$ and $Z$, so its syndrome contains both kinds of information.

Let $\braces{\ket{\Phi_{x,s}}:x\in \braces{0,1},s\in \braces*{0,1,2,\bot}^4}$ be an orthonormal basis of $\calH_{\texttt{qarray}}^{\calB}$ such that
\begin{equation}
    \label{eq:def-Phi}
    \ket{\Phi_{x,\sigma(P,i)}} = P_i\ket{\Psi_x}
\end{equation}
for $x\in B_{\texttt{lbit}}$, $P\in \braces{\Id, X,Y,Z}$, and $i\in B_{\texttt{idx}}$.
The existence of such a basis is proved in \Cref{lem:exist-Phi}.
Here, in $\ket{\Phi_{x,s}}$, $x$ records the logical bit and $s$ records the syndrome. 

The operation $\mathit{corr}^{B}$ can be quantised in-place as follows:
\begin{enumerate}
    \item
    $U_{\mathit{corr}}$, $\mathit{pre}_{\mathit{corr}}$, and
    $\mathit{post}_{\mathit{corr}}$ have sort
    $\texttt{qarray},\texttt{qsyn}$.

    \item
    $\mathit{pre}_{\mathit{corr}}^{\calB} \ket{a}\ket{\bot,\bot,\bot,\bot} = \ket{\Phi_{\mathit{maj}(a),\mathit{syn}(a)}} \ket{\bot,\bot,\bot,\bot}$ and
    $\mathit{post}_{\mathit{corr}}^{\calB} \ket{\Psi_x}\ket{s} = \ket{\mathit{enc}^{B}(x)}\ket{s}$ for every $x\in B_{\texttt{lbit}}$, $a\in B_{\texttt{array}}$, and $s\in \braces{0,1,2,\bot}^4$.

    \item
    $ U_{\mathit{corr}}^{\calB} \ket{\Phi_{x,s}} \ket{\bot,\bot,\bot,\bot} = \ket{\Psi_x}\ket{s}$ for every $x\in B_{\texttt{lbit}}$ and $s\in\braces*{0,1,2,\bot}^{4}$.
\end{enumerate}
We simply use $\mathit{qcorr}$ to denote the gate $U_{\mathit{corr}}$ for convenience.

Intuitively, $\mathit{pre}_{\mathit{corr}}$ implements the bijection $a \mapsto (\mathit{maj}(a),\mathit{syn}(a))$ by mapping $\ket{a}$ to $\ket{\Phi_{\mathit{maj}(a),\mathit{syn}(a)}}$, while leaving the initial syndrome in $q_s$ unchanged.
The quantum gate $\mathit{qcorr}$ maps $\ket{\Phi_{x,s}}$ to the corrected Shor code word $\ket{\Psi_x}$ and records the syndrome in $q_s$.
Finally, $\mathit{post}_{\mathit{corr}}$ maps $\ket{\Psi_x}$ to the computational-basis repetition code word $\ket{\mathit{enc}^B(x)}$.

As in \Cref{sec-app}, we can derive specifications for the Shor code by lifting \Cref{eq:classical-repetition-correction} using \Cref{thm:quantise}.
Beyond that, the Shor code actually satisfies stronger error-correction properties.
The following equations specify that the Shor code corrects a single Pauli $X$, $Z$, or $Y$ error.

\begin{proposition}
    \label{prop:qec-shor}
    The quantum algebra $\calB$ with gates $\mathit{qenc}$,
    $\mathit{qflipX}$, $\mathit{qflipZ}$, and $\mathit{qcorr}$
    satisfies the following specification:
    \begin{align}
    \label{eq:qec-shor-spec-1}
        \begin{split}
        &\forall q_l:\texttt{lqbit}.\quad \Big( \mathit{qenc}[q_l,q_a]; \mathit{qcorr}[q_a,q_s] \ket{0}_{q_a}\ket{\bot,\bot,\bot,\bot}_{q_s} \\[-1mm]
        &\qquad\qquad\qquad = \mathit{qenc}[q_l,q_a] \ket{0}_{q_a}\ket{\bot,\bot,\bot,\bot}_{q_s} \Big)\Downarrow q_a,
        \end{split}
        \\
    \label{eq:qec-shor-spec-2}
        \begin{split}
        &\forall q_l:\texttt{lqbit},\,q_i:\texttt{qidx}.\quad \Big( \mathit{qenc}[q_l,q_a]; \mathit{qflipX}[q_a,q_i]; \mathit{qcorr}[q_a,q_s]\ket{0}_{q_a}\ket{\bot,\bot,\bot,\bot}_{q_s}\\[-1mm]
        &\qquad\qquad\qquad\qquad\qquad = \mathit{qenc}[q_l,q_a] \ket{0}_{q_a}\ket{\bot,\bot,\bot,\bot}_{q_s} \Big)\Downarrow q_a,
        \end{split}
        \\
    \label{eq:qec-shor-spec-3}
        \begin{split}
        &\forall q_l:\texttt{lqbit},\,q_i:\texttt{qidx}.\quad \Big( \mathit{qenc}[q_l,q_a]; \mathit{qflipZ}[q_a,q_i]; \mathit{qcorr}[q_a,q_s] \ket{0}_{q_a}\ket{\bot,\bot,\bot,\bot}_{q_s}\\[-1mm]
        &\qquad\qquad\qquad\qquad\qquad = \mathit{qenc}[q_l,q_a] \ket{0}_{q_a}\ket{\bot,\bot,\bot,\bot}_{q_s} \Big)\Downarrow q_a,
        \end{split}
        \\
    \label{eq:qec-shor-spec-4}
        \begin{split}
        &\forall q_l:\texttt{lqbit},\,q_i:\texttt{qidx}.\quad \Big( \mathit{qenc}[q_l,q_a]; \mathit{qflipX}[q_a,q_i]; \mathit{qflipZ}[q_a,q_i]; \mathit{qcorr}[q_a,q_s] \ket{0}_{q_a}\ket{\bot,\bot,\bot,\bot}_{q_s}\\[-1mm]
        &\qquad\qquad\qquad\qquad\qquad = \mathit{qenc}[q_l,q_a] \ket{0}_{q_a}\ket{\bot,\bot,\bot,\bot}_{q_s} \Big)\Downarrow q_a.
        \end{split}
    \end{align}
\end{proposition}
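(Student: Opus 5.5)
Because both sides of each equation are linear in the input state of $q_l$ (and of $q_i$), the proof reduces to a direct computation. I will fix an arbitrary input $\ket{\varphi}\in\calH_{\texttt{lqbit}}\otimes\calH_{\texttt{qidx}}\otimes\calH_{\overline{r}}$ for the free variables, where $\overline{r}$ collects any environment, and expand it as $\ket{\varphi}=\sum_{x,i}\ket{x}_{q_l}\ket{i}_{q_i}\ket{\varphi_{x,i}}$, with $x\in\{0,1\}$ and $i\in\{0,\ldots,8\}$, possibly unnormalised vectors $\ket{\varphi_{x,i}}$ on the rest. The common right-hand side is
\[
\mathit{qenc}\,\ket{\varphi}\ket{0}^{\otimes 9}\ket{\bot^4}=\sum_{x,i}\ket{0}_{q_l}\ket{i}_{q_i}\ket{\Psi_x}_{q_a}\ket{\bot^4}_{q_s}\ket{\varphi_{x,i}}.
\]
The goal is to show that each left-hand side equals this vector with the syndrome register $\ket{\bot^4}$ replaced by some $\ket{s_i}$ that depends only on the error location $i$ (and on the error type), and not on $x$. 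Tracing out everything except $q_a$ then gives identical reduced states.

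The key steps are as follows. First, $\mathit{qenc}$ acts as above by linearity. Second, $\mathit{qflipX}$ controlled on $\ket{i}$ maps $\ket{\Psi_x}$ to $X_i\ket{\Psi_x}$, and $\mathit{qflipZ}$ maps it to $Z_i\ket{\Psi_x}$. For \eqref{eq:qec-shor-spec-4}, the composite gives $Z_iX_i\ket{\Psi_x}$, which equals $Y_i\ket{\Psi_x}$ up to a global phase $\pm i$ that depends only on the fixed convention for $Y$, so it is the same for both values of $x$. Third, by \eqref{eq:def-Phi} each of these vectors is $\ket{\Phi_{x,\sigma(P,i)}}$ with $P\in\{\Id,X,Z,Y\}$, possibly times that global phase. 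Fourth, $\mathit{qcorr}$ maps $\ket{\Phi_{x,\sigma(P,i)}}\ket{\bot^4}$ to $\ket{\Psi_x}\ket{\sigma(P,i)}$.

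The whole left-hand side is therefore $c\sum_{x,i}\ket{0}\ket{i}\ket{\Psi_x}\ket{\sigma(P,i)}\ket{\varphi_{x,i}}$ with $\abs{c}=1$. The only difference from the right-hand side is the syndrome factor $\ket{\sigma(P,i)}$ in place of $\ket{\bot^4}$. That factor lives outside $q_a$ and is a unitary relabelling that depends only on $i$: it is $\Id\otimes\ket{\sigma(P,i)}\!\bra{\bot^4}$ on the branch $\ket{i}$, and the $\ket{i}_{q_i}$ branches are orthogonal. Explicitly, the reduced state on $q_a$ is
\[
\sum_i\sum_{x,x'}\braket{\varphi_{x',i}|\varphi_{x,i}}\,\ket{\Psi_x}\!\bra{\Psi_{x'}},
\]
since $\braket{\sigma(P,i)|\sigma(P,i)}=1=\braket{\bot^4|\bot^4}$. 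This is the same for both sides, and the phase $c$ cancels. Equation \eqref{eq:qec-shor-spec-1} is the case $P=\Id$, where $\sigma(\Id,i)=\bot^4$, and it holds without any $q_i$ at all.

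I expect the main obstacle to be the bookkeeping in the third step. One must check that $\mathit{qflipX}$ followed by $\mathit{qflipZ}$ produces exactly a basis vector $\ket{\Phi_{x,\sigma(Y,i)}}$ up to an $x$-independent phase. One must also check that no cross terms in $x$ acquire different syndromes: $\sigma$ depends on $(P,i)$ but not on $x$, and this is precisely what prevents decoherence of the logical superposition. Both points rely on \Cref{lem:exist-Phi} guaranteeing that \eqref{eq:def-Phi} holds consistently, including for $Y_i$ with whatever phase convention the lemma fixes. If that convention is $Y=iXZ$, I will match it to $Z_iX_i=-iY_i$ and record the phase explicitly before taking partial traces.
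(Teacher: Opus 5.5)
Your proposal is correct and follows essentially the same route as the paper. Both expand the input over the computational basis of $q_l,q_i$, use \Cref{eq:def-Phi} and the definition of $\mathit{qcorr}$ to reach $\ket{\Psi_x}\ket{\sigma(P,i)}$ with an $x$-independent syndrome, and trace out $q_l,q_i,q_s$; your extra environment register is unnecessary, since \Cref{def:validity} only quantifies over states of the free variables, but it does no harm. One small slip: with $Y=\mathrm{i}XZ$ one has $Z_iX_i=\mathrm{i}Y_i$ (the factor $\mathrm{i}$ that appears in the paper), not $-\mathrm{i}Y_i$, though this is irrelevant because the phase is global and cancels in the reduced state.
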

The first equation states that $\mathit{qcorr}$ leaves an error-free Shor code word unchanged. The second and third equations state that $\mathit{qcorr}$ recovers the code word after a Pauli $X$ or $Z$ error, respectively. 
The fourth equation covers a Pauli $Y$ error through sequential application of $\mathit{qflipX}$ and $\mathit{qflipZ}$. In all equations, $q_a$ is the only principal variable.
Following the standard quantum error-correction argument, being able to correct Pauli $X,Y,Z$ errors means that the Shor code corrects an arbitrary single-qubit error.

\section{Embeddings of Quantum Data Types}\label{sec-emb}

So far, we have only considered the quantisation of individual ADTs. In classical computing, various relationships between different ADTs have been exploited for practical applications in software engineering. In particular, an embedding allows values of one ADT to be represented as values of another while preserving the relevant operations and semantics. Consequently, embeddings between ADTs can be used for applications such as data refinement and software component reuse. To enable these kinds of applications for quantum ADTs, in this section we ask under what conditions quantisation preserves structural relationships between algebras. In classical universal algebra, an embedding faithfully maps one algebra into another while preserving all operations. This section investigates when such an embedding lifts to an embedding between the corresponding quantum algebras.

\begin{definition}[Homomorphisms of algebras] 
    \label{def:homo}
    Let $A$ and $A^\prime$ be two algebras of signature $\mathit{Sig}=(S,\mathit{OP})$. Then: \begin{enumerate}\item A homomorphism from $A$ to $A^\prime$ is a family $g=\parens*{g_s}_{s\in S}$ of mappings $g_s:A_s\rightarrow A^\prime_s$ such that for each operation symbol $O:s_1,\ldots,s_n\rightarrow s$, and for all $a_1\in A_{s_1},\ldots,a_n\in A_{s_n},$ it holds that $$g_s(O^A(a_1,\ldots,a_n))=O^{A^\prime}(g_{s_1}(a_1),\ldots,g_{s_n}(a_n)).$$ 
Equivalently, the following diagram commutes:
\begin{equation*}
\begin{tikzcd}[
    column sep=3.8em,
    row sep=3.3em
]
A_{s_1}\times\cdots\times A_{s_n}
    \arrow[r, "O^A"]
    \arrow[d, "g_{s_1}\times\cdots\times g_{s_n}" swap]
& A_s \arrow[d, "g_s"]
\\\
A_{s_1}^\prime\times\cdots\times A_{s_n}^\prime
    \arrow[r, "O^{A^\prime}" swap]
& A_s^\prime
\end{tikzcd}
\end{equation*}
\item If all $g_s$ are injective, then $g$ is called an embedding. 
\item If all $g_s$ are bijective, then $g$ is called an isomorphism. \end{enumerate} 
\end{definition}

Let $\calH$ and $\calH^\prime$ be two Hilbert spaces. A linear operator $T:\calH\rightarrow\calH^\prime$ is called an isometric embedding if $$\braket{T(x)|T(y)}_{\calH^\prime}=\braket{x|y}_\calH$$ for all $x,y\in\calH$, where $\braket{\cdot|\cdot}_\calH, \braket{\cdot|\cdot}_{\calH^\prime}$ denote the inner products in $\calH$ and $\calH^\prime$, respectively. It is easy to see that $T$ is injective and norm-preserving: $$\norm{T(x)}_{\calH^\prime}=\norm{x}_\calH$$ for every $x\in\calH$, and the image $T(\calH)$ is a closed subspace of $\calH^\prime$. 

\begin{definition}[Embeddings between quantum algebras]\label{def-qemb}Let $\calA$ and $\calA^\prime$ be two quantum algebras of signature $\mathit{Sig}_q=(S_q,\calK,\calU)$. Then an embedding from $\calA$ to $\calA^\prime$ is a family $G=\parens*{G_s}_{s\in S_q}$ of isometric embeddings $G_s:\calH_s\rightarrow \calH^\prime_s$ such that \begin{enumerate}\item for each quantum state symbol $\ket{\psi}\in\calK$ of sort $s_1,\ldots,s_n$, it holds that 
    \begin{equation}
        \label{eq:emb-cond-1}
        (G_{s_1}\otimes\cdots\otimes G_{s_n})(\ket{\psi}^\calA)=\ket{\psi}^{\calA^\prime};
    \end{equation}
\item for each quantum gate symbol $U\in\calU$ of sort $s_1,\ldots,s_n$, and for any $\ket{\varphi}\in\bigotimes_{i=1}^n\calH_{s_i}$, 
\begin{equation}\label{eq-hom}(G_{s_1}\otimes\cdots\otimes G_{s_n})(U^\calA\ket{\varphi})=U^{\calA^\prime}((G_{s_1}\otimes\cdots\otimes G_{s_n})\ket{\varphi}).\end{equation}  
Equivalently, the following diagram commutes:
\begin{equation*}
\begin{tikzcd}[
    column sep=3.8em,
    row sep=3.3em
]
\displaystyle \bigotimes_{i=1}^n\calH_{s_i}
    \arrow[r, "U^\calA"]
    \arrow[d, "G_{s_1}\otimes\cdots\otimes G_{s_n}" swap]
& \displaystyle \bigotimes_{i=1}^n\calH_{s_i}
    \arrow[d, "G_{s_1}\otimes\cdots\otimes G_{s_n}"]
\\
\displaystyle \bigotimes_{i=1}^n\calH_{s_i}^\prime
    \arrow[r, "U^{\calA^\prime}" swap]
& \displaystyle \bigotimes_{i=1}^n\calH_{s_i}^\prime
\end{tikzcd}
\end{equation*}
\end{enumerate}
\end{definition}

Since $G_{s_1},\ldots,G_{s_n}$ are all isometric embeddings, $\calH^\prime\stackrel{\triangle}{=}(G_{s_1}\otimes\cdots\otimes G_{s_n})\parens*{\bigotimes_{i=1}^n\calH_{s_i}}$ is a closed subspace of $\bigotimes_{i=1}^n\calH_{s_i}^\prime$. Moreover, $G_{s_1}\otimes\cdots\otimes G_{s_n}:\bigotimes_{i=1}^n\calH_{s_i}\rightarrow \calH^\prime$ and its inverse $(G_{s_1}\otimes\cdots\otimes G_{s_n})^\dag:\calH'\rightarrow\bigotimes_{i=1}^n\calH_{s_i} $ are unitary operators. Thus, the condition in \Cref{eq-hom} can be rewritten as:
$$U^\calA=(G_{s_1}\otimes\cdots\otimes G_{s_n})^\dag\circ U^{\calA^\prime}\circ (G_{s_1}\otimes\cdots\otimes G_{s_n})$$. If all $G_s$ are unitary operators, then $G$ is called an isomorphism. In this case, $G_s(\calH_s)=\calH_s^\prime$ for every $s\in S_q$, and $\calH^\prime=\bigotimes_{i=1}^n\calH_{s_i}^\prime$. 

In the remainder of this section, we give conditions under which an embedding of algebras induces an embedding of their quantisations.

\subsection{Embeddings of Quantisations} 

Let $A$ and $A'$ be two algebras of the same signature
$\mathit{Sig}=(S,\mathit{OP})$,
and let $g=(g_s)_{s\in S}:A\rightarrow A^\prime$ be an embedding.
Let $\eta$ be a quantisation from $\mathit{Sig}$ to
$\mathit{QSig}=(S_q,\calK,\calU)$ that satisfies $S_q=\eta(S)$.
Let $\calA$ and $\calA^\prime$ be quantisations of $A$ and $A^\prime$ along $\eta$, respectively.

Recall that the quantisation of $s\in S$ is $s_q\in S_q$ along $\eta$.
For each $s\in S$, we define $G_{s_q}:\calH_{s_q}^{\calA}\rightarrow \calH_{s_q}^{\calA'}$ to be the linear operator that satisfies the following condition:
\begin{equation}
    \label{eq-emb}
    G_{s_q}(\ket{a})=\ket{g_{s}(a)}\ \mbox{for every}\ a\in A_{s}.
\end{equation} 
We call $G=(G_{s_q})_{s\in S}$ the quantisation of $g=(g_s)_{s\in S}$ along $\eta$.  
Since each $g_s$ is injective, $\braces{\ket{g_s(a)}:a\in A_s}$ is an orthonormal set,
which means each $G_{s_q}$ is an isometric embedding.
If $g$ is further an isomorphism, then each $G_{s_q}$ is a unitary.

For concreteness, we consider BO- and PO-quantisations in \Cref{sec-quantisation}.
For each $s\in S$, let $k_s^A:A_s\rightarrow \Z_{\abs*{A_s}}$ and $k_s^{A'}:A_s'\rightarrow \Z_{\abs*{A_s'}}$ be the two designated bijections as in \Cref{sec-quantisation},
and let $\boxplus_s^A$ and $\boxplus_s^{A'}$ be defined as in \Cref{sec-quantisation}.

We first examine when a BO-quantisation can be embedded into another.

\begin{proposition}[Embeddings of BO-quantisations]
    \label{prop:BO-embedding}
    Suppose that every state symbol in $\calK$ is introduced by $\eta$, and that every gate symbol in $\calU$ is introduced by $\eta$.
    Let $\calA$ and $\calA'$ be BO-quantisations of $A$ and $A'$, respectively, along $\eta$.
    If for each $s\in S$ and $a,b\in A_s$, 
    \begin{equation*}
        g_s(a\boxplus_s^A b) =g_s(a)\boxplus_s^{A'} g_s(b),
    \end{equation*}
    or equivalently, the following diagram commutes:
    \begin{equation*}
    \begin{tikzcd}[column sep=large, row sep=large]
        A_s\times A_s \arrow[r, "\boxplus_s^A"]
            \arrow[d, "g_s\times g_s"']
        & A_s \arrow[d, "g_s"] \\
        A_s'\times A_s' \arrow[r, "\boxplus_s^{A'}"']
        & A_s'
    \end{tikzcd}
    \end{equation*}
    then $G$ in \Cref{eq-emb} is an embedding from $\calA$ to $\calA'$.
\end{proposition}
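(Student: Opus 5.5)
The plan is to verify the two conditions of \Cref{def-qemb} directly on computational-basis vectors, then extend by linearity. Since every symbol in $\calK$ and $\calU$ comes from $\eta$ and $S_q=\eta(S)$, there are only three kinds of symbols to check: state symbols $\ket{\psi_c}=\eta(c)$ (including the designated $\ket{0_{s_q}}$), the gates $U_O,\mathit{pre}_O,\mathit{post}_O$ for each nonconstant $O$, and possibly $\mathit{SUM}_s$. Tensor products of the basis vectors $\ket{a}$, $a\in A_s$, span each $\bigotimes_i\calH^{\calA}_{s_{iq}}$. Both sides of \Cref{eq-hom} are linear, so it suffices to check \Cref{eq-hom} on product basis vectors.

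For a state symbol $\ket{\psi_c}$ with $c:\rightarrow s$, \Cref{def:quantisation-of-algebra}(2) gives $\ket{\psi_c}^{\calA}=\ket{c^A}$. Then $G_{s_q}\ket{c^A}=\ket{g_s(c^A)}=\ket{c^{A'}}=\ket{\psi_c}^{\calA'}$, because a homomorphism preserves constants (the case $n=0$ of \Cref{def:homo}). In particular $g_s(0_s^A)=0_s^{A'}$, so the designated zero states are also preserved. For $\mathit{pre}_O$ and $\mathit{post}_O$, the BO-quantisation sets both interpretations to $\Id$ in each algebra, and $G\Id=\Id G$ holds trivially.

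The main case is $U_O$ for $O:s_1,\ldots,s_n\rightarrow s$. It is out-of-place with no ancilla, so its sort is $s_{1q},\ldots,s_{nq},s_q$. On a basis vector $\ket{a_1}\cdots\ket{a_n}\ket{a}$, the left side of \Cref{eq-hom} is $\ket{g(a_1)}\cdots\ket{g(a_n)}\ket{g_s(a\boxplus^A_s O^A(\overline a))}$. The right side is $\ket{g(a_1)}\cdots\ket{g(a_n)}\ket{g_s(a)\boxplus^{A'}_s O^{A'}(g(a_1),\ldots,g(a_n))}$. The hypothesis on $\boxplus$ turns the first last-register label into $g_s(a)\boxplus^{A'}_s g_s(O^A(\overline a))$, and the homomorphism property turns $g_s(O^A(\overline a))$ into $O^{A'}(g(\overline a))$, so the two sides agree. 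If $\mathit{SUM}_s$ is present and interpreted as in \Cref{thm:quantise} in both algebras, the same computation applies: $\ket{a}\ket{b}\mapsto\ket{g(a)}\ket{g(a\boxplus b)}=\ket{g(a)}\ket{g(a)\boxplus g(b)}$.

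The main obstacle is bookkeeping rather than mathematics. One has to make sure the hypothesis ``every symbol is introduced by $\eta$'' really exhausts $\calK$ and $\calU$; otherwise the extra $\mathit{SUM}_s$ gates must be handled as above or excluded. One also has to confirm that \Cref{eq-hom} is only required on the range space, where the basis argument suffices. Everything else reduces to the two commuting squares: the homomorphism square and the $\boxplus$ square.
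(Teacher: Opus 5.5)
Your proposal is correct and takes essentially the same route as the paper: preservation of state symbols via $g_s(c^A)=c^{A'}$, the $\mathit{pre}_O,\mathit{post}_O$ cases trivial because they are $\Id$, and the $U_O$ case checked on basis vectors using the $\boxplus$-compatibility and the homomorphism square, then extended by linearity. Your extra treatment of $\mathit{SUM}_s$ is harmless but unnecessary, since the hypothesis that every gate symbol is introduced by $\eta$ already excludes it. Each $G_{s_q}$ being an isometric embedding, which you leave implicit, follows from the injectivity of $g_s$, as the paper notes after \Cref{eq-emb}.
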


Next, we examine when a PO-quantisation is isomorphic to another.
Unlike the BO-quantisations, compatibility with the quantum Fourier transform in \Cref{def:po-quantisation}
forces every $g_s$ to be surjective.
Consequently, the corresponding result for PO-quantisations is stated only for isomorphisms.

\begin{proposition}[Isomorphisms of PO-quantisations]
    \label{prop:PO-embedding}
    Suppose that every state symbol in $\calK$ is introduced by $\eta$, and that every gate symbol in $\calU$ is introduced by $\eta$.
    Suppose $g$ is an isomorphism.
    Let $\calA$ and $\calA'$ be PO-quantisations of $A$ and $A'$, respectively, along $\eta$.
    If for each $s\in S$,
    \begin{equation*}
        k_s^{A'}\circ g_s=k_s^A,
    \end{equation*}
    or equivalently, the following diagram commutes (noting that $\abs{A_s}=\abs{A_s'}$):
    \begin{equation*}
    \begin{tikzcd}[column sep=large, row sep=large]
        A_s \arrow[r, "k_s^A"] \arrow[d, "g_s"']
        & \Z_{\abs*{A_s}} \arrow[d, equal] \\
        A_s' \arrow[r, "k_s^{A'}"']
        & \Z_{\abs*{A_s'}}
    \end{tikzcd}
    \end{equation*}
    then $G$ in \Cref{eq-emb} is an isomorphism from $\calA$ to $\calA'$.
\end{proposition}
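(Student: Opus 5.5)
The plan is to verify the two conditions of \Cref{def-qemb} directly, for every state symbol and every gate symbol, using that each symbol is introduced by $\eta$. First, since $g$ is an isomorphism, each $g_s$ is a bijection, so each $G_{s_q}$ maps the orthonormal basis $\braces{\ket{a}:a\in A_s}$ onto the orthonormal basis $\braces{\ket{a'}:a'\in A'_s}$ and is unitary. It then suffices to check \Cref{eq:emb-cond-1} and \Cref{eq-hom}.

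\emph{State symbols.} Every state symbol is $\ket{\psi_c}=\eta(c)$ for a constant $c:\rightarrow s$, including the designated $\ket{0_{s_q}}=\eta(0_s)$. By \Cref{def:quantisation-of-algebra}, $\ket{\psi_c}^{\calA}=\ket{c^A}$ and $\ket{\psi_c}^{\calA'}=\ket{c^{A'}}$. Because $g$ is a homomorphism and constants are nullary operations, $g_s(c^A)=c^{A'}$. Hence $G_{s_q}\ket{c^A}=\ket{c^{A'}}$.

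\emph{Gate symbols.} Every gate symbol is one of $U_O,\mathit{pre}_O,\mathit{post}_O$ for some nonconstant $O:s_1,\ldots,s_n\rightarrow s$, of sort $s_{1q},\ldots,s_{nq},s_q$. Write $G_{\mathrm{in}}=G_{s_{1q}}\otimes\cdots\otimes G_{s_{nq}}$.

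For $\mathit{pre}_O$ and $\mathit{post}_O$, which equal $\Id\otimes\mathit{QFT}_{\abs{A_s}}$ and its adjoint, the tensor structure reduces the check to $G_{s_q}\mathit{QFT}^{A}=\mathit{QFT}^{A'}G_{s_q}$. Apply both sides to $\ket{a}$. The left side is $\frac{1}{\sqrt{N}}\sum_{b\in A_s}\omega^{k^A(a)k^A(b)}\ket{g(b)}$. The right side is $\frac{1}{\sqrt{N}}\sum_{b'\in A'_s}\omega^{k^{A'}(g(a))k^{A'}(b')}\ket{b'}$. Reindex the right side by $b'=g(b)$, which is legitimate since $g_s$ is a bijection. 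The hypothesis $k^{A'}\circ g=k^A$ then makes the two sums agree term by term, and $N=\abs{A_s}=\abs{A'_s}$ gives the same $\omega$.

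For $U_O$, both sides are diagonal in the computational basis, so it suffices to compare phases on a basis state. The phase of $U_O^{\calA'}$ on $\ket{g(a_1),\ldots,g(a_n)}\ket{g(a)}$ is $\omega^{k^{A'}(g(a))\,k^{A'}(O^{A'}(g(a_1),\ldots,g(a_n)))}$. The homomorphism property gives $O^{A'}(g(a_1),\ldots,g(a_n))=g(O^A(a_1,\ldots,a_n))$, and applying $k^{A'}\circ g=k^A$ to both factors turns the exponent into exactly the phase of $U_O^{\calA}$ on $\ket{a_1,\ldots,a_n}\ket{a}$. Applying $G_{\mathrm{in}}\otimes G_{s_q}$ to this eigenvector yields \Cref{eq-hom}.

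The main obstacle is the QFT step. There the reindexing of the sum requires surjectivity of $g_s$, which is why the proposition is stated only for isomorphisms. That step also needs the equality $k^{A'}\circ g=k^A$ exactly, not merely compatibility with $\boxplus$, so that the Fourier phases match.
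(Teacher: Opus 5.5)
Your proof is correct and follows the paper's argument essentially step for step: unitarity of $G$ from bijectivity of $g$, state symbols via the homomorphism property on constants, the intertwining $G_{s_q}\,\mathit{QFT}=\mathit{QFT}\,G_{s_q}$ obtained by reindexing the Fourier sum over the bijection $g_s$, and a phase comparison on basis states for $U_O$. The $\mathit{post}_O$ case follows by taking adjoints, which you leave implicit but which is immediate since $G_{s_q}$ is unitary. One small quibble concerns only your closing remark: the exact equality $k_s^{A'}\circ g_s=k_s^A$ is sufficient rather than necessary, since for instance $k_s^{A'}\circ g_s\equiv -k_s^A \pmod{\abs{A_s}}$ would make all the Fourier and oracle phases agree as well.
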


\section{Products of Quantum Data Types}\label{sec-prod}

Embeddings discussed in the last section describe one of the simplest structural relationships between quantum ADTs. We now turn to how to construct a new quantum ADT from simpler ones. In classical computing, products of ADTs enable us to compose independently specified abstractions while preserving their interfaces and invariants, and they find applications in composite data structures, modular software composition, and other settings. To introduce the same capability into quantum software engineering, in this section we define the notion of the product of ADTs. At the classical level, the domains of two algebras are combined using Cartesian products. At the quantum level, the corresponding quantum domains are combined using tensor products. Furthermore, we present the conditions under which these products are compatible with the quantisations of ADTs.

\begin{definition}Let $A$ and $B$ be two algebras of signature $\mathit{Sig}=(S,\mathit{OP})$. Then their product is defined as the algebra $C=A\times B$ of the same signature $\mathit{Sig}$ satisfying the following conditions:
\begin{enumerate}\item for each sort $s\in S$, $C_s=A_s\times B_s$ (Cartesian product);
\item for each operation symbol $O:s_1,\ldots,s_n\rightarrow s$, $O^C$ is defined as follows: for any $a_i\in A_{s_i}$ and $b_i\in B_{s_i}$ $(i=1,\ldots,n)$,
\begin{equation}\label{c-prod}O^C((a_1,b_1),\ldots,(a_n,b_n))=(O^A(a_1,\ldots,a_n),O^B(b_1,\ldots,b_n)).\end{equation} In particular, for each constant symbol $c:\rightarrow s$, we have $c^C=(c^A,c^B)$. 
\end{enumerate}
\end{definition}

We next define the corresponding product of quantum algebras. Let $\overline{s}=s_1,\ldots,s_n$ be a string of quantum sorts. Define the canonical unitary operator
\begin{equation*}
    T_{\overline{s}}:\bigotimes_{i=1}^n
    \parens*{\calH_{s_i}^{\calA}\otimes\calH_{s_i}^{\calB}}
    \rightarrow \parens*{\bigotimes_{i=1}^n\calH_{s_i}^{\calA}} \otimes \parens*{\bigotimes_{i=1}^n\calH_{s_i}^{\calB}}
\end{equation*}
by 
\begin{equation}
    \label{U-product}
    T_{\overline{s}}\parens*{\bigotimes_{i=1}^n\ket{\alpha_i\beta_i}}=\parens*{\bigotimes_{i=1}^n\ket{\alpha_i}}\otimes \parens*{\bigotimes_{i=1}^n\ket{\beta_i}}
\end{equation}
for any $\ket{\alpha_i}\in\calH_{s_i}^\calA$ and $\ket{\beta_i}\in\calH_{s_i}^\calB$ $(i=1,\ldots,n)$
together with linear extension.

\begin{definition}Let $\calA$ and $\calB$ be two quantum algebras of signature $\mathit{QSig}=(S_q, \calK,\calU)$. Then their product is defined as the quantum algebra $\calC=\calA\otimes\calB$ of the same signature $\mathit{QSig}$ satisfying the following conditions:\begin{enumerate}\item for each sort $s\in S_q$, $\calH^\calC_s=\calH_s^\calA\otimes\calH_s^\calB$ (tensor product);
\item for each quantum state symbol $\ket{\psi}\in\calK$ of sort $s_1,\ldots,s_n$, $\ket{\psi}^\calC=T^{-1}_{s_1,\ldots,s_n}\parens{\ket{\psi}^\calA\otimes \ket{\psi}^\calB}$;
\item for each quantum gate symbol $U\in\calU$ of sort $s_1,\ldots,s_n$, \begin{equation}\label{u-p}U^\calC=T_{s_1,\ldots,s_n}^{-1}\circ (U^\calA\otimes U^\calB)\circ T_{s_1,\ldots,s_n}\end{equation} is a unitary operator on $\bigotimes_{i=1}^n\calH^\calC_{s_i}=\bigotimes_{i=1}^n\parens*{\calH_{s_i}^\calA\otimes\calH_{s_i}^\calB}$, where $U^\calA\otimes U^\calB$ is defined by
\begin{align}
    \parens*{U^\calA\otimes U^\calB}(\ket{\alpha_1\ldots\alpha_n}\otimes\ket{\beta_1\ldots\beta_n})&=U^\calA(\ket{\alpha_1\ldots\alpha_n})\otimes U^\calB(\ket{\beta_1\ldots\beta_n}),
\end{align} 
for any $\ket{\alpha_i}\in\calH_{s_i}^\calA$ and $\ket{\beta_i}\in\calH_{s_i}^\calB$ $(i=1,\ldots,n)$ and extending linearly.
Equivalently, \Cref{u-p} can be expressed by the following commutative diagram:
\begin{equation*}
\begin{tikzcd}[
    column sep=4.8em,
    row sep=3.3em,
    cells={nodes={inner sep=2pt}}
]
\displaystyle \bigotimes_{i=1}^n\calH^\calC_{s_i}
    \arrow[r, "U^\calC"]
    \arrow[d, "T_{s_1,\ldots,s_n}"']
&
\displaystyle \bigotimes_{i=1}^n\calH^\calC_{s_i}
\\
\displaystyle \parens*{\bigotimes_{i=1}^n\calH^\calA_{s_i}}\otimes \parens*{\bigotimes_{i=1}^n\calH^\calB_{s_i}}
    \arrow[r, "U^\calA\otimes U^\calB"']
&
\displaystyle \parens*{\bigotimes_{i=1}^n\calH^\calA_{s_i}}\otimes\parens*{\bigotimes_{i=1}^n\calH^\calB_{s_i}}
    \arrow[u, "T^{-1}_{s_1,\ldots,s_n}"']
\end{tikzcd}
\end{equation*}
\end{enumerate}
\end{definition}

\subsection{Products of Quantisations}

Let $A$ and $B$ be two algebras of the same signature $\mathit{Sig}=(S,\mathit{OP})$, and let $C=A\times B$.
Suppose $\eta$ is a quantisation from $\mathit{Sig}$ to $\mathit{QSig}$ that satisfies $S_q=\eta(S)$.
Suppose that $\calA$, $\calB$, and $\calC$ are quantisations of $A$, $B$, and $C$, respectively, along $\eta$.

Recall that the quantisation of $s\in S$ is $s_q\in S_q$ along $\eta$.
For each $s\in S$, we define $J_{s_q}: \calH_{s_q}^{\calC} \rightarrow \calH_{s_q}^{\calA}\otimes\calH_{s_q}^{\calB}$
to be a linear operator that satisfies the following condition:
\begin{equation}
    \label{eq:product-isomorphism}
    J_{s_q}\ket{(a,b)} = \ket{a}\otimes\ket{b}
\end{equation}
for every $a\in A_s$ and $b\in B_s$.
Note that $J_{s_q}$ is a unitary. Let $J=(J_{s_q})_{s\in S}$.

For concreteness, we consider when a BO-quantisation (and PO-quantisation) is compatible with products.
For each $D\in \braces{A,B,C}$ and $s\in S$,
let $k_s^D:D_s\rightarrow \Z_{\abs{D_s}}$ be the designated bijection,
and let $\boxplus_s^D$ be defined as in \Cref{sec-quantisation}.

\begin{proposition}[Products of BO-quantisations]
\label{prop:BO-product}
    Suppose that every state symbol in $\calK$ is introduced by $\eta$, and that every gate symbol in $\calU$ is introduced by $\eta$.
Let $\calA$, $\calB$, and $\calC$ be BO-quantisations of $A$, $B$, and $C=A\times B$, respectively, along the same quantisation $\eta$. If
\begin{equation}
    \label{eq:product-sum-compatible}
    (a,b)\boxplus_s^C(a',b') = \parens*{a\boxplus_s^A a', b\boxplus_s^B b' }
\end{equation}
for any $s\in S$, $a,a'\in A_s$, and $b,b'\in B_s$,
then $J$ in \Cref{eq:product-isomorphism} is an isomorphism from $\calC$ to $\calA\otimes \calB$.
\end{proposition}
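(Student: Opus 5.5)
The plan is to verify the two clauses of \Cref{def-qemb} directly for $J=(J_{s_q})_{s\in S}$, working on computational bases and extending by linearity. Each $J_{s_q}$ is already unitary, so it is an isometric embedding. Once the clauses hold, $J$ is an isomorphism from $\calC$ to $\calA\otimes\calB$.

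Because every symbol of $\mathit{QSig}$ is introduced by $\eta$ and $S_q=\eta(S)$, only two kinds of symbols need checking. The first kind is the state symbols $\ket{\psi_c}=\eta(c)$ for constants $c:\rightarrow s$, including the designated $\ket{0_{s_q}}=\eta(0_s)$. The second kind is the gate symbols $U_O,\mathit{pre}_O,\mathit{post}_O$ for nonconstant $O$.

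\textbf{State symbols.} Each $\ket{\psi_c}$ has the single sort $s_q$, so $T_{s_q}$ is the identity. By \Cref{def:quantisation-of-algebra}(2) and the definition of the product algebra,
\[
\ket{\psi_c}^{\calC}=\ket{c^C}=\ket{(c^A,c^B)}.
\]
Applying $J_{s_q}$ gives $\ket{c^A}\otimes\ket{c^B}=\ket{\psi_c}^{\calA}\otimes\ket{\psi_c}^{\calB}=\ket{\psi_c}^{\calA\otimes\calB}$, which is \Cref{eq:emb-cond-1}.

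\textbf{Gate symbols.} For a BO-quantisation, $\mathit{pre}_O$ and $\mathit{post}_O$ are interpreted as identities in $\calA$, $\calB$ and $\calC$. Their product interpretation $T^{-1}(\Id\otimes\Id)T$ is again the identity, so \Cref{eq-hom} holds trivially for them.

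For $U_O$ with $O:s_1,\ldots,s_n\rightarrow s$ (out-of-place, no ancillas), I would evaluate both sides of \Cref{eq-hom} on a basis vector
\[
\ket{(a_1,b_1)}\cdots\ket{(a_n,b_n)}\ket{(a,b)}.
\]
On the left, \Cref{eq:bo-oracle} in $\calC$ produces last register $\ket{(a,b)\boxplus_s^C O^C((a_1,b_1),\ldots)}$. By \Cref{c-prod} and the hypothesis \Cref{eq:product-sum-compatible}, this equals
\[
\ket{(a\boxplus_s^A O^A(\overline a),\; b\boxplus_s^B O^B(\overline b))}.
\]
Applying $J\otimes\cdots\otimes J$ then gives
\[
\bigotimes_i\bigl(\ket{a_i}\ket{b_i}\bigr)\otimes\ket{a\boxplus_s^A O^A(\overline a)}\ket{b\boxplus_s^B O^B(\overline b)}.
\]
On the right, $J\otimes\cdots\otimes J$ first produces $\bigotimes_i\ket{a_ib_i}\otimes\ket{ab}$. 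Then $T$ regroups this into $\ket{a_1\ldots a_n a}\otimes\ket{b_1\ldots b_n b}$. Next, $U_O^{\calA}\otimes U_O^{\calB}$ acts by \Cref{eq:bo-oracle} in each factor. Finally, $T^{-1}$ interleaves the result back, yielding the same vector. Linearity extends the equality to all states.

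The only real obstacle is bookkeeping: the ordering of subsystems under $T_{\overline{s}}$, and making sure that the interpretation of a shared $\mathit{pre}_O$ or $\mathit{post}_O$ symbol is consistent. The latter is harmless here because these are identities. The essential content is the single use of \Cref{eq:product-sum-compatible} to split $\boxplus_s^C$ componentwise.
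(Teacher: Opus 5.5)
Your proposal is correct and follows the paper's proof essentially step for step. The shared steps are: each $J_{s_q}$ is unitary; the constant case uses $T_{s_q}=\Id$ and $c^C=(c^A,c^B)$; the $\mathit{pre}_O$ and $\mathit{post}_O$ case is trivial because these are identities; and the $U_O$ case is a basis-vector computation that applies \Cref{eq:product-sum-compatible} once to split $\boxplus_s^C$ componentwise before regrouping with $T$.
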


\begin{proposition}[Products of PO-quantisations]
\label{prop:PO-product}
    Suppose that every state symbol in $\calK$ is introduced by $\eta$, and that every gate symbol in $\calU$ is introduced by $\eta$.
Let $\calA$, $\calB$, and $\calC$ be PO-quantisations of $A$, $B$, and $C=A\times B$, respectively, along the same quantisation $\eta$. 
If
\begin{equation}
\label{eq:product-phase-compatible}
    \exp\parens*{ \frac{2\pi i\,k_s^C((a,b))k_s^C((a',b'))}{|C_s|}}
    = \exp\parens*{ \frac{2\pi i\,k_s^A(a)k_s^A(a')}{|A_s|} }
    \exp\parens*{\frac{2\pi i\,k_s^B(b)k_s^B(b')}{|B_s|}}
\end{equation}
for any $s\in S$, $a,a'\in A_s$, and $b,b'\in B_s$,
then $J$ is an isomorphism from $\calC$ to $\calA\otimes \calB$.
\end{proposition}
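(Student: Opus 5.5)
We need to show that $J=(J_{s_q})_{s\in S}$, defined in \Cref{eq:product-isomorphism}, satisfies the conditions of \Cref{def-qemb} as an isomorphism from $\calC$ to $\calA\otimes\calB$. Each $J_{s_q}$ maps the orthonormal basis $\{\ket{(a,b)}\}$ of $\calH^{\calC}_{s_q}$ bijectively onto the orthonormal basis $\{\ket{a}\otimes\ket{b}\}$ of $\calH^{\calA}_{s_q}\otimes\calH^{\calB}_{s_q}$, so each $J_{s_q}$ is unitary. Since every state symbol and every gate symbol in $\mathit{QSig}$ is introduced by $\eta$, it suffices to check three kinds of symbols: the constant states $\ket{\psi_c}$, the gates $U_O$, and the gates $\mathit{pre}_O,\mathit{post}_O$.

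\textbf{State symbols.} For a constant $c:\rightarrow s$ we have $\ket{\psi_c}^{\calC}=\ket{c^C}=\ket{(c^A,c^B)}$. Its image under $J_{s_q}$ is $\ket{c^A}\otimes\ket{c^B}$. Since the sort string has length one, $T_{s_q}$ is the identity, so this equals $T^{-1}_{s_q}(\ket{\psi_c}^{\calA}\otimes\ket{\psi_c}^{\calB})$, which is $\ket{\psi_c}^{\calA\otimes\calB}$. This gives \Cref{eq:emb-cond-1}.

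\textbf{Gate symbols.} Fix $O:s_1,\ldots,s_n\rightarrow s$, which is out-of-place with no ancillas. Write $J_O=J_{s_{1q}}\otimes\cdots\otimes J_{s_{nq}}\otimes J_{s_q}$; all operators involved are unitary.

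For $\mathit{pre}_O$, we need $J_O\,(\Id\otimes\mathit{QFT}_{\abs{C_s}})=T^{-1}\big((\Id\otimes\mathit{QFT}_{\abs{A_s}})\otimes(\Id\otimes\mathit{QFT}_{\abs{B_s}})\big)T\,J_O$. Since $T$ only regroups factors, this reduces to the single-sort identity
\[
J_{s_q}\,\mathit{QFT}_{\abs{C_s}}=(\mathit{QFT}_{\abs{A_s}}\otimes\mathit{QFT}_{\abs{B_s}})\,J_{s_q}.
\]
We check it by comparing matrix entries in the computational bases. The entry of the left side at $\ket{a'}\ket{b'}$ on input $\ket{(a,b)}$ is
\[
\abs{C_s}^{-1/2}\exp\parens*{2\pi i\,k^C_s((a,b))k^C_s((a',b'))/\abs{C_s}}.
\]
The corresponding entry of the right side is
\[
(\abs{A_s}\abs{B_s})^{-1/2}\exp\parens*{2\pi i\,k^A_s(a)k^A_s(a')/\abs{A_s}}\exp\parens*{2\pi i\,k^B_s(b)k^B_s(b')/\abs{B_s}}.
\]
These agree because $\abs{C_s}=\abs{A_s}\abs{B_s}$ and because of hypothesis \Cref{eq:product-phase-compatible}. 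Taking adjoints of both sides gives the same statement for $\mathit{QFT}^\dagger$, so the case of $\mathit{post}_O$ follows.

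For $U_O$, it suffices to check the intertwining identity on basis vectors $\ket{(a_1,b_1)}\cdots\ket{(a_n,b_n)}\ket{(a,b)}$. By \Cref{eq:po-oracle}, $U_O^{\calC}$ multiplies this vector by
\[
\omega_C^{\,k^C_s((a,b))\,k^C_s(O^C(\overline{(a_i,b_i)}))}.
\]
By \Cref{c-prod}, $O^C(\overline{(a_i,b_i)})=(O^A(\overline{a}),O^B(\overline{b}))$. Applying \Cref{eq:product-phase-compatible} with $a'=O^A(\overline{a})$ and $b'=O^B(\overline{b})$ splits the phase as
\[
\omega_A^{\,k^A_s(a)\,k^A_s(O^A(\overline{a}))}\;\omega_B^{\,k^B_s(b)\,k^B_s(O^B(\overline{b}))}.
\]
This is exactly the phase that $T^{-1}(U^{\calA}_O\otimes U^{\calB}_O)T$ applies to $J_O$ of the same basis vector. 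Both sides are linear, so \Cref{eq-hom} holds for $U_O$.

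\textbf{Main obstacle.} The key step is the QFT factorisation identity. Hypothesis \Cref{eq:product-phase-compatible} is exactly what makes the Fourier kernel on $\Z_{\abs{C_s}}$, transported through $k^C_s$, factor as the product of the kernels on $A_s$ and $B_s$. The rest is bookkeeping: tracking how $T$ reorders the tensor factors, and noting that the normalisations match because $\abs{C_s}=\abs{A_s}\abs{B_s}$.
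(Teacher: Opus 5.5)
Your proposal is correct and follows essentially the same route as the paper: unitarity of each $J_{s_q}$, the state-symbol check using $T_{s_q}=\Id$, the single-sort identity $J_{s_q}\mathit{QFT}_{|C_s|}=(\mathit{QFT}_{|A_s|}\otimes\mathit{QFT}_{|B_s|})J_{s_q}$ with adjoints handling $\mathit{post}_O$, and a phase comparison on basis vectors for $U_O$. Your explicit matrix-entry check, including the normalisation $|C_s|=|A_s|\,|B_s|$, fills in the step where the paper only asserts that the hypothesis implies the $\mathit{QFT}$ factorisation.
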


\section{Conclusion}

This paper developed a universal-algebraic framework for data abstraction in quantum programming. We formally defined the notion of abstract quantum data types and a quantisation of classical data types and proved that their equational specifications can be soundly lifted to their quantisations. Generalised bit-oracle and phase-oracle quantisations arise as instances of this construction and are related through phase kickback. Further, the examples of quantum arrays, the three-qubit repetition code, and the nine-qubit Shor code illustrate how quantum data and operations can be organised and specified within the framework.
We also established conditions under which data type quantisation preserves embeddings, isomorphisms, and products.

As a first step toward introducing data abstraction into quantum programming, in this paper, we examine only two typical forms of quantisation, namely bit oracles and phase oracles. One can expect that more forms of quantisation will emerge in the development of future quantum algorithms and programs. We also consider only purely quantum abstract data types in this paper. In practical applications, quantum and classical data are often involved in the same program, and measurement outcomes of quantum data are stored as classical data that subsequently participate in classical computations. This observation suggests another interesting topic for future research: classical-quantum hybrid abstract data types.

We can expect that data abstraction in quantum programming enables programmers to reason about complex quantum behaviour in a manageable way as in classical programming. By hiding mathematical and physical complexity while enforcing quantum constraints, abstraction makes quantum software development more accessible, safer, and more scalable. As quantum computing matures, well-designed abstractions will be crucial for bridging the gap between theoretical algorithms and practical implementations.

\begin{acks}
    This work was supported in part by the Australian Research Council under Grant DP250102952.
\end{acks}

\bibliographystyle{ACM-Reference-Format}
\bibliography{main}

\newpage

\appendix

\section{Deferred Proofs}\label[appendix]{appendix}

All of the theorems and propositions in the main text of this paper were presented without proofs. In this appendix, we give all of these proofs. 
We first prove the following lemma, which is useful for proving \Cref{thm:quantise}.

\begin{lemma}\label{lem:quantised-terms}
Let $A$ and $\calA$ be as in \Cref{thm:quantise}.
Let $t$ be a classical term involving classical variables $x_1:s_1,\ldots,x_n:s_n$ and
$p$ be the principal variable of $\eta(t)$.
We use $\overline{q}=q_1,\ldots,q_n$ to denote the designated variables $q_{x_1},\ldots,q_{x_n}$ as in \Cref{def:quantisation-of-terms}.
Denote $\overline{a}=a_1,\ldots,a_n\in A_{s_1},\ldots,A_{s_n}$.
Define $\sigma_{\overline{a}}$ to be an assignment of classical variables with $\sigma_{\overline{a}}(x_i)=a_i$ for $i=1,\ldots,n$.
For any quantum state 
\begin{equation*}
    \ket{\varphi}_{\overline{q}}=\sum_{\overline{a}}\beta_{\overline{a}} \ket{\overline{a}}_{\overline{q}},
\end{equation*}
we have
\begin{equation*}
    \mathit{eval}_{\ket{\varphi}}(\eta(t))\downarrow p
    =\sum_{\overline{a}}\abs*{\beta_{\overline{a}}}^2 \ket{\mathit{eval}_{\sigma_{\overline{a}}}(t)}\!\bra{\mathit{eval}_{\sigma_{\overline{a}}}(t)}_{p},
\end{equation*}
where $\mathit{eval}$ on the LHS is defined in the quantum algebra $\calA$,
and $\mathit{eval}$ on the RHS is defined in the classical algebra $A$.
\end{lemma}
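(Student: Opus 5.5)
We prove a stronger, pure-state statement by structural induction on $t$ and then take the partial trace. The strengthened claim is as follows. For each term $t$ and each basis assignment $\overline{a}$, evaluating the circuit of $\eta(t)$ on $\ket{\overline{a}}_{\overline{q}}$ (together with the initialised state symbols of $\eta(t)$) yields a product state
\begin{equation*}
\ket{\overline{a}}_{\overline{q}}\otimes\ket{\mathit{eval}_{\sigma_{\overline{a}}}(t)}_p\otimes\ket{\gamma_{\overline{a}}}_{\overline{w}}.
\end{equation*}
Here $\overline{w}$ are the remaining (non-designated, non-principal) variables of $\eta(t)$, and $\ket{\gamma_{\overline{a}}}$ is some unit vector that may depend on $\overline{a}$. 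Importantly, the designated registers $\overline{q}$ are never acted on except as controls of $\mathit{SUM}$, so they stay in $\ket{\overline{a}}$.

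The induction runs over the three clauses of \Cref{def:quantisation-of-terms}.
\begin{itemize}
\item For a constant $c$, the term is $\eta(c)=\ket{\psi_c}_p$. By clause~(2) of \Cref{def:quantisation-of-algebra}, $\ket{\psi_c}^\calA=\ket{c^A}$, which gives the claim with $\overline{w}$ empty.
\item For a variable $x_i$, we have $\mathit{SUM}^\calA\ket{a_i}_{q_{x_i}}\ket{0}_q=\ket{a_i}\ket{a_i\boxplus 0}=\ket{a_i}\ket{a_i}$. This uses $k(0)=0$, so $a\boxplus 0=a$.
\item For $t=O(t_1,\ldots,t_n)$, the circuits $C_1,\ldots,C_n$ act on disjoint fresh registers apart from the shared $\overline{q}$. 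Each $C_j$ uses $\overline{q}$ only as control in the computational basis, so the circuits commute in effect and can be applied sequentially. By the induction hypothesis, after $C_1;\ldots;C_n$ the state is $\ket{\overline{a}}\otimes\bigotimes_j\ket{b_j}_{p_j}\ket{\gamma^j}$ with $b_j=\mathit{eval}_{\sigma_{\overline{a}}}(t_j)$.
\end{itemize}

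In the $O(t_1,\ldots,t_n)$ case we then apply $\mathit{post}_O U_O\mathit{pre}_O$ to $\ket{b_1}\cdots\ket{b_n}\ket{\overline{0}}_{\overline{r}}$. Condition \eqref{eq:quantisation-condition} says that the squared norm of the projection of the output onto $\ket{O^A(b_1,\ldots,b_n)}$ on the principal register equals $1$, and the output is a unit vector. Hence the output equals $\ket{O^A(\overline{b})}_{\text{principal}}\otimes\ket{\xi}_{\text{rest}}$ for some unit vector $\ket{\xi}$. This is the step that needs care. The principal register is $p_i$ in the in-place case and $r_0$ in the out-of-place case. Writing the output as $\sum_c\ket{c}\otimes\ket{\xi_c}$ with $\sum_c\|\xi_c\|^2=1$, the condition $\|\xi_{O^A(\overline{b})}\|^2=1$ forces all other components $\ket{\xi_c}$ to vanish. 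Since $O^A(\overline{b})=\mathit{eval}_{\sigma_{\overline{a}}}(t)$, the remaining registers, namely the other $p_j$, the ancillas and the old $\gamma^j$, are absorbed into $\ket{\gamma_{\overline{a}}}$. This closes the induction.

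For a general $\ket{\varphi}=\sum_{\overline{a}}\beta_{\overline{a}}\ket{\overline{a}}$, linearity gives the output $\sum_{\overline{a}}\beta_{\overline{a}}\ket{\overline{a}}_{\overline{q}}\ket{e_{\overline{a}}}_p\ket{\gamma_{\overline{a}}}_{\overline{w}}$, where $e_{\overline{a}}=\mathit{eval}_{\sigma_{\overline{a}}}(t)$. We trace out $\overline{q}$ and $\overline{w}$, and $\overline{q}$ is traced out because $p$ is the only principal variable. Because $\braket{\overline{a}|\overline{a}'}=\delta_{\overline{a}\overline{a}'}$ on $\overline{q}$, all cross terms vanish, and the result is $\sum_{\overline{a}}|\beta_{\overline{a}}|^2\ket{e_{\overline{a}}}\!\bra{e_{\overline{a}}}_p$, using $\|\gamma_{\overline{a}}\|=1$. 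This dephasing through the untouched copy registers $\overline{q}$ is exactly the role of $\mathit{SUM}$ discussed after \Cref{def:quantisation-of-terms}. The main obstacle is the bookkeeping in the composite case, where we must verify that the garbage states do not disturb the principal register and that $\overline{q}$ stays a pure classical control throughout.
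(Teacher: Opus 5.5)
Your proposal is correct and follows essentially the same route as the paper. Both prove, by induction over the three clauses of \Cref{def:quantisation-of-terms}, that on a basis input the output factors as $\ket{\overline{a}}_{\overline{q}}\ket{\mathit{eval}_{\sigma_{\overline{a}}}(t)}_p\ket{\gamma}$, then use linearity and the orthogonality of the $\ket{\overline{a}}_{\overline{q}}$ under the partial trace. Your explicit argument that \Cref{eq:quantisation-condition} forces the factorisation $\ket{O^A(\overline{b})}\otimes\ket{\xi}$ spells out a step the paper only asserts.
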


\begin{proof}
We first consider the lemma for the computational-basis states; that is, $\ket{\varphi}=\ket{\overline{a}}$ for some $\overline{a}$. This case can be proved by induction on the length of $t$ with the following induction hypothesis:
\begin{itemize}
    \item
    \textbf{IH}:
    Let $\overline{o}$ denote the variables involved in the quantisation $\eta(t)$ but not in $p,\overline{q}$.
    For any classical assignment $\sigma$ of $\overline{x}$, let $a_i=\sigma(x_i)$ and $\overline{a}=a_1,\ldots,a_n$.
    There exists a quantum state $\ket{\gamma}$ such that:
    \begin{equation}
        \label{eq:IH}
        \mathit{eval}_{\ket{\overline{a}}}(\eta(t))=\ket{\overline{a}}_{\overline{q}}\ket{\mathit{eval}_{\sigma}(t)}_{p} \ket{\gamma}_{\overline{o}}.
    \end{equation}
\end{itemize}
There are three cases:
\begin{itemize}
    \item
    If $t=c$, then $\eta(t)=\ket{\psi_c}_{p}$ and $\ket{\psi_c}^{\calA}=\ket{c^A}$. 
    Thus, \Cref{eq:IH} holds.
    \item
    If $t=x$, then $\eta(x)=\mathit{SUM}_{s}[q_x,q]\ket{0}_{q}$.
    Since $k_s(0_s^A)=0$ and $a\boxplus_s 0_s^A=a$, we have
    \begin{equation*}
        \mathit{eval}_{\ket{\overline{a}}}(\eta(x))=\mathit{SUM}_{s}[q_x,q]^{\calA}\ket{a}_{q_x}\ket{0_s^A}_{q}=\ket{a}_{q_x}\ket{a}_q.
    \end{equation*}
    Thus, \Cref{eq:IH} holds.
    \item
    If $t=O(t_1,\ldots,t_m)$, then
    \begin{equation*}
        \eta(t)=C_1;\ldots ; C_m;\mathit{pre}_O[p_1,\ldots,p_m,\overline{r}]; U_O[p_1,\ldots,p_m,\overline{r}];\mathit{post}_O[p_1,\ldots,p_m,\overline{r}]\ket{\psi_1} \ldots \ket{\psi_m}\ket{\overline{0}}_{\overline{r}},
    \end{equation*}
    where each $\eta(t_i)=C_i\ket{\psi_i}$ has principal variable $p_i$.
    Note that in the in-place case, $p=p_i$ for the designated index $i$ in \Cref{def:quantisation-of-signature};
    in the out-of-place case, $p$ is the first variable in $\overline{r}$.
    By the \textbf{IH} for $t_1,\ldots,t_m$, there exists a $\ket{\Gamma}$ such that
    \begin{equation}
        \label{eq:eval_C}
        \mathit{eval}_{\ket{\overline{a}}}(C_1;\ldots;C_m\ket{\psi_1}\ldots\ket{\psi_m})=
        \ket{\overline{a}}_{\overline{q}}\ket{b_1,\ldots,b_m}_{p_1,\ldots,p_m}\ket{\Gamma}_{\overline{o_1},\ldots,\overline{o_m}},
    \end{equation}
    where each $b_i=\mathit{eval}_{\sigma}(t_i)$,
    and $\overline{o_i}$ are additional variables involved in $\eta(t_i)$ but not in $p_i,\overline{q}$.
    By \Cref{eq:quantisation-condition}, we further have
    \begin{equation}
        \label{eq:eval_O}
        \mathit{post}_O^{\calA}U_O^{\calA}\mathit{pre}_{O}^{\calA}\ket{b_1,\ldots,b_m}_{p_1,\ldots,p_m}\ket{\overline{0}}_{\overline{r}}
        =\ket{O^A(b_1,\ldots,b_m)}_{p}\ket{\gamma'}
    \end{equation}
    for some $\ket{\gamma'}$.
    Combining \Cref{eq:eval_C,eq:eval_O}, we obtain
    \begin{equation*}
        \mathit{eval}_{\ket{\overline{a}}}(\eta(t))=
        \ket{\overline{a}}_{\overline{q}}\ket{\mathit{eval}_{\sigma}(t)}_{p}(\ket{\Gamma}\ket{\gamma'})_{\overline{o}},
    \end{equation*}
    and \Cref{eq:IH} holds.
\end{itemize}

Since all $\ket{\overline{a}}_{\overline{q}}$ are orthogonal, by linearity and the tracing out of $\overline{q}$, it is easy to prove the lemma for any quantum state $\ket{\varphi}$.
\end{proof}

\begin{proof}[Proof of \Cref{thm:quantise}] 

Since $\calA$ is already a quantum algebra of $\mathit{QSig}$, it suffices to show that $\calA$ satisfies every equation in $\eta(E)$. Consider any equation in $E$:
\begin{equation*}
    e=(\forall x_1:s_1,\ldots,x_n:s_n)(t_1=t_2).
\end{equation*}
We use $\overline{q}=q_1,\ldots,q_n$ to denote the designated variables $q_{x_1},\ldots,q_{x_n}$ corresponding to $x_1,\ldots,x_n$, as in \Cref{def:quantisation-of-terms}.

Let $p$ be the common principal variable of $\eta(t_1)$ and $\eta(t_2)$, as in \Cref{def:quantisation-of-equations}. Consider an arbitrary quantum state
\[
    \ket{\varphi}_{\overline q} = \sum_{\overline a}\beta_{\overline a} \ket{\overline a}_{\overline q}.
\]

Since $A$ is an algebra of $\mathit{Spec}$ and $e\in E$, we have $A\models e$. Therefore, for every $\overline a\in A_{s_1}\times\cdots\times A_{s_n}$,
\begin{equation}
\label{eq:classical-validity}
    \mathit{eval}_{\sigma_{\overline a}}(t_1) = \mathit{eval}_{\sigma_{\overline a}}(t_2).
\end{equation}

Applying \Cref{lem:quantised-terms} to $t_1$ and $t_2$, and using
\Cref{eq:classical-validity}, we obtain
\begin{equation*}
    \mathit{eval}_{\ket{\varphi}}(\eta(t_1)) \downarrow p
    = \mathit{eval}_{\ket{\varphi}}(\eta(t_2)) \downarrow p,
\end{equation*}
which is essentially $\calA\models\eta(e)$ by the validity of equations (see \Cref{def:validity}).
The conclusion then follows.
\end{proof}

\begin{proof}[Proof of \Cref{prop-array}]

Note that in \Cref{eq:qarray-spec}, all variables are principal variables.
To show that the equation is valid in algebra $\calA$, by linearity, 
it suffices to consider any computational-basis state of variables $q_a,q_i,q_v,q_u$:
\begin{equation*}
    \ket{\varphi}=\ket{a}_{q_a}\ket{j}_{q_i}\ket{v}_{q_v}\ket{u}_{q_u}.
\end{equation*}
Let $a'$ be an array with $a'[i]=a[i]$ for $i\neq j$ and $a'[j]=u$.
By evaluating the LHS of \Cref{eq:qarray-spec}, we have
\begin{align*}
    &\mathit{eval}_{\ket{\varphi}}(\mathit{qwrite}[q_a,q_i,q_u];\mathit{qread}[q_a,q_i,q_v])\\
    &\quad =\mathit{qread}[q_a,q_i,q_v]\parens*{\mathit{qwrite}[q_a,q_i,q_u]\ket{a}_{q_a}\ket{j}_{q_i}\ket{v}_{q_v}\ket{u}_{q_u}}\\
    &\quad =\mathit{qread}[q_a,q_i,q_v]\parens*{\ket{a'}_{q_a}\ket{j}_{q_i}\ket{v}_{q_v}\ket{a[j]}_{q_u}}\\
    &\quad =\ket{a'}_{q_a}\ket{j}_{q_i}\ket{v\boxplus u}_{q_v}\ket{a[j]}_{q_u}.
\end{align*}
By evaluating the RHS of \Cref{eq:qarray-spec}, we have
\begin{align*}
    &\mathit{eval}_{\ket{\varphi}}(\mathit{SUM}[q_u,q_v];\mathit{qwrite}[q_a,q_i,q_u])\\
    &\quad =\mathit{qwrite}[q_a,q_i,q_u]\parens*{\mathit{SUM}[q_u,q_v]\ket{a}_{q_a}\ket{j}_{q_i}\ket{v}_{q_v}\ket{u}_{q_u}}\\
    &\quad =\mathit{qwrite}[q_a,q_i,q_u]\parens*{\ket{a}_{q_a}\ket{j}_{q_i}\ket{v\boxplus u}_{q_v}\ket{u}_{q_u}}\\
    &\quad =\ket{a'}_{q_a}\ket{j}_{q_i}\ket{v\boxplus u}_{q_v}\ket{a[j]}_{q_u}.
\end{align*}
By linearity, we have
\begin{equation*}
    \mathit{eval}_{\ket{\varphi}}(\mathit{qwrite}[q_a,q_i,q_u];\mathit{qread}[q_a,q_i,q_v])
    =\mathit{eval}_{\ket{\varphi}}(\mathit{SUM}[q_u,q_v];\mathit{qwrite}[q_a,q_i,q_u])
\end{equation*}
for any quantum state $\ket{\varphi}$, and the conclusion follows.
    
\end{proof}

\begin{proof}[Proof of \Cref{prop:qec-repetition}]
Note that in \Cref{eq:qec-repetition-spec-1,eq:qec-repetition-spec-2}, $q_a$ is the only principal variable. By \Cref{eq:classical-repetition-correction,eq:syn-repetition} and the definition of $\mathit{qcorr}$, for every $x\in\braces*{0,1}$ and $i\in\braces*{0,1,2}$, we have
\begin{align}
    \label{eq:qcorr-no-error}
    \begin{split}
    &\mathit{qcorr}[q_a,q_s]\,
    \ket{\mathit{enc}^{A}(x)}_{q_a}\ket{\bot}_{q_s}\\
    &\quad = \ket{\mathit{corr}^{A}(\mathit{enc}^{A}(x))}_{q_a} \ket{\mathit{syn}(\mathit{enc}^{A}(x))}_{q_s}
    = \ket{\mathit{enc}^{A}(x)}_{q_a}\ket{\bot}_{q_s},
    \end{split}
    \\
    \label{eq:qcorr-one-error}
    \begin{split}
    &\mathit{qcorr}[q_a,q_s]\, \ket{\mathit{flip}^{A}(\mathit{enc}^{A}(x),i)}_{q_a} \ket{\bot}_{q_s}\\
    &\quad = \ket{\mathit{corr}^{A} (\mathit{flip}^{A}(\mathit{enc}^{A}(x),i))}_{q_a} \ket{\mathit{syn} (\mathit{flip}^{A}(\mathit{enc}^{A}(x),i))}_{q_s} = \ket{\mathit{enc}^{A}(x)}_{q_a}\ket{i}_{q_s}.
    \end{split}
\end{align}

To prove that \Cref{eq:qec-repetition-spec-1} is valid in $\calA$, consider an arbitrary state of the variable $q_l$:
\begin{equation*}
    \ket{\varphi}_{q_l}
    =\sum_{x\in\braces*{0,1}}\alpha_x\ket{x}_{q_l}.
\end{equation*}
By evaluating the LHS and using \Cref{eq:qcorr-no-error}, we have
\begin{equation*}
    \mathit{eval}_{\ket{\varphi}} \parens*{ \mathit{qenc}[q_l,q_a];\mathit{qcorr}[q_a,q_s] \ket{0}_{q_a}\ket{\bot}_{q_s}} = \ket{0}_{q_l} \sum_x\alpha_x\ket{x,x,x}_{q_a}\ket{\bot}_{q_s}.
\end{equation*}
By evaluating the RHS, we have
\begin{equation*}
    \mathit{eval}_{\ket{\varphi}} \parens*{ \mathit{qenc}[q_l,q_a] \ket{0}_{q_a}\ket{\bot}_{q_s}} = \ket{0}_{q_l} \sum_x\alpha_x\ket{x,x,x}_{q_a}\ket{\bot}_{q_s}.
\end{equation*}
Both evaluations give the same state, so \Cref{eq:qec-repetition-spec-1} is valid.

To prove that \Cref{eq:qec-repetition-spec-2} is valid in $\calA$, consider an arbitrary state of variables $q_l,q_i$:
\begin{equation*}
    \ket{\varphi}_{q_lq_i} = \sum_{\substack{x\in\braces*{0,1}\\
    i\in\braces*{0,1,2}}} \alpha_{x,i}\ket{x}_{q_l}\ket{i}_{q_i}.
\end{equation*}
By evaluating the LHS and using \Cref{eq:qcorr-one-error}, we have
\begin{equation*}
    \mathit{eval}_{\ket{\varphi}} \parens*{ \mathit{qenc}[q_l,q_a]; \mathit{qflip}[q_a,q_i]; \mathit{qcorr}[q_a,q_s] \ket{0}_{q_a}\ket{\bot}_{q_s}} = \ket{0}_{q_l} \sum_{x,i}\alpha_{x,i} \ket{x,x,x}_{q_a}\ket{i}_{q_i}\ket{i}_{q_s}.
\end{equation*}
By evaluating the RHS, we have
\begin{equation*}
    \mathit{eval}_{\ket{\varphi}} \parens*{ \mathit{qenc}[q_l,q_a] \ket{0}_{q_a}\ket{\bot}_{q_s}} = \ket{0}_{q_l} \sum_{x,i}\alpha_{x,i} \ket{x,x,x}_{q_a}\ket{i}_{q_i}\ket{\bot}_{q_s}.
\end{equation*}
Since $q_a$ is the principal variable, we trace out $q_l,q_i,q_s$.
Both sides of \Cref{eq:qec-repetition-spec-2} give
\begin{equation*}
    \sum_{i=0}^{2}
    \sum_{x,x'\in\braces*{0,1}}
    \alpha_{x,i}\overline{\alpha_{x',i}}
    \ket{x,x,x}\!\bra{x',x',x'}_{q_a}.
\end{equation*}
Therefore, \Cref{eq:qec-repetition-spec-2} is valid.
The conclusion immediately follows.
\end{proof}

\begin{lemma}
    \label{lem:exist-Phi}
    There exists an orthonormal basis $\braces*{\ket{\Phi_{x,s}}:x\in \braces{0,1},s\in \braces{0,1,2,\bot}^4}$ that satisfies \Cref{eq:def-Phi}.
\end{lemma}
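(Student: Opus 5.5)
The plan is to exhibit a natural orthonormal basis of $\calH_{\texttt{qarray}}^{\calB}$, all of whose members are products of three group states, and then adjust phases so that \Cref{eq:def-Phi} holds literally. A dimension count shows this has a chance of working: $\calH_{\texttt{qarray}}^{\calB}$ has dimension $2^9=512$, and the index set $\braces{0,1}\times\braces{0,1,2,\bot}^4$ also has $2\cdot 4^4=512$ elements.

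\textbf{Step 1: the per-group basis.} For one group of three qubits, let $\ket{g^{\pm}}=(\ket{000}\pm\ket{111})/\sqrt2$. For $u\in\braces{0,1,2,\bot}$, let $X^{(u)}$ be $X$ on position $u$ of the group, with $X^{(\bot)}=\Id$. The eight vectors $X^{(u)}\ket{g^{\pm}}$ form an orthonormal basis of $(\Co^2)^{\otimes3}$. The reason is that the four supports $\braces{000,111},\braces{100,011},\braces{010,101},\braces{001,110}$ are disjoint, and on each support the $+$ and $-$ vectors are orthogonal. Tensoring over the three groups therefore gives an orthonormal basis of all $512$-dimensional $\calH_{\texttt{qarray}}^{\calB}$.

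\textbf{Step 2: the definition of $\ket{\Phi_{x,s}}$.} For $x\in\braces{0,1}$ and $s\in\braces{0,1,2,\bot}^4$, define a sign triple $\epsilon(x,s[3])\in\braces{\pm}^3$. Its component at group $r$ is $(-1)^x$, multiplied by an extra $-1$ exactly when $s[3]=r$. The map $(x,s[3])\mapsto\epsilon$ is a bijection onto $\braces{\pm}^3$. Indeed, $(0,\bot)\mapsto{+}{+}{+}$ and $(1,\bot)\mapsto{-}{-}{-}$, while $s[3]=r$ flips the single sign at group $r$ in either case, which accounts for the remaining six triples. Set
\begin{equation*}
\ket{\Phi_{x,s}}=\lambda_{x,s}\bigotimes_{r=0}^{2}X^{(s[r])}\ket{g^{\epsilon(x,s[3])_r}}.
\end{equation*}
Here $\lambda_{x,s}$ is a unit-modulus phase, chosen in Step 3. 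Since $(x,s)\mapsto(\epsilon,s[0],s[1],s[2])$ is a bijection onto the labels of the product basis from Step 1, the vectors $\ket{\Phi_{x,s}}$ form an orthonormal basis whatever phases are chosen.

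\textbf{Step 3: checking \Cref{eq:def-Phi}.} This is the main obstacle, because several triples $(P,i)$ share the same label $\sigma(P,i)$, and the definition must not be contradictory. For $i=3r+t$:
\begin{itemize}
\item[(a)] $P=\Id$: we get $\ket{\Psi_x}$, whose label is all $\bot$, so we take $\lambda=1$.
\item[(b)] $P=X$: the label has $s[r]=t$ and all other entries $\bot$. Here $X_i\ket{\Psi_x}$ is exactly the product vector from Step 2, so $\lambda=1$.
\item[(c)] $P=Z$: the label has $s[3]=r$ and all other entries $\bot$. Since $Z_{3r+t}$ maps $\ket{g^{\pm}}\mapsto\ket{g^{\mp}}$ for every $t$, the three choices of $t$ give the same vector. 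This coincides with the definition with $\lambda=1$, so the collision of labels is consistent.
\item[(d)] $P=Y$: using $Y=iXZ$, we get $Y_i\ket{\Psi_x}=iX_iZ_i\ket{\Psi_x}$. This is $i$ times the product vector with label $s[r]=t$, $s[3]=r$. These labels are pairwise distinct in $i$ and distinct from those in (a)–(c), so we may set $\lambda_{x,s}=i$ for them.
\end{itemize}
On every remaining label, set $\lambda_{x,s}=1$. The only labels produced more than once are those in case (c), and there the produced vectors agree. Hence \Cref{eq:def-Phi} holds with no conflicts, and the lemma follows.
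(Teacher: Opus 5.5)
Your proof is correct and uses the same ingredients as the paper's: the per-group basis $X^{(u)}\ket{g^{\pm}}$, the fact that the three $Z_{3r+t}$ act identically on $\ket{\Psi_x}$, and the phase $\mathrm{i}$ from $Y=\mathrm{i}XZ$. The one difference is the finish---you write down the whole basis through the label bijection $(x,s)\mapsto(\epsilon,s[0],s[1],s[2])$ (checking $2\cdot 4^4=2^9$), whereas the paper shows the vectors $P_i\ket{\Psi_x}$ are orthonormal across distinct labels and then extends them to a basis; also, the all-$\bot$ label in case (a) is produced nine times too, which is harmless since $\Id_i\ket{\Psi_x}=\ket{\Psi_x}$ for every $i$.
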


\begin{proof}
    We first consider the following orthonormal basis of $\parens*{\calH_{\texttt{pqbit}}^{\calB}}^{\otimes 3}$:
    \begin{equation*}
        \braces*{
        \frac{\ket{000}\pm\ket{111}}{\sqrt{2}},
        \frac{\ket{100}\pm\ket{011}}{\sqrt{2}},
        \frac{\ket{010}\pm\ket{101}}{\sqrt{2}},
        \frac{\ket{001}\pm\ket{110}}{\sqrt{2}}
        }
    \end{equation*}
    Let $i=3r+t$, where $r,t\in\braces*{0,1,2}$. For
    $P\in\braces*{\Id,X,Y,Z}$, applying $P_i$ to the state in \Cref{eq:shor-code-word} gives
    \begin{equation}
        \label{eq:psi_state}
        P_i\ket{\Psi_x} =
        \parens*{\frac{\ket{000}+(-1)^x\ket{111}}{\sqrt{2}}}^{\otimes r} \otimes
        P_t\parens*{\frac{\ket{000}+(-1)^x\ket{111}}{\sqrt{2}}} \otimes
        \parens*{\frac{\ket{000}+(-1)^x\ket{111}}{\sqrt{2}}}^{\otimes(2-r)},
    \end{equation}
    where $P_t$ applies $P$ on the qubit indexed by $t$ within a group of three qubits.
    In particular, for different $P_t$, we have
    \begin{equation}
        \label{eq:three-action-P_t}
        \begin{split}
            X_t\parens*{\frac{\ket{000}+(-1)^x\ket{111}}{\sqrt{2}}}
            &=
            \begin{cases}
                \dfrac{\ket{100}+(-1)^x\ket{011}}{\sqrt{2}}, &t=0,\\
                \dfrac{\ket{010}+(-1)^x\ket{101}}{\sqrt{2}}, &t=1,\\
                \dfrac{\ket{001}+(-1)^x\ket{110}}{\sqrt{2}}, &t=2,
            \end{cases}\\
            Z_t\parens*{\frac{\ket{000}+(-1)^x\ket{111}}{\sqrt{2}}}
            &=\frac{\ket{000}-(-1)^x\ket{111}}{\sqrt{2}},\\
            Y_t\parens*{\frac{\ket{000}+(-1)^x\ket{111}}{\sqrt{2}}}
            &=\mathrm{i}
            \begin{cases}
                \dfrac{\ket{100}-(-1)^x\ket{011}}{\sqrt{2}}, &t=0,\\
                \dfrac{\ket{010}-(-1)^x\ket{101}}{\sqrt{2}}, &t=1,\\
                \dfrac{\ket{001}-(-1)^x\ket{110}}{\sqrt{2}}, &t=2.
            \end{cases}
        \end{split}
    \end{equation}

    From \Cref{eq:sigma}, $\sigma(\Id,i)=(\bot,\bot,\bot,\bot)$ for every $i\in B_{\texttt{idx}}$, and $\Id_i\ket{\Psi_x}=\ket{\Psi_x}$.
    Apart from the case $P=P'=\Id$, $\sigma(P,i)=\sigma(P',i')$ for distinct pairs $(P,i)\neq(P',i')$ only if $P=P'=Z$ and $i,i'\in\braces*{3r,3r+1,3r+2}$ for some $r\in\braces*{0,1,2}$. 
    Moreover, for every $x\in B_{\texttt{lbit}}$ and $r\in\braces*{0,1,2}$, we have
    \begin{equation*}
        Z_{3r}\ket{\Psi_x} =Z_{3r+1}\ket{\Psi_x} =Z_{3r+2}\ket{\Psi_x}.
    \end{equation*}
    Thus, for every $x\in B_{\texttt{lbit}}$, $P,P'\in\braces*{\Id,X,Y,Z}$, and $i,i'\in B_{\texttt{idx}}$,
    \begin{equation*}
        \sigma(P,i)=\sigma(P',i') \Rightarrow P_i\ket{\Psi_x}=P'_{i'}\ket{\Psi_x}.
    \end{equation*}

    By \Cref{eq:sigma,eq:psi_state,eq:three-action-P_t}, for every $x,x'\in B_{\texttt{lbit}}$, $P,P'\in\braces*{\Id,X,Y,Z}$, and $i,i'\in B_{\texttt{idx}}$, it is easy to verify that
    \begin{equation*}
        \bra{\Psi_x}P_i^{\dagger}P'_{i'}\ket{\Psi_{x'}}
        =
        \begin{cases}
            1, &x=x'\text{ and }\sigma(P,i)=\sigma(P',i'),\\
            0, &o.w.
        \end{cases}
    \end{equation*}
    Therefore, \Cref{eq:def-Phi} is well-defined and these $\ket{\Phi_{x,\sigma(P,i)}}$ can be extended to an orthonormal basis of $\calH_{\texttt{qarray}}^{\calB}$.
\end{proof}

\begin{proof}[Proof of \Cref{prop:qec-shor}]
Note that in \Cref{eq:qec-shor-spec-1,eq:qec-shor-spec-2,eq:qec-shor-spec-3,eq:qec-shor-spec-4}, $q_a$ is the only principal variable. 
By \Cref{eq:sigma,eq:def-Phi} and the definition of $\mathit{qcorr}$,
for every $x\in B_{\texttt{lbit}}$, we have
\begin{equation}
    \label{eq:qcorr-shor-no-error}
    \begin{split}
    &\mathit{qcorr}[q_a,q_s]\,
    \ket{\Psi_x}_{q_a}\ket{\bot,\bot,\bot,\bot}_{q_s}\\
    &\quad =\mathit{qcorr}[q_a,q_s]\,
    \ket{\Phi_{x,(\bot,\bot,\bot,\bot)}}_{q_a}
    \ket{\bot,\bot,\bot,\bot}_{q_s}
    =\ket{\Psi_x}_{q_a}\ket{\bot,\bot,\bot,\bot}_{q_s}.
    \end{split}
\end{equation}
Similarly, for every $x\in B_{\texttt{lbit}}$, $P\in\braces*{X,Y,Z}$, and $i\in B_{\texttt{idx}}$,
\begin{equation}
    \label{eq:qcorr-shor-one-error}
    \begin{split}
    &\mathit{qcorr}[q_a,q_s]\,
    P_i\ket{\Psi_x}_{q_a}\ket{\bot,\bot,\bot,\bot}_{q_s}\\
    &\quad =\mathit{qcorr}[q_a,q_s]\,
    \ket{\Phi_{x,\sigma(P,i)}}_{q_a}
    \ket{\bot,\bot,\bot,\bot}_{q_s}
    =\ket{\Psi_x}_{q_a}\ket{\sigma(P,i)}_{q_s}.
    \end{split}
\end{equation}

To prove that \Cref{eq:qec-shor-spec-1} is valid in $\calB$, consider an arbitrary state of the variable $q_l$:
\begin{equation*}
    \ket{\varphi}_{q_l}
    =\sum_{x\in\braces*{0,1}}\alpha_x\ket{x}_{q_l}.
\end{equation*}
By evaluating the LHS and using \Cref{eq:qcorr-shor-no-error}, we have
\begin{equation*}
    \mathit{eval}_{\ket{\varphi}}
    \parens*{\mathit{qenc}[q_l,q_a];\mathit{qcorr}[q_a,q_s]
    \ket{0}_{q_a}\ket{\bot,\bot,\bot,\bot}_{q_s}}
    =
    \ket{0}_{q_l}\sum_x\alpha_x\ket{\Psi_x}_{q_a}
    \ket{\bot,\bot,\bot,\bot}_{q_s}.
\end{equation*}
By evaluating the RHS, we have
\begin{equation*}
    \mathit{eval}_{\ket{\varphi}}
    \parens*{\mathit{qenc}[q_l,q_a]
    \ket{0}_{q_a}\ket{\bot,\bot,\bot,\bot}_{q_s}}
    =
    \ket{0}_{q_l}\sum_x\alpha_x\ket{\Psi_x}_{q_a}
    \ket{\bot,\bot,\bot,\bot}_{q_s}.
\end{equation*}
Both evaluations give the same state, so
\Cref{eq:qec-shor-spec-1} is valid.

To prove that
\Cref{eq:qec-shor-spec-2,eq:qec-shor-spec-3,eq:qec-shor-spec-4} are valid in $\calB$,
consider an arbitrary state of variables $q_l,q_i$:
\begin{equation*}
    \ket{\varphi}_{q_lq_i}
    =
    \sum_{\substack{x\in\braces*{0,1}\\
                    i\in\braces*{0,1,\ldots,8}}}
    \alpha_{x,i}\ket{x}_{q_l}\ket{i}_{q_i}.
\end{equation*}
By evaluating the LHSs and using \Cref{eq:qcorr-shor-one-error}, we have
\begin{align*}
    &\mathit{eval}_{\ket{\varphi}}
    \parens*{\mathit{qenc}[q_l,q_a];\mathit{qflipX}[q_a,q_i];
    \mathit{qcorr}[q_a,q_s]
    \ket{0}_{q_a}\ket{\bot,\bot,\bot,\bot}_{q_s}}\\
    &\quad =\ket{0}_{q_l}\sum_{x,i}\alpha_{x,i}
    \ket{\Psi_x}_{q_a}\ket{i}_{q_i}\ket{\sigma(X,i)}_{q_s},\\
    &\mathit{eval}_{\ket{\varphi}}
    \parens*{\mathit{qenc}[q_l,q_a];\mathit{qflipZ}[q_a,q_i];
    \mathit{qcorr}[q_a,q_s]
    \ket{0}_{q_a}\ket{\bot,\bot,\bot,\bot}_{q_s}}\\
    &\quad =\ket{0}_{q_l}\sum_{x,i}\alpha_{x,i}
    \ket{\Psi_x}_{q_a}\ket{i}_{q_i}\ket{\sigma(Z,i)}_{q_s},\\
    &\mathit{eval}_{\ket{\varphi}}
    \parens*{\mathit{qenc}[q_l,q_a];\mathit{qflipX}[q_a,q_i];
    \mathit{qflipZ}[q_a,q_i];\mathit{qcorr}[q_a,q_s]
    \ket{0}_{q_a}\ket{\bot,\bot,\bot,\bot}_{q_s}}\\
    &\quad =\mathrm{i}\ket{0}_{q_l}\sum_{x,i}\alpha_{x,i}
    \ket{\Psi_x}_{q_a}\ket{i}_{q_i}
    \ket{\sigma(Y,i)}_{q_s}.
\end{align*}
By evaluating the RHS, we have
\begin{equation*}
    \mathit{eval}_{\ket{\varphi}}
    \parens*{\mathit{qenc}[q_l,q_a]
    \ket{0}_{q_a}\ket{\bot,\bot,\bot,\bot}_{q_s}}
    =
    \ket{0}_{q_l}\sum_{x,i}\alpha_{x,i}
    \ket{\Psi_x}_{q_a}\ket{i}_{q_i}
    \ket{\bot,\bot,\bot,\bot}_{q_s}.
\end{equation*}
Since $q_a$ is the principal variable, we trace out $q_l,q_i,q_s$.
Both sides of \Cref{eq:qec-shor-spec-2,eq:qec-shor-spec-3,eq:qec-shor-spec-4} give
\begin{equation*}
    \sum_{i=0}^{8}
    \sum_{x,x'\in\braces*{0,1}}
    \alpha_{x,i}\overline{\alpha_{x',i}}
    \ket{\Psi_x}\!\bra{\Psi_{x'}}_{q_a}.
\end{equation*}
Therefore, \Cref{eq:qec-shor-spec-2,eq:qec-shor-spec-3,eq:qec-shor-spec-4} are valid.
The conclusion immediately follows.
\end{proof}

\begin{proof}[Proof of \Cref{prop:BO-embedding}]
As observed after \Cref{eq-emb}, every $G_{s_q}$ is an isometric embedding.
It suffices to verify the two conditions in \Cref{def-qemb}.
\begin{enumerate}
\item
Let $\ket{\psi}\in\calK$ be a quantum state symbol.
Then, there is a constant symbol $c:\rightarrow s$ such that $\ket{\psi}=\eta(c)$. 
Since $g$ is a homomorphism, $g_s(c^A)=c^{A'}$, which means
\[
G_{s_q}\ket{\psi}^{\calA}
 =G_{s_q}\ket{c^A}
 =\ket{c^{A'}}
 =\ket{\psi}^{\calA'}.
\]
Consequently, \Cref{eq:emb-cond-1} holds.

\item
Let $U\in\calU$ be a quantum gate symbol. It suffices to consider the following cases.

\begin{itemize}

\item
Suppose $U=\mathit{pre}_O$ or $U=\mathit{post}_O$ for some
nonconstant operation symbol $O$. 
This case is trivial because $\mathit{pre}_O^{\calB}=\mathit{post}_O^{\calB}=\Id$ for $\calB\in \braces{\calA,\calA'}$.

\item
Suppose $U=U_O$ for some nonconstant operation symbol $O:s_1,\ldots,s_n\rightarrow s$. Write $s_{iq}=\eta(s_i)$ and $s_q=\eta(s)$. For
$a_i\in A_{s_i}$ and $a\in A_s$, we have
\begin{align*}
&\parens*{ G_{s_{1q}}\otimes\ldots\otimes G_{s_{nq}}\otimes G_{s_q} } U_O^{\calA} \ket{a_1}\cdots\ket{a_n}\ket{a}\\
&\quad= \ket{g_{s_1}(a_1)}\cdots\ket{g_{s_n}(a_n)} \ket{g_s\parens{a\boxplus_s^A O^A(a_1,\ldots,a_n)}}\\
&\quad= \ket{g_{s_1}(a_1)}\cdots\ket{g_{s_n}(a_n)} \ket{g_s(a)\boxplus_s^{A'} O^{A'}(g_{s_1}(a_1),\ldots,g_{s_n}(a_n))}\\
&\quad= U_O^{\calA'} \parens*{ G_{s_{1q}}\otimes\ldots\otimes G_{s_{nq}}\otimes G_{s_q} } \ket{a_1}\cdots\ket{a_n}\ket{a}.
\end{align*}
The second equality follows from the condition about $\boxplus_s^A$ and $\boxplus_s^{A'}$ and the fact that $g$ is a homomorphism.
\end{itemize}
By linearity, \Cref{eq-hom} holds.
\end{enumerate}

Thus, $G$ satisfies \Cref{def-qemb} and is an embedding from
$\calA$ to $\calA'$.
\end{proof}

\begin{proof}[Proof of \Cref{prop:PO-embedding}]
As observed after \Cref{eq-emb}, every $G_{s_q}$ is unitary. It suffices to verify the two conditions in
\Cref{def-qemb}. For each $s\in S$, let $N_s=\abs*{A_s}=\abs*{A_s'}$ and $\omega_s=e^{2\pi i/N_s}$.
\begin{enumerate}
\item
Let $\ket{\psi}\in \calK$ be a quantum state symbol.
Similar to the proof of \Cref{prop:BO-embedding},
$G_{s_q}\ket{\psi}^{\calA}=\ket{\psi}^{\calA'}$ and therefore \Cref{eq:emb-cond-1} holds.

\item
Let $U\in\calU$ be a quantum gate symbol. It suffices to consider the
following cases.
\begin{itemize}

\item
Suppose $U=\mathit{pre}_O$ or $\mathit{post}_O$ for some nonconstant operation symbol $O:s_1,\ldots,s_n\rightarrow s$. For any $a\in A_s$, since $g_s$ is bijective,
\begin{equation}
    \label{eq:PO-QFT}
    \begin{split}
        G_{s_q}\mathit{QFT}_{\abs{A_s}}\ket{a}
        &=\frac{1}{\sqrt{\abs{A_s}}}\sum_{b\in A_s}
        \omega_s^{k_s^A(a)k_s^A(b)}\ket{g_s(b)}\\
        &=\frac{1}{\sqrt{\abs{A_s}}}\sum_{g_s(b)\in A_s'}
        \omega_s^{k_s^{A'}(g_s(a))k_s^{A'}(g_s(b))}\ket{g_s(b)}\\
        &=\mathit{QFT}_{\abs{A_s}} G_{s_q}\ket{a}.
    \end{split}
\end{equation}
Here, the second equality uses $k_s^A=k_s^{A'}\circ g_s$.
Tensoring \Cref{eq:PO-QFT} with $\bigotimes_{i=1}^nG_{s_{iq}}$ followed by linearity leads to \Cref{eq-hom} for $U=\mathit{pre}_O$. Taking the adjoints proves it for $U=\mathit{post}_O$.

\item
Suppose $U=U_O$ for some nonconstant operation symbol $O:s_1,\ldots,s_n\rightarrow s$. 
For $a_i\in A_{s_i}$ and $a\in A_s$, \Cref{eq:po-oracle} gives
\begin{align*}
&(G_{s_{1q}}\otimes \ldots \otimes G_{s_{nq}}\otimes G_{s_q}) U_O^{\calA}\ket{a_1,\ldots,a_n}\ket{a}\\
&=\omega_s^{k_s^A(a)k_s^A(O^A(a_1,\ldots,a_n))} \ket{g_{s_1}(a_1),\ldots,g_{s_n}(a_n)}\ket{g_s(a)}\\
&=\omega_s^{ k_s^{A'}(g_s(a))k_s^{A'}(O^{A'}(g_{s_1}(a_1),\ldots,g_{s_n}(a_n)))} \ket{g_{s_1}(a_1),\ldots,g_{s_n}(a_n)}\ket{g_s(a)}\\
&=U_O^{\calA'}(G_{s_{1q}}\otimes \ldots\otimes G_{s_{nq}}\otimes G_{s_q}) \ket{a_1,\ldots,a_n}\ket{a}.
\end{align*}
The second equality follows from $k_s^{A'}\circ g_s=k_s^A$ and the fact that $g$ is a homomorphism.
By linearity, \Cref{eq-hom} holds.
\end{itemize}
\end{enumerate}

Thus, $G$ satisfies \Cref{def-qemb} and is an isomorphism from $\calA$ to $\calA'$.
\end{proof}

\begin{proof}[Proof of \Cref{prop:BO-product}]
As observed after \Cref{eq:product-isomorphism}, every $J_{s_q}$ is unitary. It suffices to verify the two conditions in \Cref{def-qemb}.
\begin{enumerate}
\item
Let $\ket{\psi}\in\calK$ be a quantum state symbol, so $\ket{\psi}=\eta(c)$ for a constant symbol $c:\rightarrow s$. Since $\ket{\psi}$ has the single sort $s_q$, $T_{s_q}=\Id$. Then
    \begin{equation*}
        J_{s_q}\ket{\psi}^{\calC} =J_{s_q}\ket{(c^A,c^B)} =\ket{c^A}\otimes\ket{c^B} =\ket{\psi}^{\calA\otimes\calB}.
    \end{equation*}
Consequently, \Cref{eq:emb-cond-1} holds.

\item
Let $U\in\calU$ be a quantum gate symbol. It suffices to consider the following cases.
\begin{itemize}

\item
Suppose $U=\mathit{pre}_O$ or $U=\mathit{post}_O$ for some nonconstant operation symbol $O$. This case is trivial because $\mathit{pre}_{O}^{\calD}=\mathit{post}_{O}^{\calD}=\Id$ for $\calD\in \braces{\calA,\calB,\calC}$, and \Cref{u-p} further implies that $\mathit{pre}_{O}^{\calA\otimes\calB}=\mathit{post}_{O}^{\calA\otimes\calB}=\Id$. Therefore, \Cref{eq-hom} holds in both cases.

\item
Suppose $U=U_O$ for some nonconstant operation symbol $O:s_1,\ldots,s_n\rightarrow s$. 
Let us denote $\overline{a}=a_1,\ldots,a_n$, $\overline{b}=b_1,\ldots,b_n$, and $\overline{(a,b)}=(a_1,b_1),\ldots,(a_n,b_n)$.  For $a_i\in A_{s_i}$, $b_i\in B_{s_i}$, $a\in A_s$, and $b\in B_s$, we have 
\begin{align*}
&(J_{s_{1q}}\otimes \ldots\otimes J_{s_{nq}}\otimes J_{s_q})U_O^{\calC} \ket{\overline{(a,b)}}\ket{(a,b)}\\
&=(J_{s_{1q}}\otimes \ldots\otimes J_{s_{nq}}\otimes J_{s_q})\ket{\overline{(a,b)}} \ket{(a,b)\boxplus_s^C O^C(\overline{(a,b)})}\\
&=\bigotimes_{i=1}^n(\ket{a_i}\otimes\ket{b_i}) \otimes\parens*{ \ket{a\boxplus_s^A O^A(\overline{a})} \otimes\ket{b\boxplus_s^B O^B(\overline{b})}}\\
&=T_{s_{1q},\ldots,s_{nq},s_q}^{-1} \parens*{U_O^{\calA}\ket{\overline{a}}\ket{a} \otimes U_O^{\calB}\ket{\overline{b}}\ket{b}}\\
&=U_O^{\calA\otimes\calB}(J_{s_{1q}}\otimes \ldots\otimes J_{s_{nq}}\otimes J_{s_q}) \ket{\overline{(a,b)}}\ket{(a,b)}.
\end{align*}
Here, the first equality follows from \Cref{eq:bo-oracle}; the second follows from \Cref{eq:product-isomorphism,c-prod,eq:product-sum-compatible}; the third follows from \Cref{eq:bo-oracle,U-product}; and the last follows from \Cref{u-p,U-product}.
\end{itemize}
By linearity, \Cref{eq-hom} holds.
\end{enumerate}

Thus, $J$ satisfies \Cref{def-qemb} and is an isomorphism from $\calC$ to $\calA\otimes\calB$.
\end{proof}

\begin{proof}[Proof of \Cref{prop:PO-product}]
As shown in the proof of \Cref{prop:BO-product}, \Cref{eq:emb-cond-1} holds. It remains to verify \Cref{eq-hom} for every $U\in\calU$ through the following cases.
\begin{itemize}

\item
Suppose that $U=\mathit{pre}_O$ or $U=\mathit{post}_O$ for some nonconstant operation symbol $O:s_1,\ldots,s_n\rightarrow s$.
\Cref{eq:product-phase-compatible} implies that for each sort $s\in S$,
\begin{equation}
    \label{eq:qft-equality}
    J_{s_q}\mathit{QFT}_{|C_s|} =
    \parens*{ \mathit{QFT}_{|A_s|} \otimes \mathit{QFT}_{|B_s|}} J_{s_q}.
\end{equation}
Consequently,
\begin{align*}
    &(J_{s_{1q}}\otimes \ldots\otimes J_{s_{nq}}\otimes J_{s_q}) \mathit{pre}_O^{\calC}\\
    &=(J_{s_{1q}}\otimes \ldots\otimes J_{s_{nq}}\otimes J_{s_q}) (\Id_{s_{1q},\ldots,s_{nq}}\otimes \mathit{QFT}_{|C_s|})\\
    &=(J_{s_{1q}}\otimes \ldots\otimes J_{s_{nq}})\otimes \parens*{ (\mathit{QFT}_{|A_s|}\otimes \mathit{QFT}_{|B_s|})J_{s_q}}\\
    &=T_{s_{1q},\ldots,s_{nq},s_q}^{-1} (\mathit{pre}_O^{\calA}\otimes \mathit{pre}_O^{\calB}) T_{s_{1q},\ldots,s_{nq},s_q}(J_{s_{1q}}\otimes \ldots\otimes J_{s_{nq}}\otimes J_{s_q})\\
    &=\mathit{pre}_O^{\calA\otimes\calB} (J_{s_{1q}}\otimes \ldots\otimes J_{s_{nq}}\otimes J_{s_q}),
\end{align*}
where the second equality uses \Cref{eq:qft-equality},
the third follows from \Cref{U-product}, and the final follows from \Cref{u-p}.
Combining the above together, \Cref{eq-hom} holds for $U=\mathit{pre}_O$.
Taking adjoints proves it for $U=\mathit{post}_O$.

\item
Suppose $U=U_O$ for some nonconstant operation symbol $O:s_1,\ldots,s_n\rightarrow s$. Write $\overline{a}=a_1,\ldots,a_n$, $\overline{b}=b_1,\ldots,b_n$, and $\overline{(a,b)}=(a_1,b_1),\ldots,(a_n,b_n)$. For $a_i\in A_{s_i}$, $b_i\in B_{s_i}$, $a\in A_s$, and $b\in B_s$, let $a'=O^A(\overline{a})$ and $b'=O^B(\overline{b})$. 
By \Cref{c-prod}, $O^C(\overline{(a,b)})=(a',b')$.
Then we have
\begin{align*}
    &(J_{s_{1q}}\otimes \ldots\otimes J_{s_{nq}}\otimes J_{s_q})
U_O^{\calC}\ket{\overline{(a,b)}}\ket{(a,b)}\\
&=\exp\parens*{\frac{2\pi i\,k_s^C((a,b))k_s^C((a',b'))}{|C_s|}}
(J_{s_{1q}}\otimes \ldots\otimes J_{s_{nq}}\otimes J_{s_q})\ket{\overline{(a,b)}}\ket{(a,b)}\\
&=\exp\parens*{\frac{2\pi i\,k_s^A(a)k_s^A(a')}{|A_s|}}
\exp\parens*{\frac{2\pi i\,k_s^B(b)k_s^B(b')}{|B_s|}}
(J_{s_{1q}}\otimes \ldots\otimes J_{s_{nq}}\otimes J_{s_q})
\ket{\overline{(a,b)}}\ket{(a,b)}\\
&=U_O^{\calA\otimes\calB}(J_{s_{1q}}\otimes \ldots\otimes J_{s_{nq}}\otimes J_{s_q})\ket{\overline{(a,b)}}\ket{(a,b)}.
\end{align*}
Here, the first equality follows from \Cref{eq:po-oracle}; the second follows from \Cref{eq:product-phase-compatible}; and the last follows from \Cref{eq:po-oracle,u-p}. By linearity, \Cref{eq-hom} holds.
\end{itemize}

Thus, $J$ satisfies \Cref{def-qemb} and is an isomorphism from $\calC$ to $\calA\otimes\calB$.
\end{proof}

\end{document}